\newtheorem{theorem}{Theorem}[section]
\newtheorem{claim}{Claim}
\newtheorem{lemma}{Lemma}[section]
\theoremstyle{definition}
\newtheorem{definition}{Definition}[section]
\newtheorem*{remark*}{Remark}
\newtheorem{invariant}{Invariant}
\newtheorem{observation}[theorem]{Observation}
\newtheorem{question}[theorem]{Question}
\DeclareMathOperator{\polylog}{polylog}
\DeclareMathOperator{\polyloglog}{polyloglog}
\newcommand {\ignore} [1] {}
\title{Improved Dynamic Graph Coloring\footnote{A preliminary version of this paper appeared in the proceedings of ESA'18.}}
\author{Shay Solomon\thanks{IBM Research, TJ Watson Research Center, Yorktown Heights, New York, USA. Supported by the IBM Herman Goldstine Postdoctoral Fellowship.}
\and
Nicole Wein\thanks{Massachusetts Institute of Technology, Cambridge, Massachusetts, USA. Supported by an NSF Graduate Fellowship and NSF Grant CCF-1514339.}}
\date{}
\begin{document}

\maketitle

\begin{abstract}
This paper studies the fundamental problem of graph coloring in fully dynamic graphs.
Since the problem of computing an optimal coloring, or even approximating it to within $n^{1-\epsilon}$ for any $\epsilon > 0$, is NP-hard in \emph{static graphs},
there is no hope to achieve any meaningful \emph{computational} results for \emph{general graphs} in the dynamic setting.
It is therefore only natural to consider the \emph{combinatorial aspects} of dynamic coloring, or alternatively, study restricted families of graphs.

Towards understanding the combinatorial aspects of this problem, one may assume a black-box access to a static algorithm for $C$-coloring any subgraph of the dynamic graph, and investigate the trade-off between the number of colors and the number of \emph{recolorings} per update step. Optimizing the number of recolorings, sometimes referred to as the \emph{recourse} bound, is important for various practical applications.
In WADS'17, Barba et al.\ devised two complementary algorithms: For any $\beta > 0$, the first (respectively, second) maintains an $O(C \beta n^{1/\beta})$ (resp., $O(C \beta)$)-coloring while recoloring $O(\beta)$ (resp., $O(\beta n^{1/\beta})$) vertices per update.
Barba et al.\ also showed that the second trade-off appears to exhibit the right behavior, at least for $\beta = O(1)$:
Any algorithm that maintains a $c$-coloring of an $n$-vertex \emph{dynamic forest} must recolor   $\Omega(n^{\frac{2}{c(c-1)}})$ vertices per update, for any constant
$c \ge 2$. Our contribution is two-fold:
\begin{itemize}
\item We devise a new algorithm for general graphs that improves significantly upon the first trade-off in a wide range of parameters: For any $\beta > 0$, we get a $\hat{O}(\frac{C}{\beta}\log^2 n)$-coloring with $O(\beta)$ recolorings per update,
where the $\hat{O}$ notation supresses $\polyloglog(n)$ factors.
In particular, for $\beta = O(1)$ we get constant recolorings with $\polylog(n)$ colors; not only is this an exponential improvement over the previous bound, but it also unveils a rather surprising phenomenon: The trade-off between the number of colors and recolorings is highly non-symmetric.
\item 
For uniformly sparse graphs, we use low out-degree orientations to strengthen the above result by bounding the update time of the algorithm rather than the number of recolorings. Then, we further improve this result by introducing a new data structure that refines bounded out-degree edge orientations and is of independent interest. From this data structure we get a deterministic algorithm for graphs of arboricity $\alpha$ that maintains an $O(\alpha\log^2 n)$-coloring in amortized $O(1)$ time.
\end{itemize}
 \end{abstract}

\section{Introduction}

\subsection{Background}
Graph coloring is one of the most fundamental and well studied problems in computer science, having found countless applications over the years, ranging from scheduling and computational vision to biology and chemistry; see, e.g.~\cite{leighton1979graph, he2017graph,maharani2013digital,garey1976application,barnier2004graph, demange2009tutorial,bonchev1982chemical,khor2010application,abfalter2005nucleic,honer2013computational}, and the references therein.
A  \emph{proper $C$-coloring} of a graph $G = (V,E)$, for a positive integer $C$, assigns a color in $\{1,\ldots ,C\}$ to every vertex, so that no two adjacent vertices are assigned the same color.
The \emph{chromatic number} of the graph is the smallest integer $C$ for which a proper $C$-coloring exists.
(We shall write ``coloring'' as a shortcut for ``proper coloring'', unless otherwise specified.)

This paper studies the problem of graph coloring in fully dynamic graphs subject to edge updates.
A \emph{dynamic graph} is a graph sequence ${\mathcal G} = (G_0,G_1,\dots,G_M)$ on a fixed vertex set $V$,
where the initial graph  is $G_0 = (V,\emptyset)$ and each graph $G_i = (V,E_i)$ is obtained from the previous graph $G_{i-1}$ in the sequence by either adding or deleting a single edge.
We investigate general graphs as well as \emph{uniformly sparse} graphs.
The ``uniform density'' of the graph is captured by its \emph{arboricity}: a graph $G=(V,E)$ has \emph{arboricity} $\alpha$ if $\alpha=\max_{U\subseteq V}\left\lceil\frac{|E(U)|}{|U|-1}\right\rceil$, where $E(U)=\left\{(u,v)\in E\mid u,v\in U\right\}$.
That is, the arboricity is close to the maximum \emph{density} ${|E(U)|}/{|U|}$ over all induced subgraphs
of $G$.
The class of constant arboricity graphs,
which contains planar graphs, bounded tree-width graphs, and in general all minor-free graphs, as well as some classes of ``real-world'' graphs,
has been subject to extensive research in the dynamic algorithms literature~\cite{BS15,BS16,PS16,solomon2018local,NS13,OSSW18, lin2012arboricity,frigioni2003fully,BF99,KKPS14,HTZ14,kowalik2007adjacency,berglin2017simple}.
A dynamic graph of \emph{arboricity} $\alpha$ is a dynamic graph such that all graphs $G_i$ have arboricity bounded by $\alpha$.

It is NP-hard to approximate the chromatic number of an $n$-vertex graph to within a factor of $n^{1-\epsilon}$ for any constant $\epsilon > 0$,
let alone to compute the corresponding coloring \cite{Zuc07,KP06}.
Consequently, there is no hope to achieve any meaningful \emph{computational} results for \emph{general graphs} in the dynamic setting.
It is perhaps for that reason that the literature on dynamic graph coloring is sparse (see Section \ref{sparse}).
Nevertheless, as discussed next, one may view the area of dynamic graph algorithms as lying within the wider area of \emph{local algorithms},
in which there has been tremendous success in the context of graph coloring.

When dealing with networks of large scale, it is important to devise algorithms that are intrinsically {\em local}.
Roughly speaking, a local algorithm restricts its execution to a small part of the network, yet is still able to solve a global task over the entire network.
There is a long line of work on local algorithms for graph coloring and related problems from various perspectives.
For example, seminal papers on distributed graph coloring \cite{CV86,GPS88,Linial87,ALGP89,Linial92,Luby93} laid the foundation for the area of symmetry breaking problems,
which remains the subject of ongoing intensive research.  
Refer to the book of Barenboim and Elkin \cite{BE13} for a detailed account on this topic.
Additionally, graph coloring is well-studied in the areas of property testing \cite{goldreich1998property,czumaj2001testing} and local computation algorithms \cite{RTVX11,even2014deterministic}.


\subsubsection{Dynamic graph coloring} \label{sparse}
In light of the computational intractability of graph coloring, previous work on dynamic graph coloring is devoted mostly to heuristics and experimental results~\cite{monical2014static,preuveneers2004acodygra,yuan2017effective, hardy2017tackling,hardy2016modifying,ouerfelli2011greedy,sallinen2016graph}.
From the theoretical standpoint, it is natural to consider the \emph{combinatorial aspects} of dynamic coloring or to study restricted families of graphs;
to the best of our knowledge, the only work on this pioneering front is that of Barba et al. from WADS'17~\cite{BCKLRRV17} and Bhattacharya et al. from SODA'18~\cite{BCHN18}. Additionally, Parter, Peleg, and Solomon~\cite{ParterPS16} studied this problem in the dynamic distributed setting, and Barenboim and Maimon~\cite{BM17} studied the related problem of dynamic \emph{edge coloring}. (Our work focuses on amortized time bounds; we henceforth do not distinguish between amortized and worst-case time bounds, unless explicitly specified.)

Barba et al. \cite{BCKLRRV17} studied the combinatorial aspects of dynamic coloring in general graphs. They assumed that 
at all times the graph can be $C$-colored and further assumed black-box access to a static algorithm for $C$-coloring any subgraph of the current graph. They investigated the trade-off between the number of colors and the number of \emph{recolorings} (i.e., the number of vertices that change their color) per update step. The number of recolorings is an example of a \emph{recourse bound}, which counts the number of changes to the maintained graph structure done following a single update step. This measure has been well studied in the areas of dynamic and online algorithms for various fundamental problems,
such as maximal matching, MIS, approximate matching, approximate vertex and set cover, network flow and job scheduling;  see \cite{GKKV95,CDKL09,BLSZ14,BLSZ15,BHR17,GKS14,BGKPSS15,AOSS18,CHK16,GKKP17,SSTT18,bernstein2017simultaneously} and the references therein.
In some applications such as job scheduling and web hosting, a change to the underlying structure may be costly.
A low recourse bound is particularly important when the dynamic algorithm is used as a black-box subroutine inside a larger data structure or algorithm~\cite{BS16,ADKKP16}.

Barba et al. devised two complementary algorithms: for any $\beta > 0$, the first (respectively, second) maintains an $O(C \beta n^{1/\beta})$ (resp., $O(C \beta)$)-coloring while recoloring $O(\beta)$ (resp., $O(\beta n^{1/\beta})$) vertices per update step.
While these trade-offs coincide at $\beta = \log n$,
each providing $O(C \log n)$-coloring with $O(\log n)$ recolorings per update, any slight improvement on one of these parameters triggers a significant blowup to the other.
In particular, the extreme point $\beta = O(1)$ on the first and second trade-off curves yields a polynomial number of colors and recolorings, respectively.
Barba et al.\ \cite{BCKLRRV17} also showed that the second trade-off exhibits the right behavior, at least for $\beta = O(1)$:
Any algorithm that maintains a $c$-coloring of an $n$-vertex \emph{dynamic forest} must recolor $\Omega(n^{\frac{2}{c(c-1)}})$ vertices per update, for any constant
$c \ge 2$. The following question was left open.
\begin{question} \label{q1}
Does the first trade-off of \cite{BCKLRRV17} exhibit the right behavior, and in particular, does a constant number of recolorings require a polynomial number of colors?
\end{question}

Bhattacharya et al.~\cite{BCHN18} studied the problem of dynamically coloring bounded degree graphs.
For graphs of maximum degree $\Delta$ they presented a randomized (respectively deterministic) algorithm for maintaining a $(\Delta+1)$ (resp., $\Delta(1 + o(1))$-coloring
with amortized expected $O(\log \Delta)$ (resp., $\polylog(\Delta)$) update time.
These results provide meaningful bounds only when \emph{all vertices} have bounded degree.
The following question naturally arises. 
\begin{question} \label{q2}
Can we get meaningful results for the more general class of bounded arboricity graphs?
\end{question}

Question~\ref{q2} is especially intriguging because, as shown in \cite{BCKLRRV17}, dynamic forests (which have arboricity 1) appear to provide a hard instance for dynamic graph coloring.

Parter, Peleg, and Solomon~\cite{ParterPS16} studied Question~\ref{q2} in dynamic distributed networks:
They showed that for graphs of arboricity $\alpha$ an $O(\alpha \cdot \log^* n)$-coloring can be maintained with $O(\log^* n)$ update time. 
The update time in this context, however, bounds the number of \emph{communication rounds} per update,
while the number of recolorings done (and number of messages sent) per update is polynomial in $n$, even for forests.

\subsection{Our results}
We use $\hat{O}$ notation throughout to suppress $\polyloglog$ factors of $n$.

\subsubsection{General graphs}\label{sec:genres}

The following theorem summarizes our main result for general graphs.
\begin{theorem}\label{thm:intromain}
For any $n$-vertex dynamic graph that can be $C$-colored at all times, there is a fully dynamic deterministic algorithm for maintaining an $O(\frac{C}{\beta}\log^3 n)$-coloring with $O(\beta)$ (amortized) recolorings per update
step, for any $\beta>0$.
Using randomization (against an oblivious adversary), the number of colors can be reduced by a factor of $\hat{\Theta}(\log n)$
while achieving an expected bound of $O(\beta)$ recolorings.  
\end{theorem}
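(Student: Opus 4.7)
The plan is to improve upon the first trade-off of Barba et al.\ by constructing a hierarchical coloring scheme in which the palette grows only polylogarithmically in $n$ rather than as $n^{1/\beta}$. At a high level, I would maintain a hierarchy of $L=\Theta(\log n)$ nested vertex subsets $V = U_0 \supseteq U_1 \supseteq \cdots \supseteq U_L$ whose sizes shrink geometrically, invoking the static $C$-coloring oracle separately on each induced subgraph $G[U_i]$ with a private palette of roughly $O((C/\beta)\log^2 n)$ colors disjoint from those used at the other levels. Each vertex is assigned a ``home'' level and takes its color from the coloring at that level; since palettes are disjoint, the union is automatically a proper coloring of $G$, and the total number of colors is $L$ times the per-level palette size, namely $O((C/\beta)\log^3 n)$.

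The update rule I would use is a logarithmic-method-style cascade. When an edge update arrives I would insert its endpoints at the finest (bottom) level and mark as dirty any current neighbor that now shares a color with them; each level $U_i$ is associated with a rebuild budget $b_i=\Theta(|U_i|/\beta)$, and after $b_i$ updates have touched $U_i$ the whole of $U_i$ is re-colored from scratch by the static oracle, absorbing any vertices that percolated up from finer levels in the meantime. A token argument in which each update deposits $O(\beta)$ tokens, one for each level, pays for the rebuilds and yields an amortized bound of $O(\beta)$ recolorings per update.

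The main technical obstacle is shrinking the per-level palette down from the naive $C$ colors to only $O((C/\beta)\,\polylog n)$: a direct application of the oracle leaves the total palette at $O(C\log n)$, which is merely the Barba et al.\ sweet spot. To save the extra factor of $\beta$ I would apply the construction \emph{recursively} inside each level, decomposing the $C$-colorable instance at level $i$ into $O(\log n)$ sub-instances a factor of $\beta$ smaller, and re-using the same scheme there; the coloring of a finer sub-instance then needs only $O(C/\beta)$ new colors on top of the coloring it inherits from the coarser sub-instance. Careful bookkeeping is required for edges whose endpoints straddle a decomposition boundary, as these are precisely what force the extra $\polylog(n)$ overhead in the palette count, and the balance between recoloring credit and palette growth must be tracked via a potential function that weights the finer levels more heavily; I expect this interaction between the ``horizontal'' hierarchy over $U_0,\ldots,U_L$ and the ``vertical'' recursion inside each level to be the most delicate part of the argument.

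For the randomized improvement, I would replace the disjoint palettes at different levels with a single shared palette of size $O((C/\beta)\log^2 n)$ together with a random color shift drawn independently at each rebuild. Against an oblivious adversary, a union bound over the $\polylog(n)$ pairs that could potentially collide within a single rebuild window shows that with high probability no cross-level conflict occurs, saving a $\Theta(\log n)$ factor in the palette size; the expected recoloring bound of $O(\beta)$ then follows from the same amortized argument together with a direct re-draw step to absorb the low-probability collision events.
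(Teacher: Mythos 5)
Your plan is structurally reminiscent of the paper's construction in that both use a $\Theta(\log n)$-level hierarchy with per-level palettes, but the object you hierarchically decompose is different: you build nested \emph{vertex} subsets $U_0\supseteq\cdots\supseteq U_L$, whereas the paper builds nested \emph{time intervals} (hierarchically bipartitioning the update sequence, with one interval's endpoint corresponding to one invocation of the static oracle on a set of ``high recent-degree'' vertices). That distinction is not cosmetic: the paper's correctness proof hinges on a temporal argument — that if two endpoints of an edge were last colored by \emph{different} static invocations at the \emph{same} level, then the earlier-colored one must have been re-colored by a strictly lower-level invocation in between, because lower-level intervals nest the higher-level ones. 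Your fixed nested subsets have no analogous argument, and indeed your scheme leaves a real hole: after $U_i$ is rebuilt, fresh edges may be inserted between two vertices whose home level is $i$, and nothing in your description recolors them before the next $U_i$-rebuild. You note such vertices get ``marked dirty,'' but do not say how their conflict is repaired, and greedy repair from the level-$i$ palette alone can fail. The paper solves exactly this problem by running a \emph{second}, dynamic bounded-degree coloring on the residual graph $G'$ of ``recent'' edges, and forming the final color as a \emph{product} $(c_1,c_2)$; the degree of $G'$ is bounded via a two-player balls-and-bins game on the recent degrees, which is where the $O(\ell\log n)$ (deterministic) and $O(\ell\log\log n + \ell\log\ell)$ (randomized) factors and hence the $\log^3 n$ vs.\ $\log^2 n\cdot\polyloglog n$ gap actually come from. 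Your proposal contains no analogue of the residual graph, the product construction, the recent-degree potential, or the balls-and-bins bound.

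The second, independent gap is your ``recursive palette shrinkage.'' You claim that by recursing, a sub-instance ``needs only $O(C/\beta)$ new colors.'' That cannot be right as stated: the sub-instance may still be a graph whose chromatic number is $\Theta(C)$, and no decomposition lets the static oracle color it with fewer than $C$ colors. In the paper the $1/\beta$ factor has a very different origin: it is not a \emph{saving} achieved by recursion but the dependence on the interval length $\ell=\log n/\beta$, which governs both the recoloring rate ($O(\log n/\ell)=O(\beta)$) and, through the balls-and-bins bound on the maximum degree of $G'$, the number of colors the dynamic component needs ($O(\ell\log n)$). Your randomized improvement — a shared palette with a random shift and a union bound — is also unlike the paper's, which randomizes \emph{which vertex} gets fed to the static oracle each interval and invokes a stronger randomized balls-and-bins theorem (with an oblivious adversary) to shave a $\Theta(\log n/\polyloglog n)$ factor off the residual degree. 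In short, the proposal shares the coarse shape (logarithmic levels, disjoint palettes, periodic rebuilds paid for by tokens) but lacks the two load-bearing ideas — the temporal interval hierarchy for cross-rebuild consistency and the product with a dynamic bounded-degree coloring controlled by recent degree — and the substitute you propose for the palette count does not hold up.
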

Theorem \ref{thm:intromain} with $\beta = O(1)$ yields $O(1)$ recolorings with $\polylog(n)$ colors,
thus answering Question \ref{q1} in the negative.
Not only is this result an exponential improvement over the previous bound of \cite{BCKLRRV17}, but it also unveils a rather surprising phenomenon: The trade-off between the number of colors and recolorings is highly non-symmetric.

We also note that the number of recolorings can be de-amortized. 

\paragraph{A running time bound.~}
 Assuming black-box access to two efficient coloring algorithms
we can bound the running time of the algorithm from Theorem~\ref{thm:intromain}.\\
\noindent{\em Black-box static algorithm.~} Let $A_{\mathcal{G},C}$ be a static algorithm that takes as input a graph $G$ from a graph class $\mathcal{G}$ and a subset $S$ of vertices in $G$, and computes the induced graph $G[S]$ and a $C$-coloring of $G[S]$ in time $T(|S|)$.\\
\noindent{\em Black-box dynamic algorithm.~}
Let $A'$ be a fully dynamic algorithm that colors graphs of maximum degree $\Delta$ using $O(\Delta)$ colors. Such algorithms exist: there is a randomized algorithm against an oblivious adversary with $O(1)$ expected amortized update time and a deterministic algorithm with $O(\polylog(\Delta))$ amortized update time~\cite{BCHN18}. Let $T'(\Delta,n)\leq \polylog(\Delta)$ be the running time of an optimal deterministic algorithm for this problem. We state our results in terms of $T'(\Delta,n)$ to emphasize that any improvement over the deterministic algorithm of \cite{BCHN18} would yield an improvement to the running time of our algorithm.


\begin{theorem}\label{thm:introtime}
The randomized algorithm from Theorem~\ref{thm:intromain} has expected amortized update time $O\left(\frac{\beta}{n\log n}\sum_{i=0}^{\log n} 2^iT(n/2^i)\right)$ and the deterministic algorithm from Theorem~\ref{thm:intromain} has the same amortized update time with an additional additive factor of $T'(\frac{\log^2 n}{\beta},n)\leq \polylog(\frac{\log^2 n}{\beta}) $.
\end{theorem}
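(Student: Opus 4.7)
The plan is to trace through the algorithm underlying Theorem~\ref{thm:intromain} and charge its running time to its two primitive operations: invocations of the static black-box $A_{\mathcal{G},C}$ on induced subgraphs, and, in the deterministic variant, calls to the dynamic bounded-degree algorithm $A'$. Recall that the algorithm of Theorem~\ref{thm:intromain} is built on an $O(\log n)$-level hierarchy in which, at level $i$, the vertex set is partitioned into $2^i$ groups of size $n/2^i$, and each group is (re)colored by a call to $A_{\mathcal{G},C}$ on the induced subgraph. Whenever an update disturbs a group, the entire group is rebuilt from scratch, and the $O(\beta)$ recoloring budget is distributed across levels so that, amortized over an epoch of length $\Theta(n\log n/\beta)$ updates, each group at level $i$ is rebuilt $O(1)$ times. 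This scheduling is exactly what is required to match the color and recoloring bounds of Theorem~\ref{thm:intromain}, so I expect it to come out of that proof ``for free''.

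With this schedule in hand, the amortized cost computation is essentially bookkeeping. The total work attributable to level $i$ per epoch is at most (number of groups at level $i$) $\times$ (cost per rebuild) $= O(2^i \cdot T(n/2^i))$. Summing over the $O(\log n)$ levels and dividing by the epoch length $\Theta(n\log n/\beta)$ gives the claimed amortized update time
\[
O\!\left(\frac{\beta}{n\log n}\sum_{i=0}^{\log n} 2^i\, T(n/2^i)\right).
\]
In the randomized setting, the rebuild schedule is randomized against an oblivious adversary (inherited from the randomization in Theorem~\ref{thm:intromain}); but the same charging argument applied to the expected rebuild frequencies yields the stated expected amortized bound.

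For the deterministic algorithm, the $\hat{\Theta}(\log n)$ factor in colors saved by randomization is instead recovered by running $A'$ on an auxiliary bounded-degree graph that captures the conflicts to be resolved within a single level of the hierarchy. The decisive point is that the relevant palette at any one level has size $O(\log^2 n/\beta)$, so the auxiliary graph handed to $A'$ has maximum degree $\Delta = O(\log^2 n/\beta)$. Each update triggers $O(1)$ amortized calls to $A'$ (one per modified group, amortized across the epoch), contributing the additive term $T'(\log^2 n/\beta,\,n)$ to the per-update cost and completing the proof.

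The main obstacle will be the second paragraph's accounting: one must confirm that the rebuild schedule hidden inside Theorem~\ref{thm:intromain} does indeed produce $O(2^i)$ rebuilds at level $i$ per epoch (in expectation in the randomized case), as opposed to, say, $O(2^i \log n)$, since any such slack would inflate the final bound by a $\log n$ factor. A secondary subtlety is arguing that $A'$ is only ever invoked on a graph whose maximum degree is bounded by the \emph{per-level} palette size $O(\log^2 n/\beta)$, rather than by the total number of colors $O(C\log^3 n/\beta)$ used across all levels; establishing this requires using the fact that the color-coordination scheme of Theorem~\ref{thm:intromain} keeps inter-level conflicts local to the palette of a single level.
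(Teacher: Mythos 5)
Your accounting for the static-algorithm work lands on the right formula, but the underlying model is wrong: the hierarchy in Theorem~\ref{thm:intromain} is \emph{not} a spatial partition of the vertex set into $2^i$ groups that are ``rebuilt when disturbed.'' It is a hierarchical partition of the \emph{update sequence} into intervals, and at the end of each level-$i$ interval the static algorithm $A_{\mathcal{G},C}$ is invoked on the set $\{v_{I'},u_{I'}\}$ of highest-recent-degree vertices drawn from the $n/2^i$ subintervals. The input set indeed has cardinality $O(n/2^i)$ and there are $2^i$ such invocations per level-0 interval of length $n\ell=\Theta(n\log n/\beta)$, so the arithmetic $\frac{\beta}{n\log n}\sum_i 2^i T(n/2^i)$ comes out correctly, but your ``confirm $O(2^i)$ rebuilds per epoch'' concern is a non-issue once you use the paper's actual interval structure -- it holds by construction, not by a hidden scheduling argument.

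The more serious gap is your explanation of the additive $T'(\log^2 n/\beta,n)$ term. You claim that the auxiliary graph handed to $A'$ has maximum degree $O(\log^2 n/\beta)$ because ``the relevant palette at any one level has size $O(\log^2 n/\beta)$.'' This is simply false (each level's static palette has size $C$), and more fundamentally, palette size has nothing to do with degree. The bound $\Delta=O(\ell\log n)=O(\log^2 n/\beta)$ on the graph $G'$ fed to $A'$ comes from Lemma~\ref{lem:deg}: $G'$ consists of edges that count toward recent degree, and the balls-and-bins game (Theorem~\ref{thm:bins1}) shows that because the static algorithm always grabs the highest-recent-degree vertex every $\ell$ updates, the maximum recent degree -- hence the maximum degree of $G'$ -- stays $O(\ell\log n)$. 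You also conflate two independent effects of randomization: the $\hat\Theta(\log n)$ color savings comes from the tighter balls-and-bins bound (Theorem~\ref{thm:bins2}) enabled by the random choice of $u_I$, whereas the disappearance of the $T'$ term in the randomized time bound comes from using a randomized $A'$ with $O(1)$ expected update time. Finally, your proposal omits the lower-order bookkeeping in the paper's proof: the cost of locating $v_I$, the cost of maintaining $G'$ itself (bounded by noting it receives at most $2\times$ the updates of $G$), and the low-probability full recoloring when the randomized degree bound fails.
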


{\bf Remark.} 
The randomized black-box dynamic algorithm of ~\cite{BCHN18} that we apply in Theorem~\ref{thm:introtime} is actually a simple observation (referred to as a ``warm-up result'' in~\cite{BCHN18}) which gives a $2\Delta$-coloring with $O(1)$ expected update time. The main result of~\cite{BCHN18}, however, is an algorithm to bound the number of colors by only $\Delta+1$ (or slightly more). That is, our result does not rely on heavy machinery of prior work.


\subsubsection{Uniformly sparse graphs}\label{sec:resarb}

We answer Question~\ref{q2} in the positive by showing that by applying the algorithms from Theorem~\ref{thm:introtime} to arboricity $\alpha$ graphs we can obtain a bound on the update time rather than only the number of recolorings.

\begin{theorem}\label{thm:introarb}
There is a fully dynamic deterministic algorithm for graphs of arboricity $\alpha$ that for any $\beta>0$ maintains an $O((\frac{\alpha}{\beta})^2\log^4 n)$-coloring in amortized $T'(\frac{\alpha\log^3 n}{\beta^2},n)+O(\beta)\leq \polylog(\frac{\alpha\log^3 n}{\beta^2})+O(\beta)$ time per update.
Using randomization (against an oblivious adversary), the number of colors can be reduced by a factor of $\hat{\Theta}(\log n)$ and the expected amortized update time becomes
$O(\beta)$.  
\end{theorem}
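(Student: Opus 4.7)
The plan is to derive Theorem \ref{thm:introarb} by plugging $C=O(\alpha)$ into the general bounds of Theorems \ref{thm:intromain} and \ref{thm:introtime}, and then to convert the recourse guarantee into an update-time guarantee by exploiting a dynamic low out-degree orientation of $G$.

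First I would use the classical observation that a graph of arboricity $\alpha$ decomposes into at most $\alpha$ forests and is therefore $2\alpha$-colorable, so Theorem \ref{thm:intromain} applies with $C=O(\alpha)$. This alone yields an $O((\alpha/\beta)\log^3 n)$-coloring maintained with $O(\beta)$ recolorings per update; it remains to bound the actual running time. To do this I would invoke Theorem \ref{thm:introtime}, which takes as black boxes (i) a static routine $A_{\mathcal{G},C}$ that extracts the induced subgraph $G[S]$ and colors it in time $T(|S|)$, and (ii) a dynamic bounded-degree coloring algorithm $A'$. For (i), I would maintain in the background a dynamic edge orientation of $G$ with out-degree $O(\alpha\log n)$, using either the Brodal--Fagerberg structure or the refined structure introduced earlier in this paper. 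Given $S\subseteq V$, iterating over the out-neighbors of every $v\in S$ simultaneously enumerates $E(G[S])$ and supports a greedy left-to-right coloring pass, producing an $O(\alpha\log n)$-coloring in time $T(|S|)=O(|S|\cdot\alpha\log n)$.

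Substituting $T(s)=O(s\cdot\alpha\log n)$ into the sum in Theorem \ref{thm:introtime} contributes only $O(\alpha\beta\log n)$ amortized time, which is absorbed into the $O(\beta)$ recoloring work once the extra $\log n$ has been paid inside the color count. The price of using the coarser $O(\alpha\log n)$-coloring (instead of a pristine $O(\alpha)$-coloring, which we cannot afford statically) is a single multiplicative $(\alpha/\beta)\log n$ factor on top of the generic bound of Theorem \ref{thm:intromain}, producing the stated $O((\alpha/\beta)^2\log^4 n)$ colors. The additive $T'(\alpha\log^3 n/\beta^2,n)$ term is then the cost of invoking $A'$ on a subproblem whose maximum degree is bounded, via the orientation, by $O(\alpha\log^3 n/\beta^2)$. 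The randomized variant is immediate: swap in the randomized version of Theorem \ref{thm:intromain}, which saves a $\hat\Theta(\log n)$ factor in colors and reduces the expected amortized cost of $A'$ to $O(1)$, so the total expected update time collapses to $O(\beta)$.

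The main obstacle I anticipate is the interaction between the orientation and the recursive color hierarchy of Theorem \ref{thm:intromain}: one must argue that feeding a coarser base coloring into that hierarchy inflates the final count by \emph{exactly} one extra $(\alpha/\beta)\log n$ factor and no more, and that the max-degree of each subproblem passed to $A'$ stays bounded by $O(\alpha\log^3 n/\beta^2)$ rather than blowing up with the arboricity or with the level. A secondary subtlety is that an arboricity-$\alpha$ graph can still contain vertices of degree $\Theta(n)$, so $A'$ cannot be applied to $G$ directly; it is precisely the low out-degree orientation that provides the ``effective degree'' reduction at every level, allowing $A'$ to contribute only $\polylog$ time per update.
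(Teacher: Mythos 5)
Your high-level plan — apply Theorems~\ref{thm:intromain}/\ref{thm:introtime} with $C=O(\alpha)$ and use a dynamic low out-degree orientation to make the static coloring black box efficient — matches the paper's strategy. But two concrete choices you make break the arithmetic, and the bound you end up deriving is off by a factor of $\beta\log n$.

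The first issue is how you use the orientation. You let the orientation \emph{both} enumerate $E(G[S])$ \emph{and} supply the coloring, via a greedy pass along the orientation, which gives $C=O(\alpha\log n)$ colors. The paper deliberately decouples these two tasks: the orientation is used only to list $E(G[S])$ in time proportional to $|S|$ times the out-degree, and then $G[S]$ is recolored from scratch with a purely static $(2\alpha+1)$-coloring (iteratively peel a vertex of degree $\le 2\alpha$, color it last). That static recoloring is not a luxury you ``cannot afford statically'' — it costs only $O(|S|\alpha)$ extra time — and it is exactly what keeps the base palette at $C=O(\alpha)$ rather than $O(\alpha\log n)$. The extra $\log n$ factor in your $C$ is then amplified by the $\log^3 n$ machinery of Theorem~\ref{thm:intromain}, and there is no compensating mechanism; colors do not trade against running time inside Theorem~\ref{thm:introtime}, so your claim that the surplus $O(\alpha\beta\log n)$ time ``is absorbed into the $O(\beta)$ recoloring work once the extra $\log n$ has been paid inside the color count'' does not hold.

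The second issue is that you fix the orientation once and for all to $O(\alpha\log n)$ out-degree with $O(1)$ update time. This removes a degree of freedom that the paper's proof needs. Concretely, with your choices ($C=O(\alpha\log n)$, $T(s)=O(s\,\alpha\log n)$, Kowalik orientation at $O(1)$ per update), Theorem~\ref{thm:introtime} with parameter $\beta''$ gives running time $O(\beta''\alpha\log n)$ and Theorem~\ref{thm:intromain} gives $O(\alpha\log^4 n/\beta'')$ colors. Forcing $O(\beta)$ time means $\beta''=\beta/(\alpha\log n)$, which gives $O(\alpha^2\log^5 n/\beta)$ colors — worse than the claimed $O((\alpha/\beta)^2\log^4 n)$ by a factor of $\beta\log n$. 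The paper instead uses the He--Tang--Zeh trade-off orientation with a second tunable parameter $\beta'$ (out-degree $O(\beta'\alpha)$, update time $O(\log n/\beta')$), sets $T(s)=O(s\,\beta'\alpha)$ with $C=O(\alpha)$, balances the two running-time contributions by taking $\beta'=\sqrt{\log n/(\alpha\beta'')}$ to get time $O(\sqrt{\alpha\beta''\log n})$, and then sets $\beta''=\beta^2/(\alpha\log n)$. It is precisely this joint optimization over $(\beta',\beta'')$, together with keeping $C=O(\alpha)$, that produces the stated $O((\alpha/\beta)^2\log^4 n)$ colors at $O(\beta)$ time. To fix your proof, keep the orientation as a listing tool only, recolor $G[S]$ statically with $2\alpha+1$ colors, and carry the HTZ parameter through the calculation.
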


Furthermore, we improve over this result when $\beta=o(\sqrt{\log n})$ by designing an algorithm that specifically exploits the structure of arboricity $\alpha$ graphs.

\begin{theorem}\label{thm:introarbetter}
There is a fully dynamic deterministic algorithm for graphs of arboricity $\alpha$ that maintains an $O(\alpha\log^2 n)$-coloring in amortized $\hat{O}(\polylog \alpha)$ time. Using randomization (against an oblivious adversary), the expected amortized time becomes $O(1)$.
\end{theorem}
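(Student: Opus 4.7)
The plan is to derive Theorem~\ref{thm:introarbetter} by combining a dynamic bounded-out-degree orientation with a hierarchical refinement and a per-level bounded-degree coloring subroutine. First I would maintain a dynamic edge orientation of the graph with out-degree $O(\alpha)$ (or $O(\alpha\log n)$ in a simpler variant) using one of the known dynamic orientation algorithms. On top of this orientation, I would build the promised refined data structure as an $H$-partition $V = V_1 \cup \cdots \cup V_L$ with $L = O(\log n)$ levels, satisfying the invariant that every $v \in V_i$ has only $O(\alpha)$ neighbors in the upper set $V_i \cup V_{i+1} \cup \cdots \cup V_L$. This hierarchical partition is the new data-structural contribution; the point is to maintain it in such a way that the amortized cost per update depends only on $\alpha$ and $\polyloglog n$, rather than on $n$.

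Given such a partition, the coloring scheme is essentially dictated: assign each level $V_i$ a private palette $P_i$ of $O(\alpha\log n)$ colors, for $O(\alpha\log^2 n)$ colors in total. Each vertex $v \in V_i$ holds a color from $P_i$ that avoids the (at most $O(\alpha)$) colors of its same-or-higher-level neighbors. To maintain these colorings dynamically inside each level, I would plug in the bounded-degree dynamic coloring algorithms of~\cite{BCHN18} with $\Delta = O(\alpha\log n)$: the deterministic $\polylog(\Delta)$-time variant for the first bound, and the randomized $O(1)$-expected-time warm-up variant for the second. Since each $P_i$ has $\Omega(\alpha\log n)$ colors while only $O(\alpha)$ are blocked by higher-level neighbors, there is enough slack to use~\cite{BCHN18} as a black box; in particular the running time per recoloring becomes $\polylog\alpha$ deterministically or $O(1)$ in expectation, independent of $n$.

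The main obstacle, which I would tackle before anything else, is the design and analysis of the refined orientation in Step~2, so that level changes caused by a single edge update amortize to $\polylog\alpha$ cost rather than the $\polylog n$ that one gets from a naive $H$-partition. The plan here is a potential-function argument that charges the cost of cascading level promotions/demotions to ``slack'' already paid by the underlying bounded out-degree orientation maintenance, thereby decoupling the amortized cost from $n$. Once the refined orientation is in place and its update cost is analyzed, correctness of the coloring and the $O(\alpha\log^2 n)$ color bound follow immediately from the $H$-partition invariant, the deterministic time bound follows by summing the orientation cost with the per-level~\cite{BCHN18} cost, and the randomized $O(1)$ expected bound follows by replacing the deterministic subroutine with its randomized warm-up counterpart and observing that with $\Omega(\alpha\log n)$ available colors at each level, a uniformly random sample from $P_{\ell(v)}$ collides with a blocked color only with probability $O(1/\log n)$, yielding $O(1)$ expected recolorings per update.
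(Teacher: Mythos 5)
Your proposal is essentially the paper's approach: a dynamically-maintained layered partition (the paper's ``LDS'') with $O(\log n)$ levels, a disjoint palette of $O(\alpha\log n)$ colors per level, a black-box dynamic bounded-degree coloring algorithm of~\cite{BCHN18} run on each level's induced subgraph, and a potential-function argument comparing the layer-induced edge orientation against a black-box low out-degree orientation (the paper uses Kowalik's $O(\alpha\log n)$-out-degree, $O(1)$-amortized-flips algorithm) to amortize the cost of cascading level movements. One small but substantive correction: the invariant you can actually maintain with $O(1)$ amortized level movements is $O(\alpha\log n)$ upper-level neighbors per vertex, not $O(\alpha)$ --- the analysis needs a $\Theta(\log n)$ multiplicative gap between the ``drop'' threshold $d = \Theta(\alpha)$ (which controls the number of levels) and the ``rise'' threshold $d' = \Theta(\alpha\log n)$ (which controls up-degree), and an $O(\alpha)$ up-degree bound would imply a dynamic $O(\alpha)$-out-degree orientation with $O(1)$ amortized flips, which the approach cannot deliver; fortunately you already set the per-level palette and the \cite{BCHN18} parameter to $O(\alpha\log n)$, so your stated color and time bounds go through unchanged.
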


The proof of Theorem~\ref{thm:introarbetter} relies on a new \emph{layered data structure} (LDS) for bounded arboricity graphs that we expect will be more widely applicable. 

\begin{definition} Given a dynamic graph $G$ of arboricity $\alpha$, a \emph{layered data structure (LDS)} with parameters $k$ and $\Delta$ is a partition of the vertices into $k$ layers $L_1, \dots , L_k$ so that all vertices $v$ have at most $\Delta$ neighbors in layers equal to or higher than the layer containing $v$.
\end{definition}

\begin{theorem}\label{thm:intro-lds}
Let $A''$ be an algorithm for arboricity $\alpha$ graphs that maintains an orientation of the edges with out-degree at most $D$ that performs amortized $F(n)$ flips per update. Then there is an algorithm to maintain an LDS along with the graph induced by each layer, for a fully dynamic graph of arboricity $\alpha$ with $k=O(\log n)$ and $\Delta=O(D+\alpha\log n)$ in amortized deterministic time $O(F(n))$.
\end{theorem}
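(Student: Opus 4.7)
}
The plan is to build the layering by a bottom-up peeling procedure that exploits the arboricity bound, and then to maintain it dynamically by piggybacking on the orientation produced by $A''$. Fix the threshold $\Delta = c(D+\alpha\log n)$ for a sufficiently large constant $c$, and inductively define $L_i$ to be the set of vertices of degree at most $\Delta$ in the subgraph induced on $V\setminus\bigcup_{j<i}L_j$ (so $L_1$ is the set of vertices of degree at most $\Delta$ in $G$). By construction, every $v\in L_i$ has at most $\Delta$ neighbors in $L_{\geq i}$, which is exactly the LDS invariant. To see that $k=O(\log n)$ layers suffice, I would invoke the standard arboricity fact: any induced subgraph of $G$ has average degree less than $2\alpha$, so by Markov's inequality at least half of its vertices have degree less than $4\alpha\leq \Delta$. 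Hence each layer removes at least half of the still-unlayered vertices and the process terminates after at most $\log_2 n$ rounds.

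For the dynamic algorithm, the orientation maintained by $A''$ will play two roles. First, it will serve as a bounded-size adjacency index: each vertex stores its $\le D$ out-neighbors explicitly, while its in-neighbors are reached through the other endpoint. Combined with a per-vertex bucketing of neighbors by layer, this lets me recompute $|N(v)\cap L_{\geq L(v)}|$ for a touched vertex within a time budget proportional to $D$ plus the size of $v$'s current layer-bucket. Second, $A''$ supplies the $F(n)$-per-update budget against which the algorithm amortizes its work. The LDS treats every elementary event---an external insertion/deletion or an orientation flip produced by $A''$---as a local trigger at the two affected endpoints. I enforce a two-sided threshold with slack: promote $v$ from $L_i$ to $L_{i+1}$ when $|N(v)\cap L_{\geq i}|$ exceeds $\Delta$, and demote $v$ from $L_i$ to $L_{i-1}$ when $|N(v)\cap L_{\geq i-1}|$ drops below $\Delta/2$. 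Each layer change may shift the up-degree of $v$'s neighbors in the adjacent layer, so moves may cascade.

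The main obstacle will be bounding the total amortized cost of these cascades by $O(F(n))$. I would use a potential function of the form $\Phi = c_\Phi\sum_v L(v)$, augmented by a term that grows as a vertex's up-degree crosses above $\Delta/2$ from below. The factor-of-two gap between the promotion threshold $\Delta$ and the demotion threshold $\Delta/2$ will force $\Omega(\Delta)$ up-degree changes at $v$ to occur between consecutive layer changes of $v$, and the key lemma I aim to establish is that each such move can be charged to the edge events---either external updates or the $F(n)$ flips produced by $A''$---that drove the up-degree across the threshold, rather than to cascading layer changes of neighbors. This is where the specific magnitude $\alpha\log n$ in $\Delta$ enters: the slack must be large enough that a cascade of depth $O(\log n)$ (one per layer) fits inside the charging budget without double counting. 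The induced subgraphs $G[L_i]$ are then maintained for free by updating each vertex's layer-bucketed neighbor list in $O(1)$ per neighbor whenever a vertex crosses a layer boundary.
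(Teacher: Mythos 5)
Your overall architecture is recognizable as the kind of data structure the paper builds, but the proof strategy you sketch is genuinely different from the paper's and, as stated, it has a gap at the crucial amortization step.

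The paper's algorithm does \emph{not} run $A''$ at all; it maintains its \emph{own} dynamic orientation (edges always point toward the higher layer, and edges inside a layer are controlled by separate out- and in-degree caps). $A''$ enters only in the analysis, through a comparison potential
$\phi = |\{\text{edges oriented oppositely in the algorithm's orientation and in } A''\text{'s}\}|$,
the Brodal--Fagerberg device. A ``heavy'' promotion flips $\Omega(\Delta)$ edges toward $v$, and since $A''$ caps $v$'s out-degree at $D\ll\Delta$, almost all of those flips \emph{remove} disagreements with $A''$, so $\phi$ drops by $\Omega(\Delta)$. Those guaranteed drops pay for every other source of work: light promotions, demotions, and the cascades they trigger. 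By contrast, your $\Phi = c_\Phi\sum_v L(v)$ plus a ``crossing $\Delta/2$'' term is a self-referential potential; it has no external budget that heavy events are forced to deposit into. Concretely, when $v$ is promoted and lands with up-degree still close to $\Delta$, up to $\Delta$ of its neighbors in the destination layer each gain one up-neighbor; your augmented term can go \emph{up} by $\Theta(\Delta)$ even though $v$'s own term only went down by $\Theta(\Delta)$, so there is no net decrease to charge against, and the cascade can propagate without bound. The paper avoids exactly this by promoting $v$ all the way to the lowest layer where its up-degree falls to $d = 4\alpha$, not merely to $\Delta$; the gap it actually uses is between $4\alpha$ and $\Delta/2$, a factor of $\Theta(\log n + D/\alpha)$, not your factor of two. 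That large gap (where the $\alpha\log n$ term in $\Delta$ comes from) is what makes a demotion chain of depth $O(\log n)$ affordable, but even with the larger gap the argument still needs the comparison with $A''$ to rule out cascades of promotions feeding each other's up-degrees.

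Two further discrepancies worth flagging. First, you propose to run $A''$ explicitly (as an adjacency index and as the source of ``flip events'' that trigger the LDS); the theorem as stated in the paper only assumes $A''$ \emph{exists}, and the paper's proof respects that: $A''$ appears only in the analysis, never in the pseudocode. Second, your claim that the $\Delta$-versus-$\Delta/2$ gap forces $\Omega(\Delta)$ up-degree changes between \emph{consecutive} layer changes of $v$ is false for two consecutive promotions: right after a promotion $v$'s up-degree can sit at exactly $\Delta$, and one more increment re-triggers it. Your peeling argument for $k = O(\log n)$ is fine and is essentially the paper's Claim 3.2, and your observation that the per-layer induced subgraphs can be maintained in $O(1)$ per boundary-crossing is also consistent with the paper. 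The missing ingredient is the two-orientation comparison; without it, or some substitute that turns the $D$-bound on $A''$'s out-degree into a real credit stream, the amortization does not close.
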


\subsection{Technical overview}\label{sec:over}
\subsubsection{Low out-degree dynamic edge orientations}\label{sec:orient}

All of our results are, in different ways, intimately related to the dynamic edge orientation problem for arboricity $\alpha$ graphs, where the goal is to dynamically maintain a low out-degree orientation of the edges in a graph (an orientation with out-degree $\alpha$ always exists~\cite{nash1964decomposition}).
Our algorithm for general graphs (outlined in Section~\ref{sec:overgen}) is inspired by an algorithm for the dynamic edge orientation problem. Our algorithm for bounded arboricity graphs from Theorem~\ref{thm:introarb} uses a dynamic edge orientation algorithm as a black-box. Our algorithm for bounded arboricity graphs from Theorem~\ref{thm:introarbetter} uses a dynamic edge orientation to define a potential function useful in the running time analysis (outlined in Section~\ref{sec:ldsover}).


Brodal and
Fagerberg~\cite{BF99} initiated the study of the dynamic edge orientation
problem and gave an algorithm that maintains an $O(\alpha)$ out-degree
orientation in amortized $O(\alpha+\log n)$ time. To analyze this algorithm, they reduced the ``online'' setting, where we have no knowledge of the future,
to the ``offline'' settings, where we know the entire sequence of edge updates in advance. Thus, in the the subsequent results, it sufficed to consider only the offline setting. Kowalik~\cite{kowalik2007adjacency} used an elegant argument to derive a result complementary to~\cite{BF99}: one can maintain an $O(\alpha\log n)$ out-degree orientation in amortized
$O(1)$ time. He, Tang, and Zeh~\cite{HTZ14} completed the picture with a
trade-off bound: for all $\beta\geq 1$, one can maintain an
$O(\beta\alpha)$ out-degree orientation in amortized
$O(\frac{\log n}{\beta})$ time. The worst-case update time of this problem has also been studied
by Kopelowitz et al.~\cite{KKPS14} and
Berglin and Brodal~\cite{berglin2017simple}.

Dynamic bounded out-degree orientations are a key ingredient in a number of dynamic algorithms for graphs of bounded arboricity~\cite{kowalik2006oracles,kowalik2004fast,BS15,BS16,NS13,OSSW18, frigioni2003fully,dvovrak2013dynamic}, as well as in dynamic algorithms for general graphs~\cite{Sol16,BS15,BS16,bodwin2016fully}.

\subsubsection{Overview of algorithm for general graphs}\label{sec:overgen}

We apply two black-box coloring algorithms defined in Section~\ref{sec:genres}, one static and one dynamic. For each vertex $v$, if it is assigned color $c_1$ by the static algorithm and color $c_2$ by the dynamic algorithm, its true color is defined by the pair $(c_1,c_2)$.

Periodically, we run the static algorithm using a carefully chosen subset of vertices as input. To select these subsets, we keep track of the \emph{recent degree} of each vertex $v$: the number of edges incident to $v$ that were inserted since the last time $v$ was included as input to an instance of the static algorithm. Then, we choose the vertices of highest recent degree as input to the static algorithm, thus setting the recent degree of these vertices to zero. By repeatedly setting the recent degree of the highest recent degree vertices to zero, we obtain a bound on the maximum recent degree in the graph. Then we apply the dynamic algorithm for bounded degree graphs on only the edges that contribute to recent degrees.

We can further reduce the maximum recent degree in the graph by employing randomization: In addition to the vertices already chosen to participate in the static algorithm, we randomly select some vertices incident to newly inserted edges.

To obtain an upper bound on the maximum recent degree at all times, we model the changes in recent degree by an online 2-player balls and bins game. The game was first introduced in the late 80s~\cite{LO88,dietz1987two} and has found a number of applications in the dynamic algorithms literature for obtaining worst-case guarantees~\cite{dietz1991persistence,charikar2017fully,ACK17,Thorup05, AT07,mortensen2006fully,Wulff-N17,berglin2017simple,dietz1994constant,kaporis2005isb}. To the best of our knowledge, our techniques are the first to demonstrate improved amortized guarantees using the game.
We anticipate that this game will find additional applications in amortized algorithms as well as in translating offline strategies to online strategies.

The main technical content that remains are the details of each instance of the static algorithm: we have not specified when to run each instance, the precise subset of vertices to input, and which palette of colors to draw from. Understanding these details illuminates the key insight that allows us to improve the number of colors from the polynomial bound in~\cite{BCKLRRV17} to polylogarithmic. We hierarchically bipartition the update sequence into $\log_2 n$ levels of nested time intervals and at the end of each interval, we apply the static algorithm. We use a separate palette of colors for each level of intervals but for all instances of the static algorithm on the same level we use the same palette. Consequently, we need to ensure that vertices colored at the end of different intervals on the same level do not have conflicting colors. To do this, we ensure the structure of the intervals is such that if we color a vertex $v$ at the end of an interval on some level $L$, then before the end of the next interval on level $L$, $v$ has been recolored due to the end of an interval on a different level. 


This partition of the update sequence is inspired by the offline algorithm of~\cite{HTZ14} for the dynamic edge orientation problem. Adapting their ideas to our setting requires overcoming two main hurdles: a) transitioning from graphs of bounded arboricity to general graphs, and b) transitioning from the offline setting to the online setting.


\subsubsection{Overview of algorithm for low arboricity graphs} \label{uniform}
The proof of Theorem~\ref{thm:introarb} is based on the following observation: the black-box static algorithm used in Theorem~\ref{thm:intromain} can be made efficient if $\mathcal{G}$ is the class of arboricity $\alpha$ graphs and we have access to a low out-degree orientation of the graph.


The proof of Theorem~\ref{thm:introarbetter} concerns the LDS (defined in Section~\ref{sec:resarb}). The definition of the LDS is inspired by the following property of arboricity $\alpha$ graphs: there exists an ordering of the vertices $v_1,\dots,v_n$ such that every vertex has at most $2\alpha$ neighbors that appear after it in the ordering~\cite{arikati1997efficient}. Given such an ordering, consider the procedure of iteratively removing the vertices from the graph in order (or adding the vertices to the graph in reverse order) so that when each vertex is removed (or added) its degree with respect to the current graph is only $2\alpha$. This procedure has been a key ingredient in algorithms in a variety of settings including distributed algorithms~\cite{barenboim2010sublogarithmic}, parallel algorithms~\cite{arikati1997efficient}, property testing~\cite{eden2018testing}, and social network analysis~\cite{jain2017fast,seidman1983network,danisch2018listing}. We are the first to devise a data structure that dynamically maintains (an approximate version of) this ordering.



The LDS is useful for maintaining a proper coloring of a graph because the graph induced by each layer of vertices has low degree. Thus, we can apply a dynamic algorithm for graphs of bounded maximum degree on the graph induced by each individual layer. Then, because there are not too many layers in total, we can use a disjoint palette of colors for each layer.

On the other hand, simply using a low out-degree orientation of the edges does not seem to suffice for solving dynamic coloring. In general, one shortfall of a low out-degree orientation is that it is an inherently \emph{local} data structure; each vertex only keeps track of information about its immediate neighborhood. In contrast, the LDS maintains a $\emph{global}$ partition of the vertices into layers. Furthermore, the LDS is designed to store strictly more information than a bounded out-degree edge orientation; by orienting all edges in the LDS from lower to higher layers, we get a bounded out-degree edge orientation. We anticipate that the LDS could be useful for solving more dynamic problems for which a bounded out-degree edge orientation does not appear to suffice.

\section{Algorithm for general graphs}\label{sec:gen}
In this section we prove Theorems~\ref{thm:intromain}, \ref{thm:introtime}, and \ref{thm:introarb}.

\begin{theorem}[Restatement of Theorem~\ref{thm:intromain}]\label{thm:body-main}
There is a fully dynamic deterministic algorithm for maintaining an $O(\frac{C}{\beta}\log^3 n)$-coloring with $O(\beta)$ (amortized) recolorings per update
step, for any $\beta>0$. 
Using randomization (against an oblivious adversary), the number of colors can be reduced to $O(\frac{C}{\beta}\log^2 n(\log\log n+\log \beta))$
while achieving an expected bound of $O(\beta)$ recolorings.  
\end{theorem}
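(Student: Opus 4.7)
The plan is to give each vertex $v$ a \emph{composite color} $(c_1(v),c_2(v))$, where $c_1$ is drawn from the output of periodic invocations of the static $C$-coloring algorithm and $c_2$ from a standing dynamic $O(\Delta)$-coloring algorithm run on a ``residual'' low-degree graph. Two vertices with a common edge must disagree in at least one coordinate, so the total palette has size (number of static colors used) $\times$ (number of dynamic colors used). The heart of the construction is to keep the static algorithm's palette small by re-using colors carefully across invocations, and to keep the dynamic algorithm's palette small by bounding the maximum degree of the residual graph that it sees.

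For the interval structure, I would hierarchically bipartition the update sequence into $\ell=\log_2 n$ levels; a level-$i$ interval contains $n/2^i$ updates and at its endpoint we run one instance of the static algorithm. For each vertex $v$ define its \emph{recent degree} as the number of edge insertions touching $v$ since $v$ was last included in a static invocation. At the end of every level-$i$ interval, the subset $S_i$ handed to the static algorithm is the set of vertices with the largest recent degrees, sized so that after zeroing their recent degrees the remaining maximum recent degree is $O(\frac{C}{\beta}\log n)$ (this sizing is precisely what the interval parameters are designed to make feasible); the static algorithm then $C$-colors the induced subgraph $G[S_i]$ and supplies the $c_1$ coordinate for vertices in $S_i$. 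Crucially, I would allocate a separate palette of $C$ colors to each of the $\log n$ levels and argue that two static invocations on the \emph{same} level cannot produce conflicting $c_1$ values for adjacent vertices: the nesting pattern guarantees that between any two consecutive level-$i$ endpoints every vertex $v \in S_i$ is re-colored at the end of some strictly finer-level interval, so any ``stale'' $c_1$ value from a previous level-$i$ invocation has already been overwritten.

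The residual graph fed to the dynamic bounded-degree algorithm consists only of edges contributing to recent degrees, and with the above the maximum degree of that graph stays $O(\frac{C}{\beta}\log n)$ at all times. Plugging in a dynamic $O(\Delta)$-coloring routine for $c_2$, the composite palette has size $C\log n \cdot O(\frac{C}{\beta}\log n)$; reshuffling the bookkeeping (and absorbing a $\log n$ factor that comes from sharing palettes across levels of intervals rather than wholly disjoint ones) gives the claimed $O(\frac{C}{\beta}\log^3 n)$-coloring. The recoloring count follows by a charging argument: each update incurs $O(1)$ dynamic recolorings plus its share of the static invocations, which by construction of the interval sizes amortizes to $O(\beta)$ per update. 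To upper-bound the maximum recent degree online (which is what the above residual-degree claim really needs), I would cast the dynamics of recent degrees as an instance of the classical 2-player online balls-and-bins game of Dietz--Sleator / Levcopoulos--Overmars, with each edge insertion placing a ``ball'' on its two endpoints and each static invocation emptying a bin; the standard potential argument then yields the required worst-case recent-degree bound amortized against the $O(\beta)$ budget.

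The main obstacle I expect is the last point — proving that the online choice of which $O(\beta)$ bins to empty keeps the maximum load within $O(\frac{C}{\beta}\log n)$, since the sequence of insertions is adaptive and prior work typically uses the balls-and-bins game for worst-case rather than amortized bounds. For the randomized improvement, I would augment the deterministic selection $S_i$ by additionally including, for each newly inserted edge, one of its endpoints chosen uniformly at random. This extra randomization shaves an $\hat{\Theta}(\log n)$ factor off the maximum recent degree (against an oblivious adversary) by balancing the load between the two endpoints of each inserted edge; propagating this into the composite palette yields the $O(\frac{C}{\beta}\log^2 n (\log\log n + \log \beta))$ bound, while the expected recoloring count is unchanged at $O(\beta)$ since each randomly selected vertex is charged to the edge insertion that triggered it.
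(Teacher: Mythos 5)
Your proposal follows the paper's proof in all the essential respects --- composite colors from a static and a dynamic routine, a $\log n$-level hierarchical bipartition of the update sequence, tracking recent degree and zeroing it by static invocations, a separate $C$-color palette per level with the nesting argument to avoid same-level clashes, the Dietz--Sleator online balls-and-bins game to bound the residual degree, and the extra random endpoint sample to shave a $\hat{\Theta}(\log n)$ factor in the randomized variant. However, two concrete slips keep the account from closing numerically. First, your interval lengths $n/2^i$ have no knob depending on $\beta$; the paper scales every interval length by an extra parameter $\ell = \log n/\beta$ (so level-$i$ intervals have length $n\ell/2^i$ and the finest level has length $\ell$). This scaling is what simultaneously makes the amortized number of static recolorings $O(\log n/\ell) = O(\beta)$ and makes the balls-and-bins bound $O(\ell\log n)$. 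With your unparameterized intervals you recover only the single point $\beta = \Theta(\log n)$. Second, the residual-degree bound should be $O(\ell\log n) = O(\frac{\log^2 n}{\beta})$ with no $C$ factor at all (only edge insertions, not colors, feed the balls-and-bins game); your $O(\frac{C}{\beta}\log n)$ is spurious, and it propagates into the final palette count $C\log n \cdot O(\frac{C}{\beta}\log n) = O(\frac{C^2}{\beta}\log^2 n)$, which does not equal the target $O(\frac{C}{\beta}\log^3 n)$ for general $C$ and is not fixed by ``absorbing a $\log n$ factor.'' The correct product is $O(C\log n)\cdot O(\frac{\log^2 n}{\beta}) = O(\frac{C}{\beta}\log^3 n)$. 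A smaller point: you need not ``size $S_i$ to achieve a target residual degree''; in the paper the input to each static invocation is simply the collection of highest-recent-degree representatives $v_{I'}$ over subintervals $I'$, and the residual-degree bound falls out of the potential argument rather than being enforced directly.
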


The algorithm is as follows. Periodically, we run the black-box static algorithm on a subset of vertices, to be specified later. At all times, each vertex $v$ is assigned a color $c_1$ by the black-box static algorithm (from the last time $v$ was input to an instance of the static algorithm) and a color $c_2$ by the black-box dynamic algorithm. The true color of $v$ is defined by the pair $(c_1,c_2)$, so the total number of colors is the product of the number of colors used in each black-box algorithm. 
To specify the subsets of vertices taken as input to the static algorithm, we define a hierarchical partition of the update sequence. First, we describe this partition, then we describe how to apply the static algorithm, and then we describe how to apply the dynamic algorithm.

\subsection{Partition of update sequence}\label{sec:part}

We partition the update sequence (without knowing its contents) into a set of intervals as follows. An interval is said to be of \emph{length} $\ell$ if it contains $\ell$ update steps. We partition the entire update sequence into intervals of length $n\ell$ for some parameter $\ell$ (which we will later set to $\frac{\log n}{\beta}$). We say that this set of intervals is on \emph{level} 0. Next, for each $i = 1,\ldots,\log_2 n$, the level-$i$ intervals are obtained from the $i-1$-level intervals by splitting each $i-1$ interval in two subintervals of equal length. Note that the intervals on level $\log n$ are of length $\ell$ and in general the intervals on level $i$ are of length $n\ell/2^i$.

It will be easier to work with these intervals if no two have the same ending point. So, for every set of intervals with the same endpoint, we remove all intervals except for the one with the lowest numbered level. The resulting set of intervals, shown in Figure~\ref{fig:intervals} is the set of intervals that we work with in the algorithm.\\

\begin{figure}[h]
  \centering
    \includegraphics[width=0.5\textwidth]{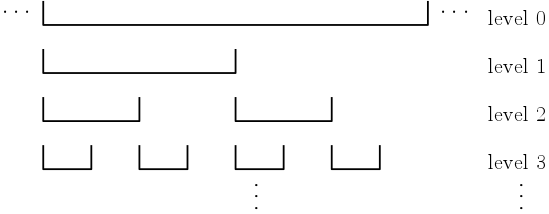}
     \caption{The set of interals.}
     \label{fig:intervals}
\end{figure}

\subsection{Applying the black-box static algorithm}

At the end of each interval, we apply the black-box static algorithm. For each interval $I$, let $\mathcal{A}_I$ be the instance of the black-box algorithm that is executed at the end of interval $I$. If $I$ is an interval on level $i$, we say that $\mathcal{A}_I$ is on level $i$. For each level, we use a separate palette of $C$ colors, and all instances of the algorithm on the same level use the same palette of colors. In particular, if $\mathcal{A}_I$ is on level $i$, it uses the $C$ colors in the range from $i\cdot C+1$ to $(i+1)C$. 

We determine the input $S$ to each $\mathcal{A}_I$ as follows. If $I$ is on level 0, the input $S$ is simply the entire vertex set. Otherwise, we decide the input based on the update sequence. For each vertex $v$, we keep track of its \emph{recent degree}, defined as the number of edges incident to $v$ that were inserted since the last time $v$ was included as input to an instance of the static algorithm. For each interval $I$, we let $v_I$ be the vertex of highest recent degree at the end of interval $I$ (breaking ties arbitrarily). For the deterministic algorithm, the input $S$ to each $\mathcal{A}_I$ is the set of vertices $\{v_{I'}|I' \text{ is a subinterval of } I\}$ (where an interval is considered a subinterval of itself). 

For the randomized algorithm, in addition to $v_I$ we select another vertex $u_I$ at the end of each interval $I$. Specifically, we pick uniformly at random an edge insertion $(y,z)$ from the last $\ell$ updates (if one exists) and then we let $u_I$ be either $y$ or $z$, chosen at random. Then the input $S$ to each $\mathcal{A}_I$ is the set of vertices $\{v_{I'},u_{I'}|I' \text{ is a subinterval of } I\}$. 

We note that each interval on level $\log_2 n$ contains only 1 subinterval (itself), and generally, each interval on level $i$ contains $n/2^i$ subintervals. Thus, each $\mathcal{A}_I$ on level $i$ takes $O(n/2^i)$ vertices as input. 

\subsection{Applying the black-box dynamic algorithm}

We apply the black-box dynamic algorithm on the graph with the full vertex set but only the edges that count towards the recent degree of both of its endpoints. Specifically, if $G$ denotes the input dynamic graph then the dynamic graph $G'$ that we input to the black-box dynamic algorithm is defined as follows. $G'$ is initially the empty graph on the same vertex set as $G$ and whenever there is an edge update to $G$, the same edge is updated in $G'$. Additionally, when a vertex $v$ is included as input to the static algorithm, every edge incident to $v$ is deleted from $G'$. 

To apply the black-box dynamic algorithm, we need to show that $G'$ has bounded maximum degree. To do this, we apply an online 2-player balls and bins game. The game begins with $N$ empty bins. The goal of Player 1 is to maximize the size of the largest bin and the goal of Player 2 is the opposite. At each step, the players each make a move according the following rules.
\begin{itemize}
\item Player 1 distributes at most $k$ new balls to its choice of bins.
\item Player 2 removes all of the balls from the largest bin (breaking ties arbitrarily).
\end{itemize}

\begin{theorem}[\cite{dietz1987two}]\label{thm:bins1}
In the balls and bins game, every bin always contains $O(k\log N)$ balls.
\end{theorem}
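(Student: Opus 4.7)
My plan is to prove the theorem via a potential-function argument. Define $\Phi(t) = \sum_{i=1}^N 2^{b_i(t)/k}$, where $b_i(t)$ denotes the size of bin $i$ at the end of step $t$. Observe that $\Phi(0) = N$ and $\Phi(t) \ge 2^{b_{\max}(t)/k}$ for all $t$, so any invariant of the form $\Phi(t) \le C \cdot N$ translates immediately into the bound $b_{\max}(t) \le k \log_2(CN) = O(k \log N)$ that the theorem asserts. The core task is therefore to establish $\Phi(t) = O(N)$ for all $t$, by induction on $t$.

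For the inductive step I would analyze the change in $\Phi$ within a single step. If Player 1 adds $j_i$ balls to bin $i$ with $\sum_i j_i \le k$, then $\Phi$ increases by $\sum_i 2^{b_i/k}(2^{j_i/k} - 1)$, and Player 2's subsequent emptying of the largest bin (of size $M$ after Player 1's move) decreases $\Phi$ by $2^{M/k} - 1$. A direct case analysis shows that when Player 1 concentrates all $k$ balls on a single bin, the net change is $1 - 2^{b_{\max}/k} \le 0$ whenever $b_{\max} \ge 0$: this works both when Player 1 hits the current maximum (so the new max is $b_{\max}+k$) and when Player 1 hits a smaller bin $b$ with $b + k$ becoming the new maximum, in which case the net simplifies to $1 - 2^{b/k} \le 0$.

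The main obstacle is that when Player 1 spreads $k$ balls across many distinct low-occupancy bins (for instance, placing one ball each into $k$ bins of size $b$ while another bin of size $b+1$ is already present), the net per-step change can be a small additive positive quantity of the form $2^{b/k}(\ln 2 - \text{const})$. Consequently the naive invariant proof with base $c = 2$ does not close. To overcome this, I would replace the base of the exponent by a carefully tuned constant: take $\Phi'(t) = \sum_i c^{b_i(t)/k}$ with $c = 1 + \Theta(1/k)$, chosen so that $k(c^{1/k} - 1)$ is strictly less than the per-step drop $c^{M/k} - 1$ once $b_{\max}$ exceeds a small constant multiple of $k$. In the complementary regime where $b_{\max} = O(k)$, one has $\Phi'(t) \le c^{O(1)} \cdot N = O(N)$ for free, so the invariant $\Phi'(t) = O(N)$ holds uniformly.

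Combining the invariant $\Phi'(t) = O(N)$ with the pointwise lower bound $\Phi'(t) \ge c^{b_{\max}(t)/k}$ then yields $b_{\max}(t) \le k \log_c O(N) = O(k \log N)$, completing the proof. As a sanity check I would verify the clean specialization $k = 1$: there the original potential $\Phi(t) = \sum_i 2^{b_i(t)}$ has net per-step change $\le 0$ under every Player 1 move, so $\Phi(t) \le N$ always and hence $b_{\max}(t) \le \log_2 N$, recovering the classical bound for the single-ball version of this game and confirming that the only delicate part of the argument is the spreading case that arises when $k \ge 2$.
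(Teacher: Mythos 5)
The paper does not prove Theorem~\ref{thm:bins1}; it is quoted directly from \cite{dietz1987two}, so there is no internal proof to compare against. Evaluating your argument on its own merits: the potential-function framework is the right one, but the obstacle you identify with base $2$ is illusory, and the parameter choice $c=1+\Theta(1/k)$ with exponent $b_i(t)/k$ breaks the final $O(k\log N)$ bound.

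Base $2$ in fact works. With $M$ the maximum after Player~1's move, $b_i+j_i\le M$ for every $i$, so $\sum_i 2^{b_i/k}(2^{j_i/k}-1)\le 2^{M/k}\sum_i(1-2^{-j_i/k})\le 2^{M/k}\cdot\tfrac{\ln 2}{k}\sum_i j_i\le 2^{M/k}\ln 2$, hence $\Delta\Phi\le 2^{M/k}(\ln 2-1)+1$. Since $\ln 2<1$ this is nonpositive once $M/k$ exceeds a small absolute constant $c_0\approx 1.7$; and whenever $M/k<c_0$, every bin after the step has fewer than $c_0 k$ balls, so $\Phi<2^{c_0}N$ unconditionally. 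A one-step induction then gives $\Phi\le 2^{c_0}N+1=O(N)$ at all times. Your ``$k$ bins of size $b$'' example has net change $2^{b/k}\bigl(k(2^{1/k}-1)-2^{1/k}\bigr)+1$, and the bracket tends to $\ln 2-1<0$; it is positive only when $2^{b/k}=O(1)$, exactly the harmless regime where $\Phi$ is already $O(N)$ for free.

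The more serious error is in the proposed fix. With $c=1+\Theta(1/k)$ and exponent $b_i(t)/k$, the last line $k\log_c O(N)=O(k\log N)$ is an arithmetic mistake: $\log_c O(N)=\ln O(N)/\ln c=\Theta(k\log N)$, so your bound would actually be $\Theta(k^2\log N)$, a factor $k$ too large. The two consistent choices are a fixed base $c\in(1,e)$ with exponent $b_i/k$ (as above), or equivalently a base $\hat c=1+\Theta(1/k)$ with exponent $b_i$ (no division by $k$); mixing the two conventions loses the claimed bound. Also note that the increase you need to dominate is not $k(c^{1/k}-1)$ but roughly $c^{M/k}\cdot k(c^{1/k}-1)$; the extra factor $c^{M/k}$, which also appears in the drop $c^{M/k}-1$, is what drives the cancellation.
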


A randomized variant of the game will be useful in analyzing our randomized algorithm. In this variant, in addition to emptying the largest bin, Player 2 also chooses a number $i$ from $[k]$ uniformly at random and empties the bin to which Player 1 added its $i^{th}$ ball during its last turn. Player 1 is oblivious to the behavior of Player 2.

\begin{theorem}[\cite{dietz1991persistence}]\label{thm:bins2}
In the randomized variant of the balls and bins game, in a game with $N$ moves every bin always contains $O(k\log\log N+k\log k)$ balls with high probability.\footnote{``High probability'' means that for all $c>0$, there is an $N$ such that the probability is at least $1-N^{-c}$}
\end{theorem}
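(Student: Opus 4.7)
The plan is to fix Player~1's (oblivious) strategy and exploit the mutual independence of Player~2's uniform random indices $i_1,\dots,i_N\in[k]$. Writing $a_{s,b}$ for the number of balls Player~1 places into bin $b$ at step $s$ (with $\sum_b a_{s,b}\le k$), the random emptying at step $s$ hits bin $b$ with probability exactly $a_{s,b}/k$, mutually independent across $s$. I would begin by coupling away the empty-the-largest action: since removing emptyings can only make bins weakly larger, it suffices to prove the claimed high-probability bound in the ``reduced'' process in which only the random emptyings occur.

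In the reduced process, for fixed $b$ and $T$, let $\tau$ be the last step in $[1,T]$ at which a random emptying hits $b$. Then $X_b(T)=\sum_{s=\tau+1}^{T}a_{s,b}$, so $X_b(T)\ge M$ is equivalent to the assertion that the minimal suffix $(t^*,T]$ whose $a$-mass in $b$ sums to $\ge M$ experiences no random emptying of $b$. By independence of the $i_s$'s, the probability of this factorises as $\prod_{s=t^*+1}^{T}(1-a_{s,b}/k)\le e^{-M/k}$. A union bound over $b$ and $T$ in this reduced process already reproduces the deterministic-style $O(k\log N)$ bound of Theorem~\ref{thm:bins1} via a standard Chernoff-type calculation.

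To sharpen the $\log N$ down to $\log\log N+\log k$, I would reintroduce the largest-bin action via a dyadic bootstrap. The structural insight is that the largest bin is erased every step, so for a specific target bin to be kept at size $\ge M$ while avoiding the largest-bin action, Player~1 must continuously supply ``decoy'' bins of matching size; each decoy costs $\Omega(M/k)$ steps of build-up, severely constraining Player~1's ability to shield any particular target. I would inductively bound the maximum bin size by thresholds $M_j=k\cdot 2^{j}$ over phases of doubling length: at each level, the shielding cost forces either Player~1 to concentrate its balls into few bins (making the random emptying catch them with high probability) or the target bin to exceed the level-$j$ threshold only for a short window. Iterating across the $O(\log\log N)$ dyadic scales converts the single-exponential tail into a doubly-exponential one, and the additive $k\log k$ summand appears as the cost of union-bounding over which $\le k$ bins Player~1 chooses to touch inside a given phase.

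The main obstacle will be controlling dependencies across the dyadic scales cleanly: the identities of the heavy sets at successive scales are determined by the past and are not independent of one another or of the random $i_s$'s. I would address this by union-bounding over all $\binom{N}{r_j}$ possible heavy sets of size $r_j\sim\log N/k$ at each scale and absorbing the combinatorial overhead into the doubly-exponential tail at that scale. The coupling to the reduced process at the outset is what makes the resulting computation tractable, since it eliminates any conditioning on the largest-bin-action history and reduces the analysis to a product of independent Bernoullis that can be bounded without further case analysis.
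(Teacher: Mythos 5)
This theorem is not proved in the paper at all; it is cited as a black box from Dietz and Raman~\cite{dietz1991persistence}, so there is no internal argument to compare against. Evaluating the proposal on its own terms, the central step does not hold up. The opening reduction, a monotone coupling to the ``reduced'' process that retains only the random emptyings, is correct as a coupling (deleting an emptying can only weakly increase bin contents, and since Player~1 is oblivious its future moves are unaffected). But the reduced process provably fails the target bound: if Player~1 simply drops one ball per step into a single fixed bin, that bin's content is a geometric waiting time with mean $k$, and its maximum over $N$ steps is $\Theta(k\log N)$ with high probability. So the reduction ``it suffices to prove the claimed bound in the reduced process'' is false as stated; the largest-bin action is essential, not a refinement. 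You implicitly acknowledge this when you later propose to ``reintroduce'' the largest-bin action, but reintroducing it abandons the coupling, which was advertised in the final paragraph as the thing that makes the computation tractable (``eliminates any conditioning on the largest-bin-action history and reduces the analysis to a product of independent Bernoullis''). These two design decisions are in direct tension, and the proposal never resolves them.

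The dyadic bootstrap is also too vague to check. The ``shielding/decoy'' heuristic is plausible but is never turned into an inequality, and the stated union-bound parameters do not match the combinatorics: you restrict to $\le k$ touched bins per phase and heavy sets of size $r_j\sim\log N/k$, whereas Player~1 can touch up to $k$ bins \emph{per step} (hence up to $kL$ distinct bins in a phase of length $L$), and with heavy sets of size $\sim\log N/k$ the union bound costs roughly $\exp(\Theta((\log^2 N)/k))$, which the event probability $e^{-M/k}$ cannot absorb unless $M=\Omega(\log^2 N)$ — far above the target $O(k\log\log N + k\log k)$. In short: the warm-up $O(k\log N)$ bound in the reduced process is fine and standard, but the passage from there to the stated theorem has a genuine gap, both in the coupling logic and in the quantitative bookkeeping of the bootstrap.
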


Recall that $\ell$ is a parameter introduced in Section~\ref{sec:part}.

\begin{lemma} \label{lem:deg}
In the deterministic algorithm the maximum degree of $G'$ is always $O(\ell\log n)$. In the randomized algorithm the maximum degree of $G'$ is always $O(\ell\log\log n+\ell\log \ell)$.
\end{lemma}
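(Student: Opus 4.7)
The plan is to encode the evolution of the recent degrees as the balls-and-bins games of Theorems~\ref{thm:bins1} and~\ref{thm:bins2}, and then observe that the degree of any vertex in $G'$ is bounded by its recent degree. Formally, I would identify the $n$ vertices with $N=n$ bins: a ball lands in $v$'s bin whenever an edge incident to $v$ is inserted in $G$ (deletions add nothing), and $v$'s bin is emptied whenever $v$ is included in the input to a static-algorithm instance. Under this identification, the number of balls in $v$'s bin equals $v$'s recent degree; since any edge of $G'$ incident to $v$ must have been inserted in $G$ since $v$'s last reset and must still be present, $v$'s degree in $G'$ is at most $v$'s recent degree, and it suffices to bound bin sizes.

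Next I would group the update stream into consecutive rounds of $\ell$ updates. Each round endpoint is a multiple of $\ell$, and by the interval construction together with the duplicate-removal rule, exactly one interval $I$ terminates at every such multiple (on the lowest-numbered level whose intervals end there). Per round, Player 1 adds at most $k := 2\ell$ balls, while at the end of the round Player 2 empties the bin of $v_I$, which by definition is the bin with the largest current count. Any additional resets performed for $v_{I'}$ with $I'$ a proper subinterval of $I$ are extra emptyings that can only help Player 2 and so preserve the game's invariant. Applying Theorem~\ref{thm:bins1} with $k=2\ell$ and $N=n$ yields the deterministic bound $O(\ell \log n)$ on every bin, hence on the maximum degree of $G'$.

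For the randomized algorithm I would verify that the additional choice $u_I$ exactly implements the extra move of the randomized variant: $u_I$ is obtained by sampling a uniformly random edge insertion among the last $\ell$ updates and then one of its two endpoints uniformly at random, which is precisely a uniformly random ball among those that Player 1 added during the last round (with the natural convention when the round contains fewer than $\ell$ insertions, which only restricts Player 1 further relative to the $k=2\ell$ upper bound). Since the analysis is against an oblivious adversary, Player 1 is oblivious to $u_I$, matching the hypothesis of Theorem~\ref{thm:bins2}. Invoking that theorem with $k=2\ell$ and $N$ equal to the number of rounds — polynomial in $n$ under the standard convention — gives $\log\log N = O(\log\log n)$ and $\log k = O(\log \ell)$, yielding the high-probability bound $O(\ell \log\log n + \ell \log \ell)$ on every bin.

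The main subtlety to get right is the alignment between the hierarchical interval structure and the single round-based game. Thanks to duplicate removal, every multiple of $\ell$ is the endpoint of a unique interval on some level, and whatever its level it always contributes both $v_I$ (the currently-largest bin) and $u_I$ (a fresh random ball from the last round), so one uniform analysis captures all levels simultaneously without having to separately track which level triggered each reset. The one place that requires care is confirming that the potentially numerous extra resets from proper subintervals and the "missing" balls from deletion-heavy rounds both deviate from the game in Player~2's favor, so the theorems' bounds can be applied without modification.
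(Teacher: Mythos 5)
Your overall strategy matches the paper's — model the recent-degree process as the two-player balls-and-bins game and apply Theorems~\ref{thm:bins1} and~\ref{thm:bins2} — and the round length, $k=2\ell$, the choice of $N$, and the handling of $u_I$ are all fine. The gap is in the coupling step: you assert that $v_I$ ``by definition is the bin with the largest current count,'' but this is not so under the paper's notion of recent degree, which \emph{decrements} when an incident edge is deleted, while your game never removes balls on deletions. A vertex's ball count therefore equals the number of insertions incident to it since its last reset, which can strictly exceed its recent degree, so the vertex $v_I$ of maximum recent degree need not own the largest bin. Player~2 in your encoding is then emptying a possibly-smaller bin, and Theorem~\ref{thm:bins1} does not directly apply.

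The paper bridges this mismatch (together with the extra resets from subintervals and the unmodeled deletions) via a domination argument: it compares the idealized game, in which Player~2 always empties the truly largest bin, to the actual recent-degree process, and proves pointwise domination \emph{under a permutation} $\sigma$ of bin indices, updated by a transposition whenever the two processes empty different bins (Lemma~\ref{lem:app} in the appendix). Your ``can only help Player~2'' intuition is heading in the right direction, but it must be formalized exactly this way, since fixed-index pointwise domination fails the moment the game's emptied bin and the algorithm's reset vertex diverge. (Alternatively, if one redefines recent degree not to decrement on deletion, your identification becomes exact and the argument goes through directly, but that changes which vertex the algorithm selects as $v_I$, i.e., it is a different algorithm from the one stated.)
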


\begin{proof}
We will argue that in the balls and bins game with $N=n$ and $k=2\ell$, the number of balls in the largest bin is an upper bound for the maximum degree of $G'$. Then, applying Theorems~\ref{thm:bins1} and~\ref{thm:bins2} completes the proof.

We first note that by construction, the degree of each vertex $v$ in $G'$ is at most the recent degree of $v$ so it suffices to bound recent degree. (In particular, the recent degree of $v$ could be larger because it counts edges to vertices that have recently been included as input to the static algorithm.)

The only way for the recent degree of a vertex $v$ to increase is due to the insertion of an edge incident to $v$. On the other hand, the recent degree of a vertex $v$ decreases when a) an edge incident to $v$ is deleted causing its recent degree to decrement, and b) $v$ is included as input to the static algorithm causing its recent degree to be set to 0. 

We consider the special case of the balls and bins game where for each edge insertion $(u,v)$, Player 1 places one ball in the bin corresponding to $u$ and one ball in the bin corresponding to $v$. Then, when each interval ends (which happens once every $\ell$ updates), Player 2 moves. Recall that at this point the recent degree of $v_I$ is set to 0 (and in the randomized algorithm, so is that of $u_I$). It is clear from this description that the deterministic and randomized balls and bins games parallel all of the increases and some of the decreases in recent degree in our deterministic and randomized algorithms, respectively. From here, it is easy to verify that the number of balls in the largest bin is an upper bound for the maximum recent degree in both the deterministic and randomized settings. For the sake of completeness, we prove this fact formally in the appendix. 
\end{proof}

\subsection{Correctness}\label{sec:correct}

We will show that our algorithm produces a proper coloring after every update. Recall that the color of each vertex $v$ is defined by the pair of colors $(c_1,c_2)$ where $c_1$ is the color assigned to $v$ by the black-box static algorithm and $c_2$ is the color assigned to $v$ by the black-box dynamic algorithm. 

Consider an edge $(u,v)$ in the graph at a fixed point in time. We will show that our algorithm assigns different colors to $u$ and $v$. If $(u,v)$ is included in the input to the black-box dynamic algorithm (i.e. if $(u,v)$ is in $G'$), then its two endpoints are assigned different colors by this algorithm, and are thus assigned different colors by the overall algorithm. 

Otherwise, by the definition of the input to the black-box dynamic algorithm, after the edge $(u,v)$ was last inserted at least one of $u$ or $v$ was included as input to the static algorithm. We claim that $u$ and $v$ are assigned different colors by the static algorithm. If $u$ and $v$ were last colored by the same instance $\mathcal{A}_I$ of the static algorithm, then $\mathcal{A}_I$ was executed after the edge $(u,v)$ was inserted (by assumption). Thus, the edge $(u,v)$ was included as input to $\mathcal{A}_I$, causing $u$ and $v$ to be assigned different colors. If $u$ and $v$ were last colored by instances of the static algorithm on different levels, then they are assigned different colors since each level uses a separate palette of colors. 

The only remaining case is that $u$ and $v$ were last included as input to the static algorithm by two different instances of the static algorithm on the same level $i$. We will show that this is impossible. This case is the crux of the correctness argument and the reason that we define the intervals in precisely the way that we do. It cannot be the case that $i=0$ since every vertex is recolored at the end of every interval on level 0. Suppose by way of contradiction that $u$ was most recently colored by $\mathcal{A}_{I}$ (the instance of the static algorithm at the end of interval $I$) and $v$ was most recently colored by $\mathcal{A}_{I'}$ where interval $I$ comes before interval $I'$ and both are on level $i$. We will show that between the end of interval $I$ and the end of interval $I'$, $u$ is recolored by an instance of the static algorithm on a level $j<i$ (a contradiction). By the construction of the intervals (see Figure~\ref{fig:intervals}), between the ending points of $I$ and $I'$ is the end of an interval $I''$ on a level $j<i$ that contains interval $I$ as a subinterval. By the definition of the algorithm, every vertex that is included as input to $\mathcal{A}_I$ is also included as input to $\mathcal{A}_{I''}$. Thus, $u$ is recolored on level $j$ before $\mathcal{A}_{I'}$ was executed, a contradiction.

\subsection{Analysis}

\subsubsection{Static algorithm}
\paragraph{Number of colors} The static algorithm uses $C$ colors per level and there are $O(\log n)$ levels, for a total of $O(C\log n)$ colors.

\paragraph{Number of recolorings}  In the deterministic algorithm, each interval $I$ has an associated vertex $v_I$ and in the randomized algorithm, each interval has two associated vertices $v_I$ and $u_I$. Each such vertex is included as input to the static algorithm for all superintervals of $I$. Since there are $O(\log n)$ levels and each level consists of a set of disjoint intervals, each interval has at most $O(\log n)$ superintervals. Thus, for each interval $I$, $v_I$ and $u_I$ are included as input to $O(\log n)$ instances of the static algorithm. Every interval ends after a multiple of $\ell$ updates so the number of recolorings is amortized $O(\frac{\log n}{\ell})$.

\subsubsection{Dynamic algorithm}
\paragraph{Number of colors} Given a dynamic graph of maximum degree $\Delta$, the black-box dynamic algorithm maintains an $O(\Delta)$-coloring. By Lemma~\ref{lem:deg}, $G'$ (the graph input to the black-box dynamic algorithm) has maximum degree $O(\ell\log n)$ in the deterministic setting and $O(\ell\log\log n+\ell\log \ell)$ in the randomized setting. The randomized bound is with high probability and in the low probability event that the maximum degree exceeds the bound, we will immediately end all intervals, thereby recoloring the entire graph. Thus, the runtime bound is probabilistic but the bound on the number of colors is not.\\
\noindent{\emph{Number of recolorings.}~} Using the following simple greedy algorithm as our black-box dynamic algorithm, we get a single recoloring per update. When an edge is added between two vertices of the same color, simply scan the neighborhood of one of them and recolor it with a non-conflicting color. If the maximum degree of the graph is $\Delta$, this algorithm produces a $\Delta+1$ coloring.

\subsubsection{Combining the static and dynamic algorithms}
\paragraph{Number of colors} Recall that if a vertex $v$ is assigned color $c_1$ by the black-box static algorithm and color $c_2$ by the black-box dynamic algorithm, then our algorithm assigns $v$ the color $(c_1,c_2)$. So the number of colors is the product of the number of colors used in each black-box algorithm, which is $O(C\ell\log^2 n)$ for the deterministic algorithm and $O(C\ell\log n(\log\log n\log \ell))$ for the randomized algorithm.\\
\noindent{\emph{Number of recolorings.}~} The total number of recolorings is the sum of the number of recolorings performed in each of the black-box algorithms, which is $O(\frac{\log n}{\ell})$.

Setting $\ell=\frac{\log n}{\beta}$ completes the proof.

\subsection{Time bound}\label{app:runtime}
\begin{theorem}[Restatement of Theorem~\ref{thm:introtime}]\label{thm:bodytime}
The randomized algorithm from Theorem~\ref{thm:intromain} has expected amortized update time $O\left(\frac{\beta}{n\log n}\sum_{i=0}^{\log n} 2^iT(n/2^i)\right)$ and the deterministic algorithm from Theorem~\ref{thm:intromain} has the same amortized update time with an additional additive factor of $T'(\frac{\log^2 n}{\beta},n)$.
\end{theorem}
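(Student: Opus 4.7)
The plan is to split the amortized update time into two pieces, coming respectively from executions of the black-box static algorithm $A_{\mathcal{G},C}$ and executions of the black-box dynamic algorithm $A'$ on $G'$, analyze each, and sum them.

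For the static piece, I would count instances level by level. The partition of Section~\ref{sec:part} guarantees that on level $i$ the (surviving) interval endpoints are spaced at least $n\ell/2^i$ updates apart, so an update sequence of length $M$ yields at most $O(M \cdot 2^i/(n\ell))$ level-$i$ instances (duplicate removal can only reduce this). Each level-$i$ instance $\mathcal{A}_I$ receives one vertex $v_{I'}$ (or at most two, $v_{I'}$ and $u_{I'}$, in the randomized version) per subinterval $I'$ of $I$, and there are $O(n/2^i)$ such subintervals, so $|S|=O(n/2^i)$ and the instance runs in time $T(n/2^i)$ by the definition of $A_{\mathcal{G},C}$. Summing over $i=0,\dots,\log n$ and dividing by $M$ gives amortized static cost $O\!\left(\frac{1}{n\ell}\sum_{i=0}^{\log n} 2^i T(n/2^i)\right)$, which becomes the bound in the theorem after setting $\ell=\log n/\beta$.

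For the dynamic piece, the key is a global amortization of $G'$-updates. When a vertex is selected for the static algorithm, all of its incident edges in $G'$ are removed, and in one step this can be as many as the maximum $G'$-degree $\Delta=O(\log^2 n/\beta)$ (Lemma~\ref{lem:deg}); but each such deletion must have been preceded by its own insertion, and insertions into $G'$ are triggered only by edge insertions into $G$. Hence the dynamic black-box sees at most $2M$ updates overall, so amortized per $G$-update there are $O(1)$ $G'$-updates. In the deterministic setting each costs $T'(\log^2 n/\beta, n)$, contributing the additive $T'(\log^2 n/\beta, n)$ in the theorem. In the randomized setting we use the $O(1)$-expected-time warm-up algorithm of~\cite{BCHN18}, so the dynamic cost is $O(1)$ expected per $G$-update and is absorbed into the static term.

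The main subtlety I would need to handle carefully is precisely this $G'$-deletion amortization: a per-step worst-case accounting would multiply the $O(\beta)$ amortized selections per step by the degree $\Delta$ of the selected vertex, producing a spurious $\Theta(\log^2 n)$ overhead on the dynamic cost. The correct view is the aggregate one: total deletions from $G'$ are bounded by total insertions into $G'$, which are in turn bounded by the total number of edge insertions into $G$, so the total amount of work fed to $A'$ is $O(M)$ across the whole sequence. Combined with the static bound above, this yields the theorem.
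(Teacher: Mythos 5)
Your argument is correct and follows essentially the same accounting as the paper: the static cost is bounded by counting level-$i$ instances (each of size $O(n/2^i)$ and cost $T(n/2^i)$) against their frequency, and the dynamic cost is bounded globally by noting that $G'$ receives at most twice as many updates as $G$. The only detail the paper additionally spells out is the low-probability failure event in the randomized setting (when the $G'$-degree bound of Lemma~\ref{lem:deg} is violated), which triggers a full recoloring at cost $T(n)$ and contributes an $O(T(n)/n^2)$ additive term that is subsumed by the main bound.
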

\begin{proof}

~\subsubsection{Static algorithm}
We defined $v_I$ as the vertex of highest recent degree at the end of each interval $I$. Although there are data structures to find $v_I$ in constant time, it suffices for the analysis of this algorithm to spend $O(\log n)$ time to find each $v_I$. An interval ends once every $\ell$ updates so the amortized time is $\frac{\log n}{\ell}$.

At the end of every interval in the randomized algorithm, $u_I$ is chosen randomly from a distribution over only the most recent $\ell$ updates, so this takes constant amortized time. Then, after finding $v_I$ and $u_I$, determining the input to each $\mathcal{A}_I$ takes time linear in the size of the input to $\mathcal{A}_I$. 

We now analyze the time to run all of the instances of the static algorithm. Recall that given a graph $G$ and a subset $S$ of the vertices, the static algorithm computes $G[S]$ and a $C$-coloring of $G[S]$ in time $T(|S|)$. Consider a level 0 interval $I$. At the end of interval $I$, we run the algorithm on the entire vertex set, which takes time $T(n)$.  By construction of the intervals, each interval on level $i$ contains $n/2^i$ subintervals. Thus, the static algorithm at the end of each interval on level $i$ takes $n/2^i$ vertices as input so each such static algorithm runs in time $T(n/2^i)$. For all levels $i$, there are $2^i$ subintervals of $I$ on level $i$. Thus, it takes total time $2^iT(n/2^i)$ to run all instances of the static algorithm on level $i$ that are executed during interval $I$. Therefore, the total time to run all instances of the static algorithm that are executed during interval $I$ (including the one at the end of interval $I$) is $\sum_{i=0}^{\log n} 2^iT(n/2^i)$. Interval $I$ is of length $n\ell$ so the amortized time is $\frac{1}{n\ell}\sum_{i=0}^{\log n} 2^iT(n/2^i)$.

\subsubsection{Dynamic algorithm}
We note that the number of updates to $G'$ is at most twice the number of updates to $G$ since for each edge $(u,v)$ inserted to $G$, $(u,v)$ is inserted to and deleted from $G'$ at most once. Thus, maintaining $G'$ takes constant amortized time.

Recall that the black-box dynamic algorithm has amortized expected update time $O(1)$ in the randomized setting and amortized update time $T'(\Delta,n)\leq O(\polylog(\Delta))$ in the deterministic setting.

In the randomized algorithm, when the maximum degree of $G'$ exceeds the stated bound, we immediately end all intervals, thereby recoloring the entire graph. For large enough $n$, this happens with probability less than $1/n^2$. Each time this happens, we pay an extra $T(n)$ time to recolor the entire graph. Thus, this takes amortized time $T(n)/n^2$ in expectation.

\subsubsection{Combining the static and dynamic algorithms}
The total amortized update time is the sum of the amortized running times of each of the black-box algorithms, which is $\frac{1}{n\ell}\sum_{i=0}^{\log n} 2^iT(n/2^i)$ in expectation for the randomized algorithm, and with an additional additive factor of $T'(\frac{\log^2 n}{\beta},n)$ for the deterministic algorithm. (This expression subsumes the additive factor of $T(n)/n^2$ from the randomized algorithm assuming $\ell\leq n$).

Setting $\ell=\frac{\log n}{\beta}$ completes the proof.
\end{proof}

\subsection{De-amortizing the number of recolorings}\label{app:deam}
We note that our algorithm can be easily modified to achieve the same trade-off between number of colors and number of recolorings in the worst-case setting as in the amortized setting. This extension does not give a worst-case bound on the running time, only the number of recolorings. Our analysis of the amortized algorithm already uses a trivial black-box dynamic algorithm that performs a constant number of recolorings per update in the worst case. We need to show that the static algorithms can be applied with a worst-case number of recolorings per update. 

The worst-case algorithm works as follows. Since we are not concerned with running time, we run our amortized algorithm in the background (without performing any actual colorings). At the end of each interval $I$, our worst-case algorithm immediately recolors $v_I$ and $u_l$ to the color assigned by $\mathcal{A}_I$. We delay the recoloring of the rest of the vertices in the input of $\mathcal{A}_I$. It is important to recolor $v_I$ and $u_l$ immediately because otherwise the balls and bins game does not apply.

From the proof of correctness of the amortized algorithm (Section~\ref{sec:correct}), if $(u,v)$ is an edge and $u$ and $v$ were last recolored according to two different instances of the static algorithm on different levels, or the same instance, then $u$ and $v$ are assigned different colors. The only remaining case is if $u$ and $v$ were last recolored by different instances of the static algorithm on the same level. The proof that this is impossible from Section~\ref{sec:correct} holds if the following property holds: for every pair of adjacent intervals $I$ and $I'$ on the same level $i$ where $I$ comes before $I'$, all vertices colored by $\mathcal{A}_I$ are recolored by some $\mathcal{A}_{I''}$ on a level $j\not=i$ before any vertices are colored by $\mathcal{A}_{I'}$. 

We design the worst-case algorithm to ensure that this property holds. It suffices to take all of the at most $n/2^i$ vertices input to $\mathcal{A}_I$ and recoloring them to the color assigned by $\mathcal{A}_{I''}$ throughout the course of interval $I'$. For ease of notation, we say that these colorings are \emph{performed by} interval $I'$. We note that by the construction of the intervals,  interval $I''$ ends when interval $I'$ begins so when interval $I'$ begins, it already has full information about all of the recolorings it will perform. Furthermore, when interval $I'$ performs a recoloring according to $\mathcal{A}_{I''}$, the interval following $I''$ on level $j$ has not ended (or even started) yet so these recolorings cannot conflict with other vertices colored using the level $j$ color palette.

 Interval $I'$ is of length $n\ell/2^i$ and performs at most $n/2^i$ recolorings, so on average $I'$ performs at most one recoloring every $\ell$ updates. To achieve $\frac{\log n}{\ell}$ recolorings per update in the worst case, we need to only allow intervals on a $1/\ell$ fraction of the levels to perform recolorings following each update. One way to do this is only allow intervals on level $i$ to perform recolorings after the $k^{th}$ update if $k\equiv i \mod \ell$.


\section{Algorithms for low arboricity graphs}\label{sec:arbalg}

In this section we begin by proving Theorem~\ref{thm:introarb}. The proof follows from a combination of dynamically maintaining a bounded out-degree edge orientation and applying the algorithm from Section~\ref{sec:gen}. Our main goal in this section is to improve upon Theorem~\ref{thm:introarb} by proving Theorem~\ref{thm:introarbetter}. To this end we refine the tool of dynamic edge orientations by introducing a new \emph{layered data structure}. 

\begin{theorem}[Restatement of Theorem~\ref{thm:introarb}]
There is a fully dynamic deterministic algorithm for graphs of arboricity $\alpha$ that maintains an $O((\frac{\alpha}{\beta})^2\log^4 n)$-coloring in amortized $T'(\frac{\alpha\log^3 n}{\beta^2},n)+O(\beta)$ time per update for any $\beta>0$.
Using randomization (against an oblivious adversary), the number of colors can be reduced by a factor of $\hat{\Theta}(\log n)$ and the expected amortized update time becomes
$O(\beta)$. 
\end{theorem}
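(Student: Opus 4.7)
The plan is to combine the general-graph algorithm of Theorems~\ref{thm:intromain}/\ref{thm:introtime} with an efficient specialization of the static $C$-coloring black-box $A_{\mathcal{G},C}$ to arboricity-$\alpha$ graphs, made possible by maintaining a dynamic bounded out-degree edge orientation---this is the observation sketched in Section~\ref{uniform}.

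First, I would dynamically maintain an orientation of $G$ with out-degree $D = O(\alpha)$ using an existing algorithm such as Brodal and Fagerberg~\cite{BF99}. Given this orientation, I instantiate $A_{\mathcal{G},C}$ as follows: on input $S \subseteq V$, iterate over each $v \in S$ and scan its at most $D$ out-neighbors to recover $G[S]$ in $O(|S|\,\alpha)$ time; then produce an $O(\alpha)$-coloring of $G[S]$ in $O(|S|\,\alpha)$ further time via a degeneracy-based greedy procedure (valid because any subgraph of an arboricity-$\alpha$ graph is $(2\alpha-1)$-degenerate, and the degeneracy ordering can be computed in linear time in $|E(G[S])| = O(|S|\,\alpha)$). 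Thus $C = O(\alpha)$ and $T(|S|) = O(|S|\,\alpha)$.

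Next, I plug these bounds into Theorem~\ref{thm:introtime} and tune its internal recoloring parameter, which I will call $\beta_1$ to distinguish it from the outer $\beta$ of the present theorem. Setting $\beta_1 := \Theta\!\left(\beta^2/(\alpha \log n)\right)$ makes the two key quantities line up: the color count becomes $O\!\left(\tfrac{C}{\beta_1}\log^3 n\right) = O\!\left((\alpha/\beta)^2 \log^4 n\right)$, and the argument to the dynamic black-box becomes $\log^2 n/\beta_1 = \alpha \log^3 n/\beta^2$, exactly matching the claimed bounds. The amortized static-computation contribution from Theorem~\ref{thm:introtime} evaluates to $\tfrac{\beta_1}{n\log n}\sum_i 2^i T(n/2^i) = O(\beta_1\,\alpha) = O(\beta)$ in the regime where the theorem is informative, and the amortized cost of maintaining the $O(\alpha)$-out-degree orientation is absorbed into the additive $O(\beta)$ term. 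The randomized bound follows by invoking the randomized variant of Theorem~\ref{thm:intromain} (which saves a $\hat{\Theta}(\log n)$ factor in the color count) and using the $O(1)$-expected-time $O(\Delta)$-coloring algorithm of~\cite{BCHN18} as $A'$, so that $T'(\cdot,n) = O(1)$ in expectation and the full expected amortized update time collapses to $O(\beta)$.

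The main obstacle is the parameter bookkeeping: the choice of $\beta_1$ must simultaneously control four quantities---the color count, the static-computation time $\beta_1\cdot \alpha$, the $T'$-argument $\log^2 n/\beta_1$, and (separately) the amortized cost of maintaining the $D = O(\alpha)$ orientation---so that the first three fall within the claimed envelope and the orientation-maintenance cost does not dominate $O(\beta)$. The value $\beta_1 = \Theta(\beta^2/(\alpha \log n))$ threads all of these constraints in the natural regime, and then both the color bound and the time bound in the theorem statement follow by direct substitution into the guarantees of Theorems~\ref{thm:intromain} and~\ref{thm:introtime}.
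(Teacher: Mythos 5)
Your overall strategy is the same as the paper's: instantiate the static black-box of Theorem~\ref{thm:introtime} via a dynamic low out-degree orientation for arboricity-$\alpha$ graphs (so $C = O(\alpha)$ and $T(|S|)$ is near-linear in $|S|\alpha$), and then tune the internal recoloring parameter. Your choice $\beta_1 = \Theta(\beta^2/(\alpha\log n))$ is in fact exactly the paper's final setting, and the consequences you derive for the color count and the $T'$-argument are right.

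However, there is a genuine gap in the way you account for the cost of \emph{maintaining} the orientation. You invoke Brodal--Fagerberg, which maintains an $O(\alpha)$ out-degree orientation in amortized $O(\alpha + \log n)$ time, and then assert that this is ``absorbed into the additive $O(\beta)$ term.'' It is not: $\alpha + \log n$ is a fixed cost independent of $\beta$, so whenever $\beta = o(\alpha + \log n)$ (including the most interesting regime $\beta = O(1)$) the orientation maintenance alone exceeds the claimed $O(\beta)$ bound. You flag this as an ``obstacle'' but never resolve it, and with a fixed-trade-off orientation algorithm it cannot be resolved. Swapping in Kowalik's $O(\alpha\log n)$-out-degree, $O(1)$-time result does not help either: the larger out-degree inflates $T(|S|)$ to $O(|S|\alpha\log n)$, which pushes the static contribution to $O(\beta_1\alpha\log n) = O(\beta^2)$, again exceeding $O(\beta)$.

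What the paper actually does is introduce a \emph{third} tuning parameter $\beta'$ via the He--Tang--Zeh trade-off, which maintains an $O(\beta'\alpha)$ out-degree orientation in amortized $O(\log n/\beta')$ time. The larger out-degree makes $T(|S|) = O(|S|\beta'\alpha)$, so the static contribution becomes $O(\beta_1\beta'\alpha)$ while the orientation cost is $O(\log n/\beta')$; balancing these by choosing $\beta' = \sqrt{\log n/(\alpha\beta_1)}$ makes both equal to $O(\sqrt{\alpha\beta_1\log n})$, which with $\beta_1 = \beta^2/(\alpha\log n)$ evaluates to $O(\beta)$ on the nose. That balancing step — trading a \emph{worse} out-degree (hence slower static scans) for \emph{cheaper} orientation maintenance — is the missing idea in your proposal, and without it the $O(\beta)$ time bound does not go through.
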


\begin{proof}
We run the dynamic edge orientation algorithm of~\cite{HTZ14}, which maintains an
$O(\beta'\alpha)$ out-degree orientation in amortized
$O(\frac{\log n}{\beta'})$ time, for all $\beta'>1$. Given this orientation, for any subset $S$ of the vertices in the current graph $G$, we can compute $G[S]$ and a $2\alpha$-coloring of $G[S]$ in time $O(n\beta'\alpha)$. We compute $G[S]$ by simply scanning the out-neighborhood of every vertex in $S$ and including the edges whose other endpoint is also in $S$. Every edge between a pair of vertices $u,v\in S$ is oriented away from either $u$ or $v$ so this algorithm scans every edge in $G[S]$.

Every subgraph of an arboricity $\alpha$ graph also has arboricity $\alpha$, in particular $G[S]$. We color $G[S]$ by considering an ordering of the vertices $v_1,v_2,\dots$ in $S$ such that every vertex has at most $2\alpha$ neighbors that appear after it in the ordering. Such an ordering exists and can be computed in time $O(|S|\alpha)$~\cite{arikati1997efficient}. We imagine starting with an empty graph iteratively adding the vertices in $S$ to the graph in reverse order. When each vertex $v$ is added, $v$ has at most $2\alpha$ neighbors in the current graph. Using a palette of $2\alpha+1$ colors, we can always color $v$ with a color different from all of its neighbors in the current graph.

Applying Theorem~\ref{thm:bodytime} with $T(n)=O(n\beta'\alpha)$ and parameter $\beta''$, we see that the algorithm from Theorem~\ref{thm:body-main} runs in time $O(\beta''\beta'\alpha)$ per update in expectation in the randomized setting and $T'(\frac{\log^2 n}{\beta''},n)+O(\beta''\beta'\alpha)$ per update in the deterministic setting. The additional time for maintaining the edge orientation is  $O(\frac{\log n}{\beta'})$. Setting $\beta'=\sqrt{\log n/(\alpha\beta'')}$, the running time is $O(\sqrt{\alpha\beta''\log n)}$ (with an additional additive factor of $T'(\frac{\log^2 n}{\beta''},n)$ in the deterministic setting).

Applying Theorem~\ref{thm:body-main} with $C=O(\alpha)$, the number of colors is $O(\frac{\alpha}{\beta''}\log^3 n)$ for the deterministic algorithm and $O(\frac{\alpha}{\beta''}\log^2 n(\log\log n+\log \beta''))$ for the randomized algorithm. Setting $\beta''=\beta^2/(\alpha\log n)$ completes the proof.
\end{proof}

For the remainder of this section we prove Theorem~\ref{thm:introarbetter}.

\begin{theorem}[Restatement of Theorem~\ref{thm:introarbetter}]\label{thm:bodyarbetter}
There is a fully dynamic deterministic algorithm for graphs of arboricity $\alpha$ that maintains an $O(\alpha\log^2 n)$-coloring in amortized $\hat{O}(\polylog \alpha)$ time. Using randomization (against an oblivious adversary), the expected amortized time becomes $O(1)$.
\end{theorem}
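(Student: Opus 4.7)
My plan is to combine Theorem~\ref{thm:intro-lds} with the bounded-degree dynamic coloring subroutines referenced in Section~\ref{sec:genres}. First, I would instantiate Theorem~\ref{thm:intro-lds} using an edge orientation algorithm (e.g., Kowalik~\cite{kowalik2007adjacency}, or equivalently He-Tang-Zeh~\cite{HTZ14} with $\beta' = \log n$) that maintains an $O(\alpha \log n)$ out-degree orientation in amortized $O(1)$ time, so $F(n) = O(1)$. This yields an LDS with $k = O(\log n)$ layers and maximum upward degree $\Delta = O(\alpha \log n)$, together with the induced subgraphs $G[L_1], \dots, G[L_k]$, all maintained in amortized $O(1)$ deterministic time per update to $G$.

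The key structural consequence is that for each $i$, the induced subgraph $G[L_i]$ has maximum degree at most $\Delta$: every $v \in L_i$ has at most $\Delta$ neighbors in layers $\geq i$ by definition of the LDS, and hence at most $\Delta$ neighbors within $L_i$ itself. I would then give each layer $L_i$ its own disjoint palette of $O(\Delta) = O(\alpha \log n)$ colors, and on each $G[L_i]$ I would run an independent copy of a bounded-degree dynamic coloring algorithm: the deterministic algorithm of Bhattacharya et al.~\cite{BCHN18} with $T'(\Delta, n)$ update time, or their randomized variant with $O(1)$ expected update time against an oblivious adversary.

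For correctness, consider any edge $(u,v)$ with $u \in L_i$, $v \in L_j$, and $i \le j$. If $i < j$, the colors of $u$ and $v$ differ automatically because the two palettes are disjoint. Otherwise $i = j$, the edge belongs to $G[L_i]$, and the layer-$i$ coloring subroutine guarantees that $u$ and $v$ receive distinct colors from its palette. Summing over all layers gives a total color count of $k \cdot O(\Delta) = O(\log n) \cdot O(\alpha \log n) = O(\alpha \log^2 n)$, as required. For the update time, each update to $G$ triggers amortized $O(1)$ modifications to layer assignments and to the induced subgraphs (inside the $O(F(n)) = O(1)$ LDS maintenance budget of Theorem~\ref{thm:intro-lds}), and each such modification is forwarded to the relevant coloring instance at amortized cost $T'(\Delta,n) = \polylog(\alpha \log n) = \hat{O}(\polylog \alpha)$ deterministically, or $O(1)$ expected in the randomized variant. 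Summing yields the stated bounds.

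The main obstacle is not this composition but Theorem~\ref{thm:intro-lds} itself: maintaining a global layering of the vertices with few layers and small upward degree, while charging the cost of cascades of layer moves to the flips performed by the underlying orientation algorithm. I anticipate that this requires a carefully designed potential function tied to the orientation (as the overview in Section~\ref{sec:ldsover} hints), so that a single edge update to $G$ cannot force more than amortized $O(F(n))$ layer changes even though the LDS spans $\Theta(\log n)$ layers, and so that the per-layer induced subgraphs stay in sync without an extra $\log n$ blowup. Once Theorem~\ref{thm:intro-lds} is in hand, the assembly above is routine.
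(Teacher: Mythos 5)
Your proposal is correct and follows essentially the same approach as the paper's own proof of Theorem~\ref{thm:bodyarbetter}: instantiate Theorem~\ref{thm:intro-lds} with Kowalik's orientation algorithm to get $F(n)=O(1)$, $k=O(\log n)$, and $\Delta=O(\alpha\log n)$; assign each layer a disjoint $O(\Delta)$-color palette and run a bounded-degree dynamic coloring subroutine independently on each induced layer graph; then observe that the $O(F(n))$ amortized modification bound from the LDS maintenance controls the work forwarded to the per-layer coloring instances. You also correctly identify that the substantive technical content lies in proving Theorem~\ref{thm:intro-lds} itself rather than in this composition.
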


Given a partition of the vertices of a graph into layers $L_1, L_2,\dots$, for all vertices $v$ let $d_{up}(v)$ (the \emph{up-degree} of $v$) be the number of neighbors of $v$ in layers equal to or higher than that of $v$.

\begin{definition} Given a dynamic graph $G$ of arboricity $\alpha$, a \emph{layered data structure (LDS)} with parameters $k$ and $\Delta$ is a partition of the vertices into $k$ layers $L_1, \dots , L_k$ so that for all vertices $v$, $d_{up}(v)\leq \Delta$.
\end{definition}

The bulk of the proof of Theorem~\ref{thm:bodyarbetter} is to prove Theorem~\ref{thm:intro-lds}.

\begin{theorem}[Restatement of Theorem~\ref{thm:intro-lds}]\label{thm:bodylds}
Let $A''$ be an algorithm for arboricity $\alpha$ graphs that maintains an orientation of the edges with out-degree at most $D$ that performs amortized $F(n)$ flips per update. Then there is an algorithm to maintain an LDS along with the graph induced by each layer, for a fully dynamic graph of arboricity $\alpha$ with $k=O(\log n)$ and $\Delta=O(D+\alpha\log n)$ in amortized deterministic time $O(F(n))$.
\end{theorem}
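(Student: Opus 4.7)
My plan is to maintain the LDS via a local promotion/demotion rule running in tandem with $A''$'s edge-orientation algorithm. Alongside $A''$'s out-degree-$\leq D$ orientation, I would store for each vertex $v$ its current layer $\ell(v)$ together with a bucketed list of $v$'s neighbors indexed by layer (this yields the induced subgraph on each $L_i$ as a byproduct, as the theorem requires). The invariant I aim to maintain is that each $v$ sits, up to a constant factor, at the lowest layer for which $d_{up}(v)\leq \Delta$: after each edge update to $G$ or flip triggered by $A''$, I would recompute $d_{up}$ on the (at most two) affected vertices, promote any $v$ with $d_{up}(v)>\Delta$ to layer $\ell(v)+1$ (propagating the increase to the up-degrees of its new up-neighbors), and, using a constant-factor hysteresis gap to prevent oscillation, demote any $v$ whose up-degree has fallen well below $\Delta$.

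For the layer-count bound $k=O(\log n)$, I would use the standard arboricity peeling argument. Any induced subgraph of an arboricity-$\alpha$ graph on $s$ vertices has at most $\alpha(s-1)$ edges, so at least half of its vertices have degree $\leq 4\alpha$. Writing $S_i=L_i\cup L_{i+1}\cup\cdots$, this implies that $\geq |S_i|/2$ vertices of $S_i$ have at most $4\alpha$ neighbors inside $S_i$ (hence up-degree $\leq 4\alpha\leq\Delta$ if placed at layer $i$); by the lowest-possible-layer invariant they lie in $L_i$, so inductively $|S_i|\leq n/2^{i-1}$ and $k=O(\log n)$. The $\alpha\log n$ term in the bound on $\Delta$ should emerge from the combined effect of the hysteresis slack and the $O(\log n)$ layer depth, while the $D$ term bounds the baseline contribution of $A''$'s out-degree seen when recomputing $d_{up}$ after each flip.

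The main obstacle, and the crux of the proof, is achieving amortized time $O(F(n))$ per update. A naive analysis charges each update $O(\log n)$ work (one promotion per cascading layer), which is strictly worse than $F(n)$ in the regimes of interest (e.g.\ $F(n)=O(1)$ via Kowalik's algorithm). My approach would be a potential argument that couples the LDS to $A''$'s orientation: define a potential $\Phi$ that is increased only by edge insertions, deletions, and $A''$-flips (by $O(1)$ each), and that drops by at least a constant per unit of LDS work performed. A natural candidate is a weighted count of edges whose endpoints disagree between $A''$'s orientation and the implied layer-orientation (from lower layer to higher layer), with weights scaling in layer depth. If such a $\Phi$ can be designed so that cascaded promotions, implemented via the bucketed neighbor lists in time proportional to the potential they release, always yield a net drop in $\Phi$, then the total amortized LDS work is within a constant factor of $A''$'s amortized work, giving $O(F(n))$. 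Identifying the correct $\Phi$ so that $A''$-flips and LDS-promotions interact cleanly across all $O(\log n)$ layers is where I expect the main technical difficulty to lie.
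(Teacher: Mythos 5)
Your high-level plan---couple a promotion/demotion rule to $A''$'s orientation, maintain bucketed neighbor lists, use the arboricity-peeling argument for $k=O(\log n)$, and bound the amortized time with a potential counting orientation disagreements between the LDS-implied orientation and $A''$'s---matches the paper's strategy quite closely, including the potential function $\phi$ (the paper uses exactly the unweighted count of edges oriented oppositely in the two orientations, a device it borrows from Brodal--Fagerberg). The peeling argument for the layer count is essentially the paper's Claim~3.2.

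However, there is a genuine gap in the parameter choice that would make your analysis fail. You propose a \emph{constant-factor} hysteresis gap between the promotion and demotion thresholds. The paper instead uses two thresholds $d=4\alpha$ and $d'=\Theta(\alpha\log n + D)$ that differ by an $\Omega(\log n)$ factor, and this is not slack: it is load-bearing in the potential argument. When a vertex is demoted, the flips it triggers \emph{increase} $\phi$, and a demotion cascade across $k=O(\log n)$ layers can flip $\Theta(\alpha k)$ edges (the paper's Lemma~3.7 bounds this by $f_p \le drk$). These increases are paid for only by the drops in $\phi$ from ``heavy'' promotions, and a heavy promotion only releases roughly $d'/2-2D$ units of potential. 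With a constant-factor gap you would have $d'=O(\alpha+D)$, so the potential released per heavy promotion would not cover the $\Theta(\alpha k)$ potential created per demotion cascade; the amortization collapses. The $\alpha\log n$ term in $\Delta$ does not ``emerge'' from the layer depth, as you suggest---it is chosen up front, via $d'=\Theta(dk+D)$, precisely to make the heavy promotions dominate.

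A second, related issue: you expect a potential $\Phi$ for which every cascaded promotion ``always yields a net drop.'' The paper's potential does not have this property and cannot be made to. Promotions triggered by a same-layer in-degree violation (``light'' Rise calls in the paper) flip fewer than $d'/2$ edges and can \emph{increase} $\phi$. The paper handles these separately: light calls are each preceded by $\Omega(d')$ edge insertions, so their total flip count is $O(M)$ and is charged directly to the updates rather than to potential decreases. Without distinguishing same-layer in-degree violations from out-degree violations---and your proposal tracks only the aggregate $d_{up}(v)$---you cannot make this case analysis. Similarly, promoting one layer at a time (rather than jumping directly to the destination layer in a single batch, as the paper's \textsc{Rise} does) fragments what would be a single heavy promotion into many small moves whose individual flip counts are too small to certify a large potential drop; the heavy/light classification only works because the move is done atomically.

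In short: the architecture is right and the potential function candidate is essentially correct, but the constant-factor gap, the one-layer-at-a-time promotion, and the absence of a separate treatment for demotions and for in-degree-triggered promotions are each gaps that would prevent the $O(F(n))$ amortized bound from going through as stated.
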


We note that we do not require the algorithm $A''$ to be explicit; we only require its existence.

\subsection{Proof overview}\label{sec:ldsover}

The idea of the algorithm is essentially to move vertices to new layers when the required properties of the data structure are violated. Roughly, when there is a vertex $v$ with $d_{up}(v)\geq\Delta$ we move $v$ to a higher layer so that $d_{up}(v)$ decreases to $O(\alpha)$. Additionally, to control the number of layers, whenever a vertex $v$ has up-degree less than $d=O(\alpha)$ and $v$ can be moved to a lower layer while maintaining up-degree less than $d$, we move $v$ to a lower layer. The fact that $d$ and $\Delta$ differ by a logarithmic factor ensures that vertices don't move between layers too often which is essential for bounding the running time.

To help with the running time analysis, we maintain \emph{two} dynamic orientations of the edges: one is defined by the algorithm $A''$ and the other is maintained by our algorithm. The orientation maintained by our algorithm has the property that all edges with endpoints in different layers are oriented toward the higher layer. We compare the number of edge flips in the orientation defined by our algorithm to the number of edge flips in the orientation algorithm $A''$ using a potential function: $\phi(i)=$ the number of edges oriented in opposite directions in the two algorithms. This potential function is also used in~\cite{BF99}. 

The main idea of the analysis is to observe how $\phi$ changes in response to vertices moving between levels. We claim that when we move a vertex to a higher level, $\phi$ decreases substantially for the following reason. Our algorithm is defined so that we only move a vertex to a higher layer if its up-degree decreases substantially as a result. Because our algorithm orients edges from lower to higher layers, when we move a vertex $v$ to a higher layer many edges incident to $v$ are flipped towards $v$. Then because $A''$ maintains an orientation of low out-degree, many of these edges flipped towards $v$ end up oriented in the same direction in the two orientations. Thus, $\phi$ decreases substantially as a result of $v$ moving to a higher layer. On the other hand, when a vertex moves to a lower layer, $\phi$ might increase. The idea of the argument is to use the substantial decreases in $\phi$ that result from moving vertices to higher layers to pay for the increases in $\phi$ that result from moving vertices to lower layers. 


\subsection{Invariants}

In this section we introduce four invariants that together imply that $d_{up}(v)\leq \Delta$ and $k=O(\log n)$.

We maintain two dynamic orientations of the edges in the graph, one defined by our algorithm and the other defined by the algorithm $A''$. Unless otherwise stated, when we refer to an orientation, we mean the orientation defined by our algorithm.

For ease of notation, let $d=4\alpha$ and let $d'=\Delta/2$.

\noindent We define the following for each vertex $v$:
\begin{itemize}
\item $L(v)$ is the layer containing $v$.
\item $L_{max}(v)$ is the lowest layer for which if $v$ were in this layer, $d_{up}(v)$ would be at most $d$.
\item $d^+(v)$ is the out-degree of $v$. 
\item $d_L^-(v)$ is the in-degree of $v$ from neighbors in $L(v)$.
\end{itemize}

\subsubsection{Orientation invariants}

Invariant~\ref{inv:up} defines how edges are oriented between layers and is useful for analyzing the update time of the algorithm, as outlined in Section~\ref{sec:ldsover}. 

\begin{invariant} \label{inv:up} All edges with endpoints in different layers are oriented towards the vertex in the higher layer.\end{invariant}


The next two invariants bound $d^+(v)$ and $d_L^-(v)$, which helps to bound $d_{up}(v)$.

\begin{invariant} \label{inv:out} For all vertices $v$, $d^+(v)\leq d'$. \end{invariant}
\begin{invariant} \label{inv:in} For all vertices $v$, $d_L^-(v)\leq d'$. \end{invariant}

\begin{claim}\label{claim:inv1}
Invariants~\ref{inv:up}-\ref{inv:in} together imply that $d_{up}(v)\leq 2d'=\Delta$.
\end{claim}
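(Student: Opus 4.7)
The plan is to decompose $d_{up}(v)$ according to edge orientations and then apply each of the three invariants in turn. Every neighbor counted by $d_{up}(v)$ sits either in a strictly higher layer than $v$ or in the same layer $L(v)$, so I would split the count into these two groups and handle them separately.

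For the strictly higher neighbors, Invariant~\ref{inv:up} forces every edge from $v$ to such a neighbor to be oriented away from $v$, so each such neighbor contributes to $d^+(v)$. For the same-layer neighbors, each edge is oriented one of the two ways, contributing either to $d^+(v)$ (if oriented out of $v$) or to $d_L^-(v)$ (if oriented into $v$). Crucially, by Invariant~\ref{inv:up} again, $v$ has no out-neighbors in strictly lower layers, so $d^+(v)$ is composed entirely of (a) out-edges to strictly higher layers and (b) out-edges to same-layer neighbors; together with the same-layer in-edges counted by $d_L^-(v)$, this accounts for every neighbor of $v$ in layers $\geq L(v)$ exactly once. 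Hence
\[
d_{up}(v) \;=\; d^+(v) + d_L^-(v).
\]

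Applying Invariants~\ref{inv:out} and~\ref{inv:in} then yields $d_{up}(v) \leq d' + d' = 2d' = \Delta$, which is the desired bound. The argument is essentially bookkeeping; the only subtle point (and the one I would state carefully) is the double-counting check, i.e., that the decomposition $d_{up}(v) = d^+(v) + d_L^-(v)$ is a true disjoint partition of the neighbors counted by $d_{up}(v)$. This is where Invariant~\ref{inv:up} is used twice: once to guarantee that all higher-layer neighbors fall under $d^+(v)$, and once to rule out any out-neighbors in lower layers that would otherwise inflate $d^+(v)$ beyond what $d_{up}(v)$ captures.
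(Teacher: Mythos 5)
Your proof is correct and follows essentially the same route as the paper: both establish the decomposition $d_{up}(v) = d^+(v) + d_L^-(v)$ via Invariant~\ref{inv:up} and then apply Invariants~\ref{inv:out} and~\ref{inv:in}. Your version simply spells out the bookkeeping (including the absence of out-neighbors in lower layers) in more detail than the paper's one-line argument.
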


\begin{proof}
By Invariant~\ref{inv:up}, for all vertices $v$, every neighbor of $v$ in a layer equal to or higher than $L(v)$ is either an out-neighbor of $v$ or an in-neighbor of $v$ in $L(v)$, so $d_{up}(v)=d^+(v)+d_L^-(v)$. Then by Invariants~\ref{inv:out} and \ref{inv:in}, $d^+(v)+d_L^-(v)\leq 2d'$.
\end{proof}

\subsubsection{Number of layers invariant}

Invariant~\ref{inv:max} serves to bound the number of layers $k$. 

For any pair of layers $L_i$, $L_j$, we abuse notation and say that $L_i<L_j$ if $i<j$, that is if layer $L_i$ is below layer $L_j$.

\begin{invariant} \label{inv:max} For all vertices $v$, $L(v)\leq L_{max}(v)$. \end{invariant}

\begin{claim}\label{claim:inv2} Invariant~\ref{inv:max} implies that $k = O(\log n)$.\end{claim}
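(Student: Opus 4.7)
The plan is to show that the sets $V_i := \bigcup_{j \geq i} L_j$ of vertices in layer $i$ or above satisfy a geometric decay: $n_{i+1} < n_i/2$, where $n_i := |V_i|$. Combined with $n_1 \leq n$ and the fact that the top nonempty layer contributes $n_k \geq 1$, this immediately yields $k \leq \log_2 n + 1 = O(\log n)$.

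The only place Invariant~\ref{inv:max} enters is the following claim, which I would establish first: every vertex $v \in V_{i+1}$ has strictly more than $d = 4\alpha$ neighbors in $V_i$. The argument is a one-line unpacking of definitions: $v \in V_{i+1}$ means $L(v) \geq i+1$, which together with Invariant~\ref{inv:max} forces $L_{max}(v) \geq L(v) \geq i+1 > i$. Since $L_{max}(v)$ is by definition the \emph{lowest} layer at which placing $v$ would yield up-degree at most $d$, the strict inequality $L_{max}(v) > i$ says exactly that at layer $i$, $v$'s would-be up-degree exceeds $d$ --- and that up-degree is precisely the number of neighbors of $v$ lying in $V_i$.

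Given this, a standard arboricity-averaging argument finishes the proof. Summing the degree lower bound over $v \in V_{i+1}$, using $V_{i+1} \subseteq V_i$, the total strictly exceeds $4\alpha\, n_{i+1}$, but is also at most $\sum_{v \in V_i} |N(v) \cap V_i| = 2|E(G[V_i])|$ (each edge of $G[V_i]$ contributes at most twice). Since arboricity is inherited by induced subgraphs, $|E(G[V_i])| \leq \alpha(n_i - 1) \leq \alpha\, n_i$, giving $4\alpha\, n_{i+1} < 2\alpha\, n_i$ and hence $n_{i+1} < n_i/2$ as desired.

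The only mildly nontrivial step is the middle paragraph, where Invariant~\ref{inv:max} and the definition of $L_{max}(v)$ must be combined precisely to produce a lower bound on the degree from $V_{i+1}$ into $V_i$; once that bridge is built, the remainder is a textbook application of the arboricity estimate together with the resulting geometric series.
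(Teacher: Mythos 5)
Your proof is correct and takes essentially the same route as the paper: both establish geometric decay of $n_i = |\bigcup_{j \geq i} L_j|$ by combining the arboricity bound on $|E(G[V_i])|$ with the consequence of Invariant~\ref{inv:max} that every vertex residing above layer $i$ must have more than $d$ neighbors in $V_i$. Your write-up is slightly more explicit about how the definition of $L_{max}$ and the invariant combine to give that degree lower bound, but the underlying averaging argument is the same.
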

\begin{proof}
First we observe that under Invariant~\ref{inv:max}, all vertices of degree at most $d$ are in $L_1$. Now, consider removing all vertices in $L_1$ from the graph. In the remaining graph, all vertices of degree at most $d$ are in layer $L_2$. More generally, after removing all vertices in layers 1 through $i$ for any $i$, all vertices of degree at most $d$ must be in layer $L_{i+1}$. 

The total number of edges in a graph of arboricity alpha is less than $\alpha n$. So at least a $(1-2\alpha/d)$ fraction of the vertices have degree at most $d$. Any subgraph of an arboricity $\alpha$ graph also has arboricity $\alpha$ so after the vertices in any given layer are removed, the graph still has arboricity $\alpha$. Thus, after removing the vertices in layers 1 through $i$ for any $i$, at least a $(1-2\alpha/d)$ fraction of the remaining vertices are in $L_{i+1}$. Therefore, the number $k$ of layers total is at most $\log_{\frac{d}{2\alpha}} n=O(\log n)$.
\end{proof}

\subsection{Algorithm}\label{sec:alg}

The idea of the algorithm is essentially to move vertices to new layers when the required properties of the data structure are violated. We define two recursive procedures \textsc{Rise} and \textsc{Drop} which move vertices to higher and lower layers respectively. In particular, when a vertex $v$ violates Invariant~\ref{inv:out} or \ref{inv:in} (i.e. either $d^+(v)>d'$ or $d_L^-(v)>d'$), we call the procedure \textsc{Rise}$(v)$ which moves $v$ up to the layer $L_{max}(v)$. The movement of $v$ to a new higher layer may increase the up-degree of some neighbors $u$ of $v$ causing $u$ to violate Invariant~\ref{inv:out} or \ref{inv:in}, in which case we recursively call \textsc{Rise}$(u)$. On the other hand, when a vertex $v$ violates Invariant~\ref{inv:max} (i.e. $L_{max}(v)<L(v)$), we call the procedure \textsc{Drop}$(v)$ which moves $v$ down to the layer $L_{max}(v)$. The movement of $v$ to a new lower layer may decrease $L_{max}(u)$ for some neighbors $u$ of $v$ causing $u$ to violate Invariant~\ref{inv:max}, in which case we recursively call \textsc{Drop}$(u)$. See Algorithm~\ref{alg} for the pseudocode. 
\begin{algorithm}[h]
\caption{}\label{alg}
\begin{algorithmic}

\Procedure {\textsc{Insert}($u$,$v$)}{}
\State add edge $(u,v)$
\If {$u$ and $v$ are in different layers}
\State orient the edge towards the vertex in the higher layer
\Else ($u$ and $v$ are in the same layer):
\State orient the edge arbitrarily
\EndIf
\If {$d^+(u)>d'$ or $d_L^-(u)>d'$}
\Call{\textsc{Rise}}{$u$}
\EndIf
\If {$d^+(v)>d'$ or $d_L^-(v)>d'$}
\Call{\textsc{Rise}}{$u$}
\EndIf
\EndProcedure

\Procedure {\textsc{Delete}($u$,$v$)}{}
\State remove edge $(u,v)$
\If {$L_{max}(u)<L(u)$}
	\Call{\textsc{Drop}}{$u$}
	\EndIf
\If {$L_{max}(v)<L(v)$}
	\Call{\textsc{Drop}}{$v$}
	\EndIf
	\EndProcedure

\Procedure{\textsc{Rise}($v$)}{}
\State $L_{old}\gets L(v)$
\State move $v$ up to layer $L_{max}(v)$
\State $S\gets $ the set of out-neighbors of $v$ in a layer between $L_{old}$ and $L_{max}(v)$ inclusive
\For {each $u \in S$}
	\State flip edge $(u,v)$ towards $v$
	\EndFor
\For {each $u \in S$}{}
	\If {$d^+(u)>d'$}
		\Call{\textsc{Rise}}{$u$}
		\EndIf
		\EndFor
			\EndProcedure

\Procedure{\textsc{Drop}($v$)}{}
\State $L_{old}\gets L(v)$
\State move $v$ down to layer $L_{max}(v)$
\State $S \gets$ the set of in-neighbors of $v$ in any layer above $L_{max}(v)$ and at most $L_{old}$
\For {each $u \in S$}
	\State flip edge $(u,v)$ away from $v$
	\EndFor
\State $S^+ \gets$ the set of all neighbors of $v$ in any layer above $L_{max}(v)$ and at most $L_{old}+1$
\For {each $u \in S^+$}{}
	\If {$L_{max}(u)<L(u)$}
		\Call{\textsc{Drop}}{$u$}
		\EndIf
		\EndFor
			\EndProcedure
\end{algorithmic}
\end{algorithm}

\subsection{Correctness}

We will show that after each edge update is processed, the four invariants are satisfied. (The edge update algorithm indeed terminates due to the running time analysis in the following sections.) By Claims~\ref{claim:inv1} and~\ref{claim:inv2}, this implies that $d_{up}(v)\leq 2d'=\Delta$ and $k = O(\log n)$.

We will use the following useful property of the algorithm:

\begin{lemma} \label{lem:ins}$ $
\begin{enumerate}
\item Right after any vertex $v$ is moved to a new layer $L_i$, $d_{up}(v)\leq d$.
\item While $v$ remains in $L_i$, the only way for $d_L^-(v)$ to increase is by the insertion of an edge incident to $v$.
\end{enumerate}
\end{lemma}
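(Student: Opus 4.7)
The plan is to prove the two parts separately. Part 1 should follow almost immediately from the definition of $L_{max}(v)$, while part 2 requires a careful case analysis of which events can create new in-neighbors of $v$ inside its current layer.

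For part 1, I would observe that the only procedures that move a vertex are \textsc{Rise} and \textsc{Drop}, and in both cases the new layer of $v$ is set to $L_{max}(v)$. By the definition of $L_{max}(v)$ as the lowest layer in which $d_{up}(v)$ would be at most $d$, we immediately have $d_{up}(v) \leq d$ at the moment $v$ arrives in $L_{max}(v)$. A small sanity check is needed: although the procedure subsequently flips some edges incident to $v$, the quantity $d_{up}(v)$ counts only the number of neighbors of $v$ in layers at least $L(v)$ and is independent of edge orientation, so those flips leave $d_{up}(v)$ unchanged.

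For part 2, I would classify the events that can increase $d_L^-(v)$ while $v$ stays in $L_i$. A new in-neighbor of $v$ inside $L_i$ can arise in only three ways: (a) an edge incident to $v$ is inserted with the other endpoint in $L_i$ and oriented toward $v$; (b) the orientation of an existing edge $(u,v)$ with $u \in L_i$ is flipped from $v \to u$ to $u \to v$; or (c) a vertex $u$ that was previously outside $L_i$ but already joined to $v$ by an edge oriented $u \to v$ moves into $L_i$. Case (a) is exactly the insertion case allowed by the lemma. For case (b), edges are flipped only inside some call \textsc{Rise}($w$) or \textsc{Drop}($w$) of one of the endpoints $w$; since $v$ is assumed to stay in $L_i$, we must have $w = u$, but then $u$ itself moves out of $L_i$ during that call, so immediately after the flip $u$ is no longer in $L_i$ and does not contribute to $d_L^-(v)$. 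For case (c), if $u$ enters $L_i$ via \textsc{Rise}($u$) from a lower layer, Invariant~\ref{inv:up} tells us the edge was oriented $u \to v$ beforehand, and since $L(v) \in [L_{old}(u), L_{max}(u)]$ the edge is flipped to $v \to u$ during the rise, so $u$ ends up as an out-neighbor of $v$; if $u$ enters $L_i$ via \textsc{Drop}($u$) from a higher layer, Invariant~\ref{inv:up} says the edge was oriented $v \to u$ beforehand, and since $L(v) = L_{max}(u)$ lies outside the flip range $(L_{max}(u), L_{old}(u)]$, the edge is not flipped and $u$ remains an out-neighbor of $v$.

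The main obstacle I expect is keeping track of the asymmetric boundary conventions in the two procedures: \textsc{Rise} flips edges to neighbors in the inclusive range $[L_{old}, L_{max}]$, while \textsc{Drop} uses the half-open range $(L_{max}, L_{old}]$. One must check that the cases match up correctly precisely when a neighbor moves into or out of exactly $L_i$; once those boundary cases are handled consistently with Invariant~\ref{inv:up}, the remainder is a direct verification.
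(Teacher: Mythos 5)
Your proof is correct and uses the same ingredients as the paper's proof: both rely on the fact that \textsc{Rise} and \textsc{Drop} always move to $L_{max}(v)$, on Invariant~\ref{inv:up}, and on the precise flip ranges in the two procedures. The organization differs slightly — the paper argues once from the moving vertex's viewpoint (after any vertex $w$ moves, all of $w$'s edges within its new layer point toward $w$, so no other vertex's $d_L^-$ can rise), whereas you carry out a case analysis from the stationary vertex's viewpoint — but this is a cosmetic re-packaging of the same argument, and your handling of the boundary conventions $[L_{old},L_{max}]$ vs.\ $(L_{max},L_{old}]$ is exactly where the paper's unified statement implicitly does the same work.
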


\begin{proof}$ $
\begin{enumerate}
\item Whenever any vertex $v$ is moved to a new layer (either by $\textsc{Rise}$ or \textsc{Drop}), it is moved to the layer $L_{max}(v)$. By the definition of $L_{max}(v)$, we have $d_{up}(v)\leq d$.
\item When any vertex $v$ moves to a higher layer, all of $v$'s incident edges within its new layer are flipped towards $v$. When any vertex $v$ moves to a lower layer, all of $v$'s incident edges within its new layer are already oriented towards $v$ by Invariant~\ref{inv:up} and they are not flipped. That is, right after $v$ is moved to a new layer (in either direction), all of its incident edges within its new layer are oriented towards $v$. Thus, for all vertices $u\not=v$, the movement of $v$ to a new layer cannot cause $d_L^-(u)$ to increase. Then since all edge flips are triggered by a vertex changing layers, the only way for $d_L^-(v)$ to increase is by the insertion of an edge incident to $v$.
\end{enumerate}
\end{proof}

Now we show that the four invariants are satisfied after each edge update is processed.

Invariant~\ref{inv:up} is satisfied at all times because whenever a vertex changes layer all of its incident edges that are oriented towards the lower layer are immediately flipped.

Invariant~\ref{inv:out} is violated when $d^+(v)> d'$. This could happen as a result of a) insertion of an edge, or b) movement of a vertex $u$ to a higher layer, which could cause $u$'s neighbors to violate the invariant. In both of these cases, the algorithm calls \textsc{Rise} on all violating vertices.

Invariant~\ref{inv:in} is violated when $d_L^-(v)> d'$. By Lemma~\ref{lem:ins}, this can only happen following the insertion of an edge. In this case, the algorithm calls \textsc{Rise} on all violating vertices.

Invariant~\ref{inv:max} is violated when $L(v)> L_{max}(v)$. This could happen as a result of a) deletion of an edge, or b) movement of a vertex $u$ from $L_i$ to a lower layer $L_j$, which could cause $u$'s neighbors in layers from $L_{j+1}$ to $L_{i+1}$ to violate the invariant. In both of these cases, the algorithm calls \textsc{Drop} on all violating vertices.

\subsection{Bounding the number of edge flips}

The first step towards getting a bound on the update time is to get a bound on the number of edge flips that the algorithm performs. We will show that the amortized number of edge flips per update is $O(F(n))$ (Lemma~\ref{lem:amflips}).

We choose $\Delta=16(dk+D)$, so $\Delta=O(D+\alpha \log n)$, as required.

Let ${\mathcal G} = (G_0,G_1,\dots,G_M)$ be the sequence of graphs with orientation defined by our algorithm and let ${\mathcal G}^A = (G^A_0,G^A_1,\dots,G^A_M)$ be the sequence of graphs with orientation defined by the algorithm $A''$. That is, for all $i$, the underlying undirected graphs corresponding to $G_i$ and $G^A_i$ are identical but their orientations may differ. Given i, we say an edge in $G_i$ is \emph{bad} if it is oriented in the opposite direction in $G_i$ and $G^A_i$. We define a potential function: \[\phi(i) = \mbox{the number of bad edges.}\]
We say that a call to \textsc{Rise} is \emph{heavy} if it triggers at least $d'/2$ edge flips, ignoring recursive calls. Otherwise, we say that a call to \textsc{Rise} is \emph{light}.

\begin{lemma}\label{lem:light} Every light call to \textsc{Rise} is due to a violation of Invariant~\ref{inv:in}. 
\end{lemma}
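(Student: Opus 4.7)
The plan is to prove the contrapositive: any call to \textsc{Rise}$(v)$ triggered by a violation of Invariant~\ref{inv:out} (i.e.\ $d^+(v) > d'$) flips at least $d'/2$ edges and is therefore heavy. Since the only triggers for \textsc{Rise}$(v)$ are violations of Invariant~\ref{inv:out} or Invariant~\ref{inv:in} (from \textsc{Insert}), and recursive calls inside \textsc{Rise} are only triggered by the check $d^+(u) > d'$, every call that is not a light call triggered by Invariant~\ref{inv:in} falls into the ``out-degree triggered'' case, so the lemma follows.

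First, I would set up the geometry of the out-neighborhood of $v$ at the moment \textsc{Rise}$(v)$ is invoked. Let $L_{old} = L(v)$. By Invariant~\ref{inv:up}, every out-neighbor of $v$ lies in a layer $\geq L_{old}$. The set $S$ of edges flipped by \textsc{Rise}$(v)$ consists exactly of the out-neighbors of $v$ that lie in layers from $L_{old}$ up to $L_{max}(v)$ inclusive, so the out-neighbors missing from $S$ are precisely those in layers strictly above $L_{max}(v)$.

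Next I would bound the number of out-neighbors above $L_{max}(v)$. By the definition of $L_{max}(v)$, if $v$ were placed in layer $L_{max}(v)$ its up-degree would be at most $d$; hence the total number of neighbors of $v$ (oriented or not) in layers $\geq L_{max}(v)$ is at most $d$, and in particular the number of out-neighbors in layers strictly above $L_{max}(v)$ is at most $d$. Therefore
\[
|S| \;\geq\; d^+(v) - d \;>\; d' - d.
\]
Now I would invoke the parameter choices: $d = 4\alpha$ and $d' = \Delta/2 = 8(dk + D) \geq 8d$, so in particular $d' - d \geq d'/2$. Combining, $|S| \geq d'/2$, so the call is heavy.

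The only subtlety, and main thing to get right, is the bookkeeping of which out-neighbors end up in $S$: the argument relies on Invariant~\ref{inv:up} holding immediately before the \textsc{Rise} call so that all out-neighbors sit in layers $\geq L_{old}$, and on the definition of $L_{max}(v)$ applying to $v$'s current neighborhood (which is valid since $v$ has not yet moved). After that, the calculation is a one-line arithmetic check against the chosen values of $d$ and $d'$. Taking the contrapositive then gives the lemma: a light call cannot be triggered by $d^+(v) > d'$, so it must be triggered by $d_L^-(v) > d'$, i.e.\ a violation of Invariant~\ref{inv:in}.
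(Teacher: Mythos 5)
Your proof is correct and takes essentially the same approach as the paper: arguing the contrapositive that any \textsc{Rise} triggered by $d^+(v) > d'$ flips at least $d' - d > d'/2$ edges. The only cosmetic difference is that you bound $|S|$ directly from the definition of $L_{max}(v)$ (the number of neighbors in layers $\geq L_{max}(v)$ is at most $d$), whereas the paper invokes Lemma~\ref{lem:ins} to say $d^+(v)\leq d$ after the move — these are the same computation.
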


\begin{proof}
Suppose otherwise; that is, suppose that a light call to \textsc{Rise} is triggered by a violation of Invariant~\ref{inv:out}. In this case, right before the call to \textsc{Rise}, $d^+(v)>d'$. By Lemma~\ref{lem:ins}, after $v$ is moved to a new layer, $d^+(v)\leq d$. Thus, the call to \textsc{Rise} triggers at least $d'-d>d'/2$ edge flips so it must be heavy.
\end{proof}

We define the following parameters. \\
$M =$ the total number of edge updates\\
$l =$ the total number of light calls to \textsc{Rise}\\
$h =$ the total number of heavy calls to \textsc{Rise}\\
$r =$ the total number of calls to \textsc{Rise}; so $r=h+l$\\
$p =$ the total number of levels that vertices move down (due to calls to \textsc{Drop})\\
$f$ = the total number of flips\\
$f_l =$ the total number of flips triggered by light calls to \textsc{Rise}\\
$f_h =$ the total number of flips triggered by heavy calls to \textsc{Rise}\\
$f_p$ = the total number of flips triggered by calls to \textsc{Drop}\\

\begin{observation}\label{obs:phi}
Using the above parameters, it is immediate to bound the total increase in $\phi$ due to the following events:
\begin{itemize}
\item Edge updates: $\Delta(\phi)\leq M$.
\item Edge reorientations in $G^A$: $\Delta(\phi)\leq F(n)M$.
\item Light calls to \textsc{Rise}: $\Delta(\phi)\leq f_l$.
\item Calls to \textsc{Drop}: $\Delta(\phi)\leq f_p$.
\end{itemize}
\end{observation}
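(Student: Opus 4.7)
The plan is to verify each bullet of Observation~\ref{obs:phi} via the simple principle that every elementary event (an edge insertion, an edge deletion, a flip of one edge in $G$, or a flip of one edge in $G^A$) can change $\phi$ by at most $1$ in absolute value, and then to count how many such elementary events are attributable to the source named in the bullet. Since $\phi$ simply counts edges on which $G$ and $G^A$ disagree, any modification to a single edge (its presence or its orientation in either of the two orientations) toggles the status of that one edge and nothing else.

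First I would handle the two external sources. For edge updates, there are $M$ in total. A deletion removes one edge from both $G$ and $G^A$ simultaneously, so $\phi$ can only weakly decrease. An insertion of $(u,v)$ appends the edge to both $G$ and $G^A$; our algorithm orients it according to Invariant~\ref{inv:up} (toward the higher layer) or, if $u,v$ lie in the same layer, arbitrarily, while $A''$ chooses its own orientation, and the newly inserted edge is bad iff the two orientations disagree. Thus each insertion increases $\phi$ by at most $1$, giving the bound $\Delta(\phi)\le M$. For edge reorientations in $G^A$, each flip of a single edge in $G^A$ (with the $G$-orientation unchanged) toggles whether that edge is bad, contributing at most $1$ to $\phi$; since $A''$ performs amortized $F(n)$ flips per update over $M$ updates, the total is bounded by $F(n)\cdot M$.

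Next I would handle the two internal sources coming from our algorithm's own flips. A light call to \textsc{Rise} is defined to trigger fewer than $d'/2$ flips (ignoring recursion), and across the whole execution the light calls account for $f_l$ flips in total. Each flip is a single-edge reorientation in $G$, and hence increases $\phi$ by at most $1$; summing gives $\Delta(\phi)\le f_l$. Calls to \textsc{Drop} account for $f_p$ flips in total by definition, and the same per-flip argument yields $\Delta(\phi)\le f_p$.

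The step requiring the most care is really only a bookkeeping subtlety: to make the bounds genuinely ``per source,'' I should be explicit that the parameters $M$, $f_l$, $f_p$, and the $F(n)M$ flip count of $A''$ count disjoint classes of atomic events, and that heavy calls to \textsc{Rise} are deliberately omitted from the list because they are the mechanism by which $\phi$ is driven down (as sketched in Section~\ref{sec:ldsover}) rather than up. Beyond this, no non-trivial inequalities are needed; the observation is a direct consequence of the one-edge-at-a-time nature of every event and the definition of $\phi$ as a sum of $0/1$ indicators over edges.
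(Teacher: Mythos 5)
Your proof is correct and matches the reasoning the paper leaves implicit when it labels the statement an Observation: each atomic event (an edge update, a single flip in $G^A$, or a single flip in $G$) toggles the bad/good status of exactly one edge and hence changes $\phi$ by at most $1$, and the four parameters simply count the relevant atomic events. No further comment is needed.
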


The only event missing from the above list is heavy calls to \textsc{Rise}. We will now argue that $\phi$ decreases substantially as a result of this event. Then, we will use these substantial decreases in $\phi$ to pay for the increases in $\phi$ from the other events.

\begin{lemma}\label{lem:heavy} The total decrease in $\phi$ over the whole computation triggered by heavy calls to \textsc{Rise} is at least $f_h/2$. \end{lemma}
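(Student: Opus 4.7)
The plan is to bound the decrease in $\phi$ caused by each heavy call to \textsc{Rise} individually and then sum across all heavy calls. Fix a single heavy call \textsc{Rise}$(v)$ and let $q$ denote the number of edge flips it performs (ignoring recursive calls), so $q \geq d'/2$ by the definition of ``heavy''. Each such flipped edge has orientation $v\to u$ in our data structure immediately before the flip and $u\to v$ immediately after.

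The key observation is to compare these $q$ edges against the orientation maintained by $A''$ at that instant. Since $A''$ produces an orientation of out-degree at most $D$, at most $D$ of the $q$ flipped edges can be oriented $v\to u$ in $G^A$; call these edges \emph{aligned}. The remaining at least $q-D$ flipped edges are oriented $u\to v$ in $G^A$; call these \emph{misaligned}. Before the flips, the aligned edges agree with $G^A$ and the misaligned edges disagree, so the contribution of these $q$ edges to $\phi$ is at least $q-D$. After the flips, the roles swap: the aligned edges become bad and the misaligned edges become good, so the contribution of the same $q$ edges to $\phi$ is at most $D$. No other edge is touched by this single call, so the net decrease in $\phi$ is at least $(q-D) - D = q - 2D$.

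By the choice $\Delta = 16(dk+D)$ we have $d' = \Delta/2 = 8(dk+D) \geq 8D$, hence $q \geq d'/2 \geq 4D$ and $q - 2D \geq q/2$. Summing over all heavy calls gives a total decrease in $\phi$ of at least $\sum_{\text{heavy}} q_i/2 = f_h/2$, as claimed. The main thing to be careful about is that the snapshot of $G^A$ used in the comparison is well-defined: because $A''$ only reacts to updates on the underlying undirected graph and is oblivious to our layered data structure and flips, $G^A$ does not change during a single \textsc{Rise} call, so the orientation of $G^A$ ``at the moment of the heavy call'' is unambiguous and the decrease can be charged cleanly to that call.
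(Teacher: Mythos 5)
Your proof is correct and follows essentially the same approach as the paper's: compare the pre-flip and post-flip states of the flipped edges against $A''$'s orientation, use the bound $D$ on $v$'s out-degree in $G^A$ to show at most $D$ flipped edges were good before and at most $D$ are bad after, and conclude a decrease of at least $q-2D$. The only cosmetic difference is that you close the argument with a per-call bound ($q - 2D \geq q/2$ since $q \geq d'/2 \geq 4D$), whereas the paper sums $|S| - 2D$ over all heavy calls and then bounds the aggregate $f_h - 2Dh \geq f_h/2$ via $h \leq 2f_h/d'$; both are equivalent and rely on the same choice of $d'$.
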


\begin{proof}
Consider a heavy call to \textsc{Rise} on vertex $v$. Let $S$ be the set of edges flipped by this call to \textsc{Rise}. All of the edges in $S$ are flipped towards $v$. Before these flips happen, $v$ has out-degree at least $d'/2$ by the definition of a heavy call to \textsc{Rise}. By Lemma~\ref{lem:ins}, after the edges in $S$ are flipped, $d^+(v)\leq d$. Thus, the number of edges flipped is at least $d'/2-d$. We will use this fact at the end of the proof.

	Before the edges in $S$ are flipped, all are out-going of $v$. Then since $v$ has out-degree at most $D$ in all $G^A_i$, at least $|S|-D$ of these edges are bad before they are flipped. For the same reason, after these flips at most $D$ of the flipped edges are bad. Thus, $\phi$ decreases by at least $|S|-2D$ as a result of flipping the edges in $S$. Therefore, the total decrease in $\phi$ over the whole computation due to heavy calls to \textsc{Rise} is at least the sum of $|S|-2D$ over all heavy calls to \textsc{Rise}, which is at least
	\begin{align*} f_h-2Dh&\geq  f_h-4Df_h/d' \text{ since each heavy call to \textsc{Rise} flips at least $d'/2$ edges}\\&\geq \frac{f_h}{2} \text{ by choice of $d'$}.
\end{align*}
\end{proof}

We derive bounds for $f_l$, $f_p$, and $f_h$, in the following lemmas.

\begin{lemma}\label{lem:l} $l(d'-d)\leq M$. $f_l\leq M$. \end{lemma}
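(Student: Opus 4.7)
The plan is to argue that each light call to \textsc{Rise}$(v)$ must be "paid for" by at least $d' - d$ fresh edge insertions incident to $v$, and that these insertions are disjoint across different light calls; this will immediately yield $l(d'-d) \leq M$, and a simple parameter inequality will then upgrade this to $f_l \leq M$.

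First, I would invoke Lemma~\ref{lem:light}, which says every light call to \textsc{Rise}$(v)$ is triggered by a violation of Invariant~\ref{inv:in}, i.e.\ $d_L^-(v) > d'$ at the moment of the call. Consider the "stay" of $v$ at its current layer (the maximal interval of time since $v$ last arrived at this layer). By Lemma~\ref{lem:ins}(1), at the start of this stay $d_{up}(v) \leq d$, and since in-neighbors in the same layer are up-neighbors, $d_L^-(v) \leq d$ at that moment. By Lemma~\ref{lem:ins}(2), the only way $d_L^-(v)$ can increase during the stay is by the insertion of an edge incident to $v$, with each such insertion increasing it by at most $1$. Since $d_L^-(v)$ must rise from $\leq d$ to $> d'$ within the stay, at least $d' - d$ edge insertions incident to $v$ must occur during the stay and increment $d_L^-(v)$ (any deletions in between only add to this requirement).

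Next, I would charge each light \textsc{Rise}$(v)$ call to the $d'-d$ insertions identified above, and argue the charges are disjoint. Different stays of $v$ take place in disjoint time windows. For insertions across different vertices, note that an inserted edge $(u,w)$ can increase $d_L^-$ of at most one of its endpoints: either $u$ and $w$ are in different layers (in which case the edge is oriented upward and neither endpoint's $d_L^-$ changes), or they share a layer (in which case the orientation makes the edge incoming for exactly one endpoint). Thus each insertion is charged to at most one stay, and since the total number of insertions is at most $M$, we obtain $l(d' - d) \leq M$.

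Finally, for $f_l \leq M$, I would use that a light call by definition triggers strictly fewer than $d'/2$ flips, so $f_l < l \cdot d'/2$. By the choice of parameters $d = 4\alpha$ and $d' = \Delta/2 = 8(dk + D) \geq 8d$, we have $d \leq d'/2$, hence $d'/2 \leq d' - d$. Combining, $f_l \leq l \cdot d'/2 \leq l(d' - d) \leq M$. The one step that requires the most care is verifying that the insertions charged to distinct light calls are genuinely disjoint (especially when a single \textsc{Insert} triggers two candidate \textsc{Rise} checks on its endpoints), but the orientation rule in \textsc{Insert} ensures the insertion raises $d_L^-$ of at most one endpoint, so this is straightforward.
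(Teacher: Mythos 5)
Your proof is correct and follows essentially the same charging argument as the paper: each light call must be preceded by at least $d'-d$ fresh edge insertions incident to $v$ (via Lemmas~\ref{lem:light} and~\ref{lem:ins}), these insertions are disjoint across light calls because each inserted edge increments $d_L^-$ of at most one endpoint, and the parameter choice $d\leq d'/2$ then converts the bound $l(d'-d)\leq M$ into $f_l\leq M$. You spell out the disjointness step in a bit more detail than the paper does, but the overall structure is identical.
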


\begin{proof}	
By Lemma~\ref{lem:light} every light call to \textsc{Rise} is triggered by $d_L^-(v)>d$ for some $v$. By Lemma~\ref{lem:ins}, right after any vertex $v$ is moved to a new layer, $d_L^-(v)\leq d$, and while $v$ remains in this layer, the only way for $d_L^-(v)$ to increase is by the insertion of an edge incident to $v$. Before a light call to \textsc{Rise}($v$), $d_L^-(v)$ must increase to at least $d'$. Thus, every light call to \textsc{Rise} must be preceded by $d'-d$ insertions of edges incident to $v$. Conversely, the insertion of an edge can only increase the in-degree of one vertex. Thus, $l(d'-d)\leq M$. Each light call to \textsc{Rise} flips at most $d'/2$ edges so, $f_l\leq \frac{ld'}{2}$. Combining these two equations we have, $f_l\leq \frac{Md'}{2(d'-d)}\leq M$ by choice of $d'$.
\end{proof}

\begin{lemma}\label{lem:p} $f_p\leq dp\leq \frac{f_h}{4}+M$.\end{lemma}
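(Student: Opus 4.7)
My plan is to establish the two inequalities separately. For $f_p \leq dp$, I will bound the number of edge flips performed by a single call to \textsc{Drop}($v$): when $v$ descends from $L_{old}$ to $L_{max}(v)$, the flipped set $S$ consists of in-neighbors of $v$ in layers strictly between $L_{max}(v)$ and $L_{old}$. After the move, every vertex of $S$ sits strictly above $v$'s new layer and therefore contributes to $d_{up}(v)$; Lemma~\ref{lem:ins} caps this by $d$, so $|S| \leq d$. Since each drop call lowers its vertex by at least one layer, the number of drop calls is at most $p$, and summing gives $f_p \leq d \cdot (\text{\# drop calls}) \leq dp$.

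For $dp \leq f_h/4 + M$, the plan is to charge downward motion to prior upward motion via the simple potential $\Psi = \sum_v L(v)$. A rise by $j$ layers increases $\Psi$ by $j$ and a drop by $j$ layers decreases $\Psi$ by $j$, so writing $q$ for the total upward layer-movement summed over all rises we have $\Psi_{\text{end}} - \Psi_{\text{start}} = q - p$. Since the algorithm starts with the empty graph (every vertex in $L_1$, giving $\Psi_{\text{start}} = n$) and layer labels are always $\geq 1$, we conclude $p \leq q$. I will then split $q = q_h + q_l$ according to whether the responsible rise is heavy or light, and use that any single rise moves its vertex by at most $k$ layers to obtain $q_h \leq hk$ and $q_l \leq lk$.

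The remaining step is algebraic and uses the calibration $d' = \Delta/2 = 8(dk + D)$. From the definition of a heavy rise, $h \leq 2 f_h / d'$, so $d\, q_h \leq 2 d k f_h / d' \leq f_h/4$. Lemma~\ref{lem:l} gives $l \leq M/(d' - d)$, and since $d' - d \geq 7 dk$ we get $d\, q_l \leq dk M / (d' - d) \leq M$. Adding yields $dp \leq d q_h + d q_l \leq f_h/4 + M$. The main conceptual hurdle I anticipate is recognizing that, unlike the bad-edge potential $\phi$ used elsewhere in the section, the right handle on $p$ is simply $\sum_v L(v)$; once that is identified, the choice $d' = \Theta(dk + D)$ is seen to be exactly the one that makes the heavy-rise contribution absorb into $f_h/4$ and the light-rise contribution (paid for by edge updates through Lemma~\ref{lem:l}) absorb into $M$.
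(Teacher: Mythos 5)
Your proof is correct and follows essentially the same route as the paper: bound each \textsc{Drop} call's flips by $d$ via Lemma~\ref{lem:ins} to get $f_p \leq dp$, then bound $p$ by the total upward layer-movement $rk$, and finish with Lemma~\ref{lem:l} and the calibration of $d'$. The only stylistic difference is that you make explicit, via the potential $\Psi=\sum_v L(v)$, the step ``total downward motion $\leq$ total upward motion,'' which the paper leaves implicit when it asserts $p\leq rk$.
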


\begin{proof} By Lemma~\ref{lem:ins}, right after the call to \textsc{Drop}($v$), $d_{up}(v)\leq d$. Then since \textsc{Drop}($v$) only flips edges incident to $v$ whose other endpoint is in a layer above $v$, any call to \textsc{Drop}($v$) flips at most $d$ edges. Thus, $f_p\leq dp$. Furthermore, every call to \textsc{Rise}($v$) moves $v$ up by at most $k$ layers, so $p\leq rk$. Thus we have,
\begin{align*}
f_p&\leq drk \\
&= dk(h+l)\\
&\leq dk\left(\frac{2f_h}{d'}+l\right) \text{ since each heavy call to \textsc{Rise} flips at least $d'/2$ edges}\\
&\leq dk\left(\frac{2f_h}{d'}+\frac{M}{d'-d}\right) \text{ by Lemma~\ref{lem:l}}\\
&\leq f_h/4+M \text{ by choice of $d'$}
\end{align*}
\end{proof}

\begin{lemma}
$f_h=O(F(n)M)$
\end{lemma}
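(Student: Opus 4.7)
The plan is to use the potential function $\phi$ and combine all of the preceding lemmas into one accounting argument. Since $\phi$ is always nonnegative and starts at $\phi(0)=0$, the total decrease in $\phi$ over the whole computation is at most the total increase. So the strategy is to sum up all sources of increase (bounded via Observation~\ref{obs:phi} together with Lemmas~\ref{lem:l} and~\ref{lem:p}) and compare against the decrease supplied by Lemma~\ref{lem:heavy}.

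More concretely, by Observation~\ref{obs:phi}, the total increase in $\phi$ is at most
\[
M + F(n) M + f_l + f_p,
\]
where the four terms account for edge updates, reorientations performed by $A''$, light \textsc{Rise} calls, and \textsc{Drop} calls, respectively. By Lemma~\ref{lem:heavy}, the total decrease is at least $f_h/2$. Hence
\[
\tfrac{f_h}{2} \;\le\; M + F(n) M + f_l + f_p.
\]

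Now I would substitute the bounds $f_l \le M$ (Lemma~\ref{lem:l}) and $f_p \le f_h/4 + M$ (Lemma~\ref{lem:p}) into the right-hand side, obtaining
\[
\tfrac{f_h}{2} \;\le\; 3M + F(n) M + \tfrac{f_h}{4}.
\]
Rearranging gives $f_h/4 \le 3M + F(n)M$, so $f_h = O(F(n) M)$, as desired (assuming $F(n) \ge 1$, which is w.l.o.g.).

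The only subtle point — and the reason the argument closes — is that Lemma~\ref{lem:p}'s bound on $f_p$ is itself stated in terms of $f_h$. This is precisely why we can absorb $f_p$ back into the left-hand side: the $f_h/4$ coming from $f_p$ is strictly less than the $f_h/2$ decrease supplied by heavy rises, leaving a positive coefficient on $f_h$. The choice $d' = 8(dk + D)$ (so that $\Delta = 16(dk+D)$) in Section~\ref{sec:alg} was calibrated precisely to make these constants work out, so no new parameter tuning is needed here; all the algebra has already been set up by the earlier lemmas. Thus the proof is essentially a one-line potential argument once the preceding accounting is in place.
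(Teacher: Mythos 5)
Your proof is correct and follows essentially the same approach as the paper: a potential-function accounting argument that lower-bounds the total increase in $\phi$ by the total decrease, then plugs in Observation~\ref{obs:phi}, Lemma~\ref{lem:heavy}, Lemma~\ref{lem:l}, and Lemma~\ref{lem:p} and solves for $f_h$. The paper's version is nearly line-for-line the same computation.
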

\begin{proof}
We use the potential function: $\phi$ is initially 0 and is never negative so the total increase in $\phi$ must be at least the total decrease in $\phi$. Therefore, by Observation~\ref{obs:phi} and Lemma~\ref{lem:heavy},
$M+F(n)M+f_l+f_p\geq f_h/2$.
Then, by Lemmas~\ref{lem:l} and \ref{lem:p}, we have $M+F(n)M+M+f_h/4+M\geq f_h/2$, which completes the proof.
\end{proof}

\begin{lemma}\label{lem:amflips}
The amortized number of flips per update is $O(F(n))$.
\end{lemma}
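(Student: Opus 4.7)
The plan is to observe that this lemma is an immediate bookkeeping consequence of the preceding three lemmas. Since every flip is triggered either by a light call to \textsc{Rise}, a heavy call to \textsc{Rise}, or a call to \textsc{Drop}, the total flip count decomposes as $f = f_l + f_h + f_p$. So once $f_l$, $f_h$, and $f_p$ are each bounded individually, summing the bounds and dividing by $M$ gives the amortized bound.

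Concretely, I would invoke Lemma~\ref{lem:l} to get $f_l \leq M$, Lemma~\ref{lem:p} to get $f_p \leq f_h/4 + M$, and the immediately preceding lemma to get $f_h = O(F(n) \cdot M)$. Adding these three inequalities yields $f \leq 2M + (5/4)f_h = O(F(n)\cdot M)$, since the $F(n) \geq 1$ term dominates the additive $M$'s. Dividing by $M$ gives amortized $O(F(n))$ flips per update.

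There is essentially no obstacle here: the real work has already been done in establishing Lemmas~\ref{lem:l} and \ref{lem:p} and in the potential-function argument that gave $f_h = O(F(n)M)$. In particular, the nontrivial step was exploiting heavy calls to \textsc{Rise} to force a large drop in $\phi$ (Lemma~\ref{lem:heavy}), which was then paired with the easy upper bounds on $\phi$'s increases from edge updates, reorientations in $G^A$, light calls to \textsc{Rise}, and calls to \textsc{Drop} (Observation~\ref{obs:phi}). This lemma is just the final aggregation of those per-event bounds into a single amortized statement.
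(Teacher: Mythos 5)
Your proposal is essentially the same as the paper's proof: both decompose $f = f_l + f_h + f_p$, plug in the bounds from Lemma~\ref{lem:l}, Lemma~\ref{lem:p}, and the $f_h = O(F(n)M)$ lemma, and observe the sum is $O(F(n)M)$. No gaps; it is just a direct aggregation of the previously established bounds.
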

\begin{proof}
\begin{align*}
f&= f_p+f_l+f_h\\&\leq f_h/4+M+M+f_h \text{ by Lemmas~\ref{lem:l} and~\ref{lem:p}}\\
&=O(F(n)M) \text{ by Lemma~\ref{lem:heavy}}
\end{align*}
\end{proof}

\subsection{Update time bound}

In this section, we will show that our algorithm runs in amortized time $O(F(n))$ per update.

Each vertex $v$ keeps track of the following information:
\begin{itemize}
\item $L(v)$
\item $d^+(v)$ and the set $N^+(v)$ of $v$'s out-neighbors
\item $d_L^-(v)$ and the set $N_L^-(v)$ of $v$'s in-neighbors in $L(v)$.
\item For each layer $L_i$ lower than $L(v)$, the set $N_i(v)$ of $v$'s neighbors in that layer and the number $d_i(v)$ of them.
\end{itemize}


We require that insertion and deletion of elements to and from the subsets of vertices that we maintain both take constant time. This is possible, for example, by using an array of length $n$ and flipping the bit corresponding to the inserted or deleted vertex.

\begin{lemma}
\textsc{Insert}(u,v) runs in time $O(1)$ (ignoring calls to \textsc{Rise}).
\end{lemma}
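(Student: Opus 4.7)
The plan is a direct step-by-step inspection of the pseudocode for \textsc{Insert}(u,v), invoking the $O(1)$ set-insertion assumption stated just before the lemma together with the fact that every relevant counter ($L(\cdot)$, $d^+(\cdot)$, $d_L^-(\cdot)$, $d_i(\cdot)$) is stored explicitly per vertex.

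First I would argue that determining the orientation is $O(1)$: since $L(u)$ and $L(v)$ are both stored, one comparison decides whether the endpoints lie in the same layer and, if not, which one is higher. The only remaining work inside \textsc{Insert} is (i) physically adding the edge to the graph representation, (ii) updating the neighborhood information for the two endpoints, and (iii) testing the two Boolean conditions that guard the recursive calls to \textsc{Rise}. For (ii) I would break into two cases. If the edge is oriented from $u$ to $v$ with $L(u) = L(v)$, we insert $v$ into $N^+(u)$ and increment $d^+(u)$, and symmetrically insert $u$ into $N_L^-(v)$ and increment $d_L^-(v)$; each step is $O(1)$ by assumption. If instead $L(u) < L(v)$ and the edge is oriented $u\to v$, we insert $v$ into $N^+(u)$ and increment $d^+(u)$, and we insert $u$ into $N_{L(u)}(v)$ and increment $d_{L(u)}(v)$. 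In either case only a constant number of set operations and counter updates are performed.

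For (iii), the four guards $d^+(u) > d'$, $d_L^-(u) > d'$, $d^+(v) > d'$, $d_L^-(v) > d'$ are each a single comparison against a stored counter, hence $O(1)$. Since the lemma explicitly excludes the work done inside any triggered \textsc{Rise} call, the total time spent in \textsc{Insert} is a constant number of pointer updates, counter updates, and comparisons. There is no real obstacle here; the lemma is essentially a sanity check that the data-structure layout described just above supports an insertion in constant time, and the proof is simply a line-by-line verification of the pseudocode.
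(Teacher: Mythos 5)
Your proof is correct and follows the same approach as the paper: a direct line-by-line inspection of \textsc{Insert}, relying on the stored per-vertex counters and $O(1)$ set operations, and noting that the guard conditions are single comparisons. Your version is slightly more detailed (the explicit case analysis on orientation) but otherwise identical in substance.
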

\begin{proof}
In \textsc{Insert}(u,v) we update the stored information of both $u$ and $v$ in constant time simply by incrementing the appropriate counters and adding to the appropriate sets. Then we compute whether either $d^+(u)$ or $d_L^-(u)$ exceeds $d'$ and the same for $v$. These comparisons take $O(1)$ time.
\end{proof}

\begin{lemma}\label{lem:lmax} Computing whether $L_{max}(v)<L(v)$ takes time $O(1)$.\end{lemma}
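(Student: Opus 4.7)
The plan is to show that the question ``is $L_{max}(v)<L(v)$?'' reduces to a single inequality involving three quantities that the algorithm already maintains in $v$'s local data, thereby giving an $O(1)$ check.

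First, I would observe a monotonicity property of the hypothetical up-degree. Let $n_i$ denote the number of neighbors of $v$ in layer $L_i$ (independently of orientation). If $v$ were placed in layer $L_j$, then its up-degree would be $\sum_{i\geq j} n_i$. This sum is monotonically non-decreasing as $j$ decreases, since moving $v$ to a lower layer only adds neighbors into the ``up'' region. Consequently, $L_{max}(v)<L(v)$ holds if and only if hypothetically placing $v$ in the single layer just below, $L(v)-1$, already satisfies the up-degree bound $d$; if this fails, then moving $v$ even lower makes the sum only larger.

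Next, I would express the hypothetical up-degree at layer $L(v)-1$ in terms of stored quantities. By Invariant~\ref{inv:up}, every neighbor of $v$ in a layer strictly higher than $L(v)$ contributes to $d^+(v)$, and every neighbor of $v$ in layer $L(v)$ is either an out-neighbor of $v$ (contributing to $d^+(v)$) or an in-neighbor of $v$ within $L(v)$ (contributing to $d_L^-(v)$). Thus the current up-degree equals $d^+(v)+d_L^-(v)$, which accounts for all neighbors of $v$ in layers $\geq L(v)$. The only additional neighbors that become ``up'' when $v$ drops one level are precisely the neighbors of $v$ in layer $L(v)-1$, whose count $d_{L(v)-1}(v)$ is maintained explicitly in $v$'s per-layer records.

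Therefore the algorithm only needs to test the inequality
\[
d^+(v) + d_L^-(v) + d_{L(v)-1}(v) \;\leq\; d,
\]
which accesses three stored counters and performs a single comparison, hence runs in $O(1)$ time. I do not anticipate a genuine obstacle here; the only subtlety is to verify the monotonicity reduction (step one) so that a single lookup at layer $L(v)-1$ suffices rather than a scan over all lower layers, and to confirm via Invariant~\ref{inv:up} that $d^+(v)+d_L^-(v)$ correctly captures the current $d_{up}(v)$ despite the data structure not separately recording neighbor counts in layers above $L(v)$.
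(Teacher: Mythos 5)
Your proposal is correct and follows the same route as the paper's proof: reduce the question to checking whether $d^+(v)+d_L^-(v)+d_{L(v)-1}(v)\leq d$, i.e., the up-degree $v$ would have in the layer just below $L(v)$, which is an $O(1)$ lookup from stored counters. You simply spell out the monotonicity justification that the paper leaves implicit.
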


\begin{proof}
$L_{max}(v)<L(v)$ if and only if the degree of $v$ to vertices in layers at least as high as the layer just below $L(v)$ is at most $d$ i.e. if $d^+(v)+d_L^-(v)+d_{i-1}(v)\leq d$ where $i$ is such that $L_i=L(v)$. This comparison takes $O(1)$ time.
\end{proof}

\begin{lemma}
\textsc{Delete}(u,v) runs in time $O(1)$ (ignoring calls to \textsc{Drop}).
\end{lemma}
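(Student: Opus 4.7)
The plan is to mirror the proof of the analogous lemma for \textsc{Insert}, since \textsc{Delete}$(u,v)$ is structurally similar: it modifies the stored information at $u$ and $v$, then performs two checks that potentially trigger recursive calls to \textsc{Drop} (which are excluded from the count). So the whole argument reduces to showing that (i) the stored information at $u$ and $v$ can be updated in $O(1)$ time when edge $(u,v)$ is removed, and (ii) the two conditions ``$L_{max}(u)<L(u)$'' and ``$L_{max}(v)<L(v)$'' can each be evaluated in $O(1)$ time.

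First I would handle the information update. The edge $(u,v)$ has some orientation in the data structure, say oriented from $u$ to $v$ (the other case is symmetric). Then removing this edge requires: decrementing $d^+(u)$ and removing $v$ from $N^+(u)$; and, depending on whether $v$ and $u$ are in the same layer, either decrementing $d_L^-(v)$ and removing $u$ from $N_L^-(v)$ (if $L(u)=L(v)$) or decrementing $d_i(v)$ and removing $u$ from $N_i(v)$ where $L_i = L(u)$ (if $L(u)<L(v)$, which by Invariant~\ref{inv:up} is the only remaining possibility when the edge is oriented from $u$ to $v$). Each counter decrement is $O(1)$, and by the assumption stated just before the \textsc{Insert} lemma (that insertion and deletion of elements to and from the maintained subsets take constant time), each set removal is also $O(1)$.

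Next, I would invoke Lemma~\ref{lem:lmax} to conclude that testing $L_{max}(u)<L(u)$ and $L_{max}(v)<L(v)$ each take $O(1)$ time. Summing the constant-time updates and the two constant-time comparisons gives the desired $O(1)$ bound, excluding the cost of the (possibly triggered) calls to \textsc{Drop}.

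I do not anticipate a serious obstacle here; the only subtle point is making sure every field in the stored information list (namely $L(\cdot)$, $d^+(\cdot)$, $N^+(\cdot)$, $d_L^-(\cdot)$, $N_L^-(\cdot)$, and the per-lower-layer counters $d_i(\cdot)$ and sets $N_i(\cdot)$) is correctly updated for whichever of the two possible orientations of $(u,v)$ holds, and that the case analysis is exhaustive under Invariant~\ref{inv:up}. A brief remark invoking Invariant~\ref{inv:up} to rule out the impossible case $L(u)>L(v)$ with orientation $u\to v$ should suffice.
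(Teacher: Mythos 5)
Your proof is correct and follows the same approach as the paper: update the constant number of counters/sets at $u$ and $v$ in $O(1)$ time, then invoke Lemma~\ref{lem:lmax} to test $L_{max}(u)<L(u)$ and $L_{max}(v)<L(v)$ in $O(1)$ time each. Your case analysis on the edge's orientation (using Invariant~\ref{inv:up}) is a more detailed spelling-out of the paper's terse ``decrementing the appropriate counters and deleting from the appropriate sets,'' but the argument is the same.
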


\begin{proof}\textsc{Delete($u,v$)} updates the stored information of both $u$ and $v$ in constant time simply by decrementing the appropriate counters and deleting from the appropriate sets. Then \textsc{Delete(u,v)} computes whether $L_{max}(v)<L(v)$ and whether $L_{max}(u)<L(u)$. This takes $O(1)$ time by Lemma~\ref{lem:lmax}.\end{proof}

\begin{lemma}\label{lem:rise}
Ignoring recursive calls, \textsc{Rise}($v$) runs in time $d_{up}(v)$ with respect to $v$'s layer immediately before the call to \textsc{Rise}($v$).
\end{lemma}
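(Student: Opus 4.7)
Ignoring recursive calls, the body of \textsc{Rise}($v$) does four things: compute $L_{max}(v)$, update $L(v)$, form the set $S$ and flip its edges while refreshing the stored neighborhoods at both endpoints, and scan $S$ to fire recursive triggers. I would bound each of these by $O(d_{up}(v))$.

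The first key observation is that, immediately before the call, the \emph{up-neighbors} of $v$ (neighbors in layers $\geq L_{old}$) are exactly $N^+(v)\cup N_L^-(v)$: by Invariant~\ref{inv:up} no in-neighbor of $v$ can lie in a layer strictly above $L_{old}$, and all out-neighbors lie in layers $\geq L_{old}$. This set is available from the stored data, has size exactly $d_{up}(v)$, and for each element the layer is retrievable in $O(1)$. I would then compute $L_{max}(v)$ by bucketing the up-neighbors by layer into an array of size $k$ and scanning the layer indices from the top down, accumulating the count until it first exceeds $d$; during the same pass the set $S$ is extracted as the subset of out-neighbors whose layer lies in $[L_{old},L_{max}(v)]$. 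This costs $O(d_{up}(v)+k)$.

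For each flipped edge $(u,v)\in S$ the data-structure maintenance is $O(1)$: on $v$'s side $u$ moves from $N^+(v)$ into $N_{L(u)}(v)$ or into $N_L^-(v)$ depending on whether $L(u)<L_{max}(v)$ or $L(u)=L_{max}(v)$, and on $u$'s side $v$ moves from $N_L^-(u)$ or $N_{L_{old}}(u)$ into $N^+(u)$. The up-neighbors whose edges are \emph{not} flipped also need a single bucket reassignment on exactly one side---in-neighbors of $v$ in $L_{old}$ need $N_L^-(v)\to N_{L_{old}}(v)$, and out-neighbors in layers strictly above $L_{max}(v)$ need $N_{L_{old}}(u)\to N_{L_{max}(v)}(u)$---each at $O(1)$ cost. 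All such vertices are up-neighbors of $v$, so the total cost is $O(d_{up}(v))$. Scanning $S$ to test the recursive triggers adds $O(|S|)\le O(d_{up}(v))$.

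It remains to absorb the $O(k)$ additive term from the layer scan. Every invocation of \textsc{Rise}($v$), whether triggered from \textsc{Insert} or as a recursive call inside another \textsc{Rise}, is preceded by a violation forcing $d^+(v)>d'$ or $d_L^-(v)>d'$, and in either case $d_{up}(v)>d'\geq 8dk\geq k$ by the choice $\Delta=16(dk+D)$. Hence $O(d_{up}(v)+k)=O(d_{up}(v))$, and the total non-recursive work is $O(d_{up}(v))$. The main subtlety is verifying that (i) no neighbor of $v$ in a layer strictly below $L_{old}$ needs to be touched---true because the orientation of its edge to $v$ is unchanged and it remains both in the same $N_j(v)$ bucket on $v$'s side and in $N^+(u)$ on $u$'s side---and (ii) the $O(k)$ bucket-scan overhead can always be charged against the pre-call invariant violation.
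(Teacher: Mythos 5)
Your proof is correct and follows essentially the same approach as the paper's: scan $N^+(v)\cup N_L^-(v)$ (which has size exactly $d_{up}(v)$ by Invariant~\ref{inv:up}) to compute $L_{max}(v)$, build $S$, flip edges, refresh the stored data at affected endpoints, and test recursive triggers, all while observing that neighbors strictly below $L_{old}$ need no updates. Your explicit handling of the $O(k)$ bucket-scan overhead (absorbed because every call to \textsc{Rise} is preceded by $d_{up}(v)>d'\geq 8dk>k$) is a nice detail the paper glosses over; the one small item you omit is updating the graphs induced by $v$'s old and new layers, which the paper also charges to the same $O(d_{up}(v))$ budget and which does not change the bound.
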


\begin{proof}
It takes time $O(d_{up}(v))$ to scan the set $N^+(v)\cup N_L^-(v)$, which suffices to determine $L_{max}(v)$, build the set $S$ (defined in Algorithm~\ref{alg}), flip the appropriate edges, and determine for each $u\in S$ whether $d^+(u)>d$.

We must also update the stored information for $v$ and all vertices in $N^+(v)\cup N_L^-(v)$. This can be done in $O(d_{up}(v))$ time by incrementing/decrementing the appropriate counters and editing the appropriate sets. Importantly, every vertex in a layer below $v$ immediately before the call to \textsc{Rise}($v$) does not need to update its stored information because vertices only keep track of the exact layer of their neighbors on lower layers. Additionally, $v$ does not need to update any of its information concerning its neighbors in lower layers.

Additionally, when $v$ changes layer we update the graph induced by its old and new layers. All of $v$'s incident edges to vertices in its old layer are removed from the graph induced by its old layer and all of $v$'s incident edges to vertices in its new layer are added to the graph induced by its new layer. There are at most $d_{up}(v)$ edges (with respect to $v$'s layer before being moved).
\end{proof}

\begin{lemma}\label{lem:risetime} \textsc{Rise}($v$) runs in amortized $O(F(n))$ time.\end{lemma}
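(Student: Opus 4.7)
The plan is to show that the total time spent on all \textsc{Rise} calls over a sequence of $M$ updates is $O(F(n) \cdot M)$, from which the amortized bound follows. By Lemma~\ref{lem:rise}, a single (non-recursive) call to \textsc{Rise}($v$) costs $O(d^+(v)+d_L^-(v))$, where both quantities are measured just before the call. I will bound the contribution of each of these two terms separately, since they must be amortized in rather different ways: the first is paid by edge flips, while the second is paid by edge insertions.

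For the $d^+(v)$ contribution I will connect it to the flip count $|S|$ of this call. Decompose $d^+(v)$ just before the rise as $a+b+c$, where $a$ counts out-neighbors in $L_{old}$, $b$ counts neighbors in layers $L_{old}{+}1$ through $L_{max}(v)$ (all of which are out-neighbors by Invariant~\ref{inv:up}), and $c$ counts out-neighbors in layers strictly above $L_{max}(v)$. By construction $|S|=a+b$, so $d^+(v)=|S|+c$. The out-neighbors counted by $c$ are unaffected by the rise and, after the rise, all lie in layers strictly above $v$'s new layer $L_{max}(v)$; hence by Lemma~\ref{lem:ins}(1) we have $c\le d_{up}(v)\le d$. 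Therefore $d^+(v)\le |S|+d$, and summing over all \textsc{Rise} calls gives $\sum d^+(v)\le f+dr$, where $r$ is the total number of \textsc{Rise} calls.

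For the $d_L^-(v)$ contribution I apply Lemma~\ref{lem:ins}: right after $v$'s previous move we had $d_L^-(v)\le d$, and between two consecutive moves of $v$ the only way for $d_L^-(v)$ to increase is an edge insertion incident to $v$. Hence at the moment of \textsc{Rise}($v$) we have $d_L^-(v)\le d+I(v)$, where $I(v)$ is the number of edge insertions incident to $v$ since its last move. Because the time intervals between successive moves of a fixed vertex are disjoint, summing $I(v)$ over all \textsc{Rise} calls yields at most $2M$, so $\sum d_L^-(v)\le dr+2M$.

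It remains to bound $dr$. Writing $r=h+l$, Lemma~\ref{lem:l} gives $l\le M/(d'-d)$, and the heavy-call observation in Lemma~\ref{lem:heavy} gives $h\le 2f_h/d'\le 2f/d'$. With the chosen parameters $d=4\alpha$ and $d'=\Delta/2=8(dk+D)$ we have $d/d'\le 1/8$, so $dr\le O(M+f)$. Combining the three estimates, the total \textsc{Rise} work is $O(f+M)$, which by Lemma~\ref{lem:amflips} is $O(F(n)\cdot M)$, giving amortized $O(F(n))$ per update. The main obstacle is the $d_L^-(v)$ term: those in-edges are scanned but never flipped during \textsc{Rise}, so they cannot be charged to the flip count $f$; the crucial insight is that Lemma~\ref{lem:ins}(2) lets us charge them directly to the insertion sequence, which is exactly what makes the amortization go through.
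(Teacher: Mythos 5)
Your proof is correct, and it arrives at the same conclusion via the same key ingredients (Lemmas~\ref{lem:rise}, \ref{lem:ins}, \ref{lem:l}, \ref{lem:amflips}, and the heavy/light dichotomy), but the bookkeeping for the $d_L^-(v)$ term differs from the paper's in a way worth noting. The paper argues that at the very moment any \textsc{Rise}$(v)$ is called, $d_L^-(v)\le d'+1$ --- relying on the fact that \textsc{Rise} is triggered \emph{immediately} once the threshold is crossed and that $d_L^-(v)$ cannot creep upward during intervening recursion (Lemma~\ref{lem:ins}(2)) --- and then splits the per-call cost $O(g+d+d'+1)$ into the heavy case (cost $O(g)$, charged to flips) and the light case (cost $O(d')$, total $O(ld')=O(M)$). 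You instead bound $d_L^-(v)\le d+I(v)$ and charge $\sum I(v)\le 2M$ directly to the insertion sequence, which sidesteps any reasoning about the exact value of $d_L^-(v)$ at the instant \textsc{Rise} fires; the price is that you must separately bound $dr$, which you do by $r=h+l$, $l\le M/(d'-d)$ from Lemma~\ref{lem:l}, and $h\le 2f_h/d'$ from the heavy-call flip floor. Both routes yield total \textsc{Rise} work $O(f+M)=O(F(n)M)$. Your explicit $a+b+c$ decomposition of $d^+(v)$ is exactly the paper's observation $g\ge d^+(v)-d$ unpacked, so that part is the same argument in more detailed notation. One small attribution nit: the inequality $h\le 2f_h/d'$ is not the \emph{statement} of Lemma~\ref{lem:heavy} but an observation used inside its proof (and again in Lemma~\ref{lem:p}); it follows directly from the definition of a heavy call. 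Overall the proposal is sound, and the direct insertion-charging for $d_L^-$ is a clean alternative to the paper's per-call $d'+1$ bound.
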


\begin{proof}
By Lemma~\ref{lem:rise}, if we ignore recursive calls, each call to \textsc{Rise}($v$) takes time $O(d_{up}(v))$ with respect to $v$'s layer immediately before the call to \textsc{Rise}($v$).

We claim that when \textsc{Rise}($v$) is called, $d_L^-(v)\leq d'+1$. By Lemma~\ref{lem:ins}, the only operation that can trigger $d_L^-(v)$ to exceed $d'$ is an edge insertion. If such an edge insertion happens, \textsc{Rise}($v$) is immediately called at which point $d_L^-(v)=d'+1$.

All terms in the following inequalities are with respect to $v$'s layer immediately before the call to \textsc{Rise}($v$). Let be $g$ be the number of edges flipped in the call to \textsc{Rise}($v$) (ignoring recursive calls). Then $g\geq d^+(v)-d$ since the out-degree of $v$ after the edge flips is at most $d$.
\begin{align*}
d_{up}(v)&=d^+(v)+d_L^-(v)\\
&\leq d^+(v)+d'+1\text{\hspace{2mm} since $d_L^-(v)\leq d'+1$}\\
&\leq g+d+d'+1\text{\hspace{2mm} since $g\geq d^+(v)-d$.}
\end{align*}
Thus, if \textsc{Rise}($v$) is a heavy call then $d_{up}(v)=O(g)$. By Lemma~\ref{lem:amflips}, the amortized number of flips per update is $O(F(n))$ so heavy calls to \textsc{Rise} run in amortized time $O(F(n))$. On the other hand, if \textsc{Rise}($v$) is a light call, then $d_{up}(v)=O(d')$. Thus, the total time for light calls to \textsc{Rise} is
\begin{align*}
O(ld')&=O(\frac{Md}{d'-d})\text{ by Lemma~\ref{lem:l}}\\
& = O(M) \text{ by choice of $d'$}
\end{align*} so the amortized time for light calls to \textsc{Rise} is $O(1)$.
\end{proof}

\begin{lemma}\label{lem:droptime} \textsc{Drop} runs in amortized $O(F(n))$ time.\end{lemma}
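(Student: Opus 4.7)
The plan is to parallel the analysis of Lemma~\ref{lem:risetime}. First, I will bound the cost of a single call $\textsc{Drop}(v)$, ignoring recursive subcalls, by $O(d + (L_{old} - L_{max}(v)))$. Locating $L_{max}(v)$ is done by scanning the stored counters $d^+(v)$, $d_L^-(v)$ and $d_i(v)$ from layer $L_{old}$ downward until the running up-degree count would exceed $d$; this costs $O(L_{old} - L_{max}(v))$. The remaining work --- assembling $S$ and $S^+$, flipping the edges of $S$ away from $v$, patching the induced subgraphs of $v$'s old and new layers, updating the counters and neighbor sets for $v$ (including discarding the $N_i(v)$ entries for the layers that are now above $v$) and for each $u \in S^+$, and applying Lemma~\ref{lem:lmax} to test $L_{max}(u)<L(u)$ for each such $u$ --- is all linear in $|S^+|$ plus the degree of $v$ in its new layer, both of which I will bound by $d$.

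The crucial step is to argue $|S^+|\leq d$ (and, alongside it, that the new-layer patching cost $d_{L_{max}(v)}(v)$ is at most $d$). By construction every vertex of $S^+$ sits in a layer strictly above $L_{max}(v)$, which is $v$'s layer immediately after the drop; hence $S^+$ is contained in the set of up-neighbors of $v$ in the post-drop configuration. Lemma~\ref{lem:ins}(1) then forces $|S^+|\leq d_{up}(v)\leq d$, and the same post-move bound controls $d_L^-(v)$, giving the corresponding bound on the patching cost. This is the downward analog of the up-degree bound that drove the running-time proof for \textsc{Rise}: there we charged the pre-move up-degree, here we charge the post-move up-degree.

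Summing over all \textsc{Drop} invocations in the $M$-update sequence, the $(L_{old}-L_{max}(v))$ contributions telescope to $p$, and since each call lowers $v$ by at least one layer the number of calls is at most $p$, so the $d$-per-call terms add up to at most $dp$. The overall work is therefore $O(dp)$; plugging in Lemma~\ref{lem:p} gives $dp\leq f_h/4 + M$, and the already-established bound $f_h = O(F(n)M)$ yields $O(F(n)M)$ total work, i.e.\ amortized $O(F(n))$ per update. The one subtlety I expect to tread carefully on is the bookkeeping bound itself: only $v$ and the vertices in $S^+$ (together with $v$'s neighbors in its new layer, which are covered by the patching cost) need per-vertex modifications, because neighbors of $v$ lying strictly below the new $L(v)$ record their relationship to $v$ only implicitly through $v$'s own $N_i$/$d_i$ fields --- the same trick already used inside the proof of Lemma~\ref{lem:rise}.
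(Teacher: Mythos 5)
Your decomposition and final bound match the paper's proof: charge each call $O(L_{old}-L_{max}(v))$ for locating the target layer plus $O(d)$ for everything else, sum the first kind of contribution to $p$ and the second to $dp$, and close with Lemma~\ref{lem:p} together with $f_h = O(F(n)M)$. The one error is in your final ``subtlety'' claim that only $v$, the vertices of $S^+$, and $v$'s new-layer neighbors require per-vertex modifications. Your observation correctly rules out neighbors in layers \emph{below} $L_{max}(v)$, but it overlooks $v$'s out-neighbors sitting in layers strictly above $L_{old}+1$: each such $u$ stores $v$ in its $N_{L_{old}}(u)$ set and in the counter $d_{L_{old}}(u)$, and those fields must be migrated to index $L_{max}(v)$ when $v$ drops. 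The paper accounts for this by updating all of $N^+(v)\cup N_L^-(v)$ with respect to $v$'s new layer. Since $|N^+(v)\cup N_L^-(v)| = d_{up}(v) \le d$ immediately after the move (by Lemma~\ref{lem:ins}), your $O(d)$ per-call bound --- and hence the amortized $O(F(n))$ conclusion --- survives; only the enumeration of touched vertices needs fixing.
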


\begin{proof} \textsc{Drop}($v$) begins by computing $L_{max}(v)$. Let $j$ be such that $L_j=L(v)$. $L_{max}(v)$ can be caluculated by finding the value $t$ such that $d_{up}(v)+\sum_{i=1}^t d_{{j-i}}(v)\leq d$ but $d_{up}(v)+\sum_{i=1}^{t+1} d_{{j-i}}(v)> d$.  The time spent doing this calculation is proportional to the number of layers that $v$ moves in this call to \textsc{Drop}($v$). Thus, the time spent on these calculations throughout the whole computation is $O(p)=O(f_h/4+M)$ by Lemma~\ref{lem:p}, which is $O(F(n)M)$ by Lemma~\ref{lem:heavy}. 

After moving $v$ to layer $L_{max}(v)$, \textsc{Drop}($v$) builds the sets $S$ and $S^+$ (defined in Algorithm~\ref{alg}) and flips the appropriate edges. To build the sets $S$ and $S^+$, we scan $N^+(v)$ and $N_i(v)$ for the appropriate layers $L_i$. This takes constant work for each of the $O(d)$ elements in $S$ and $S^+$ plus constant work for each $N_i(v)$ scanned. The number of $N_i(v)$ scanned is the number of layers that $v$ moves in this call to \textsc{Drop}($v$). From the calculations in the previous paragraph, scanning these $N_i(v)$ takes total time $O(F(n)M)$ over the whole computation. Additionally by Lemma~\ref{lem:lmax}, determining whether $L_{max}(u)<L(u)$ for each $u\in S^+$ takes $O(d)$ time (constant time for each vertex $u$).

We must also update the stored information for $v$ and all vertices in $N^+(v)\cup N_L^-(v)$ (with respect to $v$'s new layer). This can be done in $O(d)$ time by incrementing/decrementing the appropriate counters and editing the appropriate sets. Importantly, every vertex in a layer below $v$'s new layer does not need to update its stored information because vertices only keep track of the exact layer of their neighbors on lower layers. Additionally, $v$ does not need to update any of its information concerning its neighbors in lower layers.

Additionally, when $v$ changes layer we update the graph induced by its old and new layers. All of $v$'s incident edges to vertices in its old layer are removed from the graph induced by its old layer and all of $v$'s incident edges to vertices in its new layer are added to the graph induced by its new layer. By construction $v$ has at most $d$ edges incident to vertices in its old layer, and by Lemma~\ref{lem:ins}, $v$ has at most $d$ edges incident to its new layer. 


We have shown that running \textsc{Drop} consists of operations that take total time $O(F(n)M)$ throughout the whole computation, plus operations that take time $O(d)$ for each call to \textsc{Drop}. Thus, the total time for calls to \textsc{Drop} is $O(F(n)M+pd)$. By Lemma~\ref{lem:p}, $pd\leq \frac{f_h}{4}+M=O(F(n)M)$ so the overall amortized time is $O(F(n))$.
\end{proof}

\subsection{Coloring from LDS}

\begin{proof}[Proof of Theorem~\ref{thm:bodyarbetter}] Recall that $A'$ is a fully dynamic algorithm that colors graphs of maximum degree $\Delta$ using $O(\Delta)$ colors. Further recall that such a randomized algorithm exists with $O(1)$ amortized update time in expectation and that $T'(\Delta,n)\leq \polylog(\Delta)$ is the running time of an optimal deterministic algorithm for this problem.

By definition, the graph induced by each layer has degree at most $2d'$. We assign each of the $O(\log n)$ layers a disjoint set of $O(d')$ colors and use algorithm $A'$ to dynamically color the graph induced by each layer independently. The total number of colors is $O(d'\log n)=O((D+\alpha\log n)\log n)$.

Since we are maintaining the graph induced by each layer in amortized time $O(F(n))$, the amortized number of edges inserted into or deleted from the graphs induced by each layer is $O(F(n))$. Thus, the amortized update time is $O(F(n)\cdot T'(2d',n))$.

Applying the dynamic edge orientation result of~\cite{kowalik2007adjacency}, which gives $D=O(\alpha\log n)$ and $F(n)=O(1) $ completes the proof.
\end{proof}

\subparagraph*{Acknowledgements} The authors thank Krzysztof Onak, Baruch Schieber and Virginia Vassilevska Williams for discussions.
\bibliography{bibfileFromSTOC18}

\begin{thebibliography}{10}

\bibitem{abfalter2005nucleic}
Ingrid Abfalter.
\newblock {\em Nucleic acid sequence design as a graph colouring problem}.
\newblock PhD thesis, University of Vienna, 2005.

\bibitem{ACK17}
Ittai Abraham, Shiri Chechik, and Sebastian Krinninger.
\newblock Fully dynamic all-pairs shortest paths with worst-case update-time
  revisited.
\newblock In {\em Proceedings of the 28th Annual {ACM-SIAM} Symposium on
  Discrete Algorithms, {SODA} 2017, Barcelona, Spain, January 16-19, 2017},
  pages 440--452, 2017.

\bibitem{ADKKP16}
Ittai Abraham, David Durfee, Ioannis Koutis, Sebastian Krinninger, and Richard
  Peng.
\newblock On fully dynamic graph sparsifiers.
\newblock In {\em Proc. of 57th FOCS}, pages 335--344, 2016.

\bibitem{AT07}
A.~Andersson and Mikkel Thorup.
\newblock Dynamic ordered sets with exponential search trees.
\newblock {\em J. ACM}, 54(3):13, 2007.

\bibitem{arikati1997efficient}
Srinivasa~R Arikati, Anil Maheshwari, and Christos~D Zaroliagis.
\newblock Efficient computation of implicit representations of sparse graphs.
\newblock {\em Discrete Applied Mathematics}, 78(1-3):1--16, 1997.

\bibitem{AOSS18}
Sepehr Assadi, Krzysztof Onak, Baruch Schieber, and Shay Solomon.
\newblock Fully dynamic maximal independent set with sublinear update time.
\newblock In {\em Proc. 50th STOC}, 2018 (to appear).

\bibitem{ALGP89}
Baruch Awerbuch, Michael Luby, Andrew~V Goldberg, and Serge~A Plotkin.
\newblock Network decomposition and locality in distributed computation.
\newblock In {\em FOCS, 1989., 30th Annual Symposium on}, pages 364--369. IEEE,
  1989.

\bibitem{BGKPSS15}
Nikhil Bansal, Anupam Gupta, Ravishankar Krishnaswamy, Kirk Pruhs, Kevin
  Schewior, and Clifford Stein.
\newblock A 2-competitive algorithm for online convex optimization with
  switching costs.
\newblock In {\em Proc. of APPROX-RANDOM}, pages 96--109, 2015.

\bibitem{BCKLRRV17}
Luis Barba, Jean Cardinal, Matias Korman, Stefan Langerman, Andr{\'{e}} van
  Renssen, Marcel Roeloffzen, and Sander Verdonschot.
\newblock Dynamic graph coloring.
\newblock In {\em Proceedings of the 15th International Symposium on Algorithms
  and Data Structures, {WADS} 2017, St. John's, NL, Canada, July 31 - August 2,
  2017}, pages 97--108, 2017.

\bibitem{barenboim2010sublogarithmic}
Leonid Barenboim and Michael Elkin.
\newblock Sublogarithmic distributed mis algorithm for sparse graphs using
  nash-williams decomposition.
\newblock {\em Distributed Computing}, 22(5-6):363--379, 2010.

\bibitem{BE13}
Leonid Barenboim and Michael Elkin.
\newblock Distributed graph coloring: Fundamentals and recent developments.
\newblock {\em Synthesis Lectures on Distributed Computing Theory},
  4(1):1--171, 2013.

\bibitem{BM17}
Leonid Barenboim and Tzalik Maimon.
\newblock Fully-dynamic graph algorithms with sublinear time inspired by
  distributed computing.
\newblock In {\em Proceedings of the International Conference on Computational
  Science, {ICCS} 2017, Zurich, Switzerland, June 12-14, 2017}, pages 89--98,
  2017.

\bibitem{barnier2004graph}
Nicolas Barnier and Pascal Brisset.
\newblock Graph coloring for air traffic flow management.
\newblock {\em Annals of operations research}, 130(1-4):163--178, 2004.

\bibitem{berglin2017simple}
Edvin Berglin and Gerth St{\o}lting~Brodal.
\newblock A simple greedy algorithm for dynamic graph orientation.
\newblock In {\em 28th International Symposium on Algorithms and Computation
  (ISAAC 2017)}. Schloss Dagstuhl-Leibniz-Zentrum fuer Informatik, 2017.

\bibitem{BHR17}
Aaron Bernstein, Jacob Holm, and Eva Rotenberg.
\newblock Online bipartite matching with amortized
  {\textdollar}o({\textbackslash}log{\^{}}2 n){\textdollar} replacements.
\newblock In {\em Proc. of 28th SODA}, pages 692--711, 2018.

\bibitem{bernstein2017simultaneously}
Aaron Bernstein, Tsvi Kopelowitz, Seth Pettie, Ely Porat, and Clifford Stein.
\newblock Simultaneously load balancing for every p-norm, with reassignments.
\newblock In {\em 8th Innovations in Theoretical Computer Science Conference
  (ITCS 2017)}. Schloss Dagstuhl-Leibniz-Zentrum fuer Informatik, 2017.

\bibitem{BS15}
Aaron Bernstein and Cliff Stein.
\newblock Fully dynamic matching in bipartite graphs.
\newblock In {\em Proceedings of the 42nd International Colloquium on Automata,
  Languages, and Programming, {ICALP} 2015, Kyoto, Japan, July 6-10, 2015, Part
  {I}}, pages 167--179, 2015.

\bibitem{BS16}
Aaron Bernstein and Cliff Stein.
\newblock Faster fully dynamic matchings with small approximation ratios.
\newblock In {\em Proceedings of the 27th Annual {ACM-SIAM} Symposium on
  Discrete Algorithms, {SODA} 2016, Arlington, VA, USA, January 10-12, 2016},
  pages 692--711, 2016.

\bibitem{BCHN18}
Sayan Bhattacharya, Deeparnab Chakrabarty, Monika Henzinger, and Danupon
  Nanongkai.
\newblock Dynamic algorithms for graph coloring.
\newblock In {\em Proceedings of the 29th Annual {ACM-SIAM} Symposium on
  Discrete Algorithms, {SODA} 2018, New Orleans, LA, USA, January 7-10, 2018},
  pages 1--20, 2018.

\bibitem{bodwin2016fully}
Greg Bodwin and Sebastian Krinninger.
\newblock Fully dynamic spanners with worst-case update time.
\newblock In {\em 24th Annual European Symposium on Algorithms (ESA 2016)}.
  Schloss Dagstuhl-Leibniz-Zentrum fuer Informatik, 2016.

\bibitem{bonchev1982chemical}
D~Bonchev and N~Trinajsti{\'c}.
\newblock Chemical information theory: Structural aspects.
\newblock {\em International Journal of Quantum Chemistry}, 22(S16):463--480,
  1982.

\bibitem{BLSZ14}
Bartlomiej Bosek, Dariusz Leniowski, Piotr Sankowski, and Anna Zych.
\newblock Online bipartite matching in offline time.
\newblock In {\em Proceedings of the 55th {IEEE} Annual Symposium on
  Foundations of Computer Science, {FOCS} 2014, Philadelphia, PA, USA, October
  18-21, 2014}, pages 384--393, 2014.

\bibitem{BLSZ15}
Bartlomiej Bosek, Dariusz Leniowski, Piotr Sankowski, and Anna Zych.
\newblock Shortest augmenting paths for online matchings on trees.
\newblock In {\em Proc. of 13th WAOA}, pages 59--71, 2015.

\bibitem{BF99}
G.~S. Brodal and R.~Fagerberg.
\newblock Dynamic representation of sparse graphs.
\newblock In {\em Proc. of 6th WADS}, pages 342--351, 1999.

\bibitem{CHK16}
Keren Censor{-}Hillel, Elad Haramaty, and Zohar~S. Karnin.
\newblock Optimal dynamic distributed {MIS}.
\newblock In {\em Proceedings of the 2016 {ACM} Symposium on Principles of
  Distributed Computing, {PODC} 2016, Chicago, IL, USA, July 25-28, 2016},
  pages 217--226, 2016.

\bibitem{charikar2017fully}
Moses Charikar and Shay Solomon.
\newblock Fully dynamic almost-maximal matching: Breaking the polynomial
  worst-case time barrier.
\newblock In {\em 45th International Colloquium on Automata, Languages, and
  Programming (ICALP 2018)}. Schloss Dagstuhl-Leibniz-Zentrum fuer Informatik,
  2018.

\bibitem{CDKL09}
Kamalika Chaudhuri, Constantinos Daskalakis, Robert~D. Kleinberg, and Henry
  Lin.
\newblock Online bipartite perfect matching with augmentations.
\newblock In {\em Proc. of 28th INFOCOM}, pages 1044--1052, 2009.

\bibitem{CV86}
Richard Cole and Uzi Vishkin.
\newblock Deterministic coin tossing with applications to optimal parallel list
  ranking.
\newblock {\em Information and Control}, 70(1):32--53, 1986.

\bibitem{czumaj2001testing}
Artur Czumaj and Christian Sohler.
\newblock Testing hypergraph coloring.
\newblock In {\em International Colloquium on Automata, Languages, and
  Programming}, pages 493--505. Springer, 2001.

\bibitem{danisch2018listing}
Maximilien Danisch, Oana Balalau, and Mauro Sozio.
\newblock Listing k-cliques in sparse real-world graphs.
\newblock {\em communities}, 28:43, 2018.

\bibitem{demange2009tutorial}
Marc Demange, T{\i}naz Ekim, and Dominique de~Werra.
\newblock A tutorial on the use of graph coloring for some problems in
  robotics.
\newblock {\em European Journal of Operational Research}, 192(1):41--55, 2009.

\bibitem{dietz1987two}
Paul Dietz and Daniel Sleator.
\newblock Two algorithms for maintaining order in a list.
\newblock In {\em Proceedings of the nineteenth annual ACM symposium on Theory
  of computing}, pages 365--372. ACM, 1987.

\bibitem{dietz1991persistence}
Paul~F Dietz and Rajeev Raman.
\newblock Persistence, amortization and randomization.
\newblock In {\em Proceedings of the second annual ACM-SIAM symposium on
  Discrete algorithms}, pages 78--88. Society for Industrial and Applied
  Mathematics, 1991.

\bibitem{dietz1994constant}
Paul~F Dietz and Rajeev Raman.
\newblock A constant update time finger search tree.
\newblock {\em Information Processing Letters}, 52(3):147--154, 1994.

\bibitem{dvovrak2013dynamic}
Zden{\v{e}}k Dvo{\v{r}}{\'a}k and Vojt{\v{e}}ch Tuma.
\newblock A dynamic data structure for counting subgraphs in sparse graphs.
\newblock In {\em Workshop on Algorithms and Data Structures}, pages 304--315.
  Springer, 2013.

\bibitem{eden2018testing}
Talya Eden, Reut Levi, and Dana Ron.
\newblock Testing bounded arboricity.
\newblock In {\em Proceedings of the Twenty-Ninth Annual ACM-SIAM Symposium on
  Discrete Algorithms}, pages 2081--2092. SIAM, 2018.

\bibitem{even2014deterministic}
Guy Even, Moti Medina, and Dana Ron.
\newblock Deterministic stateless centralized local algorithms for bounded
  degree graphs.
\newblock In {\em European Symposium on Algorithms}, pages 394--405. Springer,
  2014.

\bibitem{frigioni2003fully}
Daniele Frigioni, Alberto Marchetti-Spaccamela, and Umberto Nanni.
\newblock Fully dynamic shortest paths in digraphs with arbitrary arc weights.
\newblock {\em Journal of Algorithms}, 49(1):86--113, 2003.

\bibitem{garey1976application}
Michael Garey, David Johnson, and Hing So.
\newblock An application of graph coloring to printed circuit testing.
\newblock {\em IEEE Transactions on circuits and systems}, 23(10):591--599,
  1976.

\bibitem{GPS88}
Andrew~V Goldberg, Serge~A Plotkin, and Gregory~E Shannon.
\newblock Parallel symmetry-breaking in sparse graphs.
\newblock {\em SIAM Journal on Discrete Mathematics}, 1(4):434--446, 1988.

\bibitem{goldreich1998property}
Oded Goldreich, Shari Goldwasser, and Dana Ron.
\newblock Property testing and its connection to learning and approximation.
\newblock {\em Journal of the ACM (JACM)}, 45(4):653--750, 1998.

\bibitem{GKKV95}
Edward~F. Grove, Ming{-}Yang Kao, P.~Krishnan, and Jeffrey~Scott Vitter.
\newblock Online perfect matching and mobile computing.
\newblock In {\em Proc. of 45th Wads}, pages 194--205, 1995.

\bibitem{GKKP17}
Anupam Gupta, Ravishankar Krishnaswamy, Amit Kumar, and Debmalya Panigrahi.
\newblock Online and dynamic algorithms for set cover.
\newblock In {\em Proceedings of the 49th Annual {ACM} {SIGACT} Symposium on
  Theory of Computing, {STOC} 2017, Montreal, QC, Canada, June 19-23, 2017},
  pages 537--550, 2017.

\bibitem{GKS14}
Anupam Gupta, Amit Kumar, and Cliff Stein.
\newblock Maintaining assignments online: Matching, scheduling, and flows.
\newblock In {\em Proc. 25th SODA}, pages 468--479, 2014.

\bibitem{hardy2016modifying}
Bradley Hardy, Rhyd Lewis, and Jonathan Thompson.
\newblock Modifying colourings between time-steps to tackle changes in dynamic
  random graphs.
\newblock In {\em European Conference on Evolutionary Computation in
  Combinatorial Optimization}, pages 186--201. Springer, 2016.

\bibitem{hardy2017tackling}
Bradley Hardy, Rhyd Lewis, and Jonathan Thompson.
\newblock Tackling the edge dynamic graph colouring problem with and without
  future adjacency information.
\newblock {\em Journal of Heuristics}, pages 1--23, 2017.

\bibitem{HTZ14}
M.~He, G.~Tang, and N.~Zeh.
\newblock Orienting dynamic graphs, with applications to maximal matchings and
  adjacency queries.
\newblock In {\em Proc. 25th ISAAC}, pages 128--140, 2014.

\bibitem{he2017graph}
Yi~He, Changxin Gao, Nong Sang, Zhiguo Qu, and Jun Han.
\newblock Graph coloring based surveillance video synopsis.
\newblock {\em Neurocomputing}, 225:64--79, 2017.

\bibitem{honer2013computational}
Christian H{\"o}ner~zu Siederdissen, Stefan Hammer, Ingrid Abfalter, Ivo~L
  Hofacker, Christoph Flamm, and Peter~F Stadler.
\newblock Computational design of rnas with complex energy landscapes.
\newblock {\em Biopolymers}, 99(12):1124--1136, 2013.

\bibitem{jain2017fast}
Shweta Jain and C~Seshadhri.
\newblock A fast and provable method for estimating clique counts using
  tur{\'a}n's theorem.
\newblock In {\em Proceedings of the 26th International Conference on World
  Wide Web}, pages 441--449. International World Wide Web Conferences Steering
  Committee, 2017.

\bibitem{kaporis2005isb}
Alexis Kaporis, Christos Makris, George Mavritsakis, Spyros Sioutas, Athanasios
  Tsakalidis, Kostas Tsichlas, and Christos Zaroliagis.
\newblock Isb-tree: A new indexing scheme with efficient expected behaviour.
\newblock {\em Journal of Discrete Algorithms}, 8(4):373--387, 2010.

\bibitem{khor2010application}
Susan Khor.
\newblock Application of graph colouring to biological networks.
\newblock {\em IET systems biology}, 4(3):185--192, 2010.

\bibitem{KP06}
Subhash Khot and Ashok~Kumar Ponnuswami.
\newblock Better inapproximability results for maxclique, chromatic number and
  min-3lin-deletion.
\newblock In {\em Proc. 33rd ICALP}, pages 226--237, 2006.

\bibitem{KKPS14}
T.~Kopelowitz, R.~Krauthgamer, E.~Porat, and Shay Solomon.
\newblock Orienting fully dynamic graphs with worst-case time bounds.
\newblock In {\em Proc. 41st ICALP}, pages 532--543, 2014.

\bibitem{kowalik2004fast}
{\L}ukasz Kowalik.
\newblock Fast 3-coloring triangle-free planar graphs.
\newblock In {\em European Symposium on Algorithms}, pages 436--447. Springer,
  2004.

\bibitem{kowalik2007adjacency}
{\L}ukasz Kowalik.
\newblock Adjacency queries in dynamic sparse graphs.
\newblock {\em Information Processing Letters}, 102(5):191--195, 2007.

\bibitem{kowalik2006oracles}
Lukasz Kowalik and Maciej Kurowski.
\newblock Oracles for bounded-length shortest paths in planar graphs.
\newblock {\em ACM Transactions on Algorithms (TALG)}, 2(3):335--363, 2006.

\bibitem{leighton1979graph}
Frank~Thomson Leighton.
\newblock A graph coloring algorithm for large scheduling problems.
\newblock {\em Journal of research of the national bureau of standards},
  84(6):489--506, 1979.

\bibitem{LO88}
Christos Levcopoulos and Mark~H. Overmars.
\newblock A balanced search tree with \emph{ {O} }{(1)} worst-case update time.
\newblock {\em Acta Inf.}, 26(3):269--277, 1988.

\bibitem{lin2012arboricity}
Min~Chih Lin, Francisco~J Soulignac, and Jayme~L Szwarcfiter.
\newblock Arboricity, h-index, and dynamic algorithms.
\newblock {\em Theoretical Computer Science}, 426:75--90, 2012.

\bibitem{Linial87}
Nathan Linial.
\newblock Distributive graph algorithms-global solutions from local data.
\newblock In {\em Proceedings of the 28th {IEEE} Annual Symposium on
  Foundations of Computer Science, {FOCS} 1987, Los Angeles, CA, USA, October
  27-29, 1987}, pages 331--335, 1987.

\bibitem{Linial92}
Nathan Linial.
\newblock Locality in distributed graph algorithms.
\newblock {\em {SIAM} J. Comput.}, 21(1):193--201, 1992.

\bibitem{Luby93}
Michael Luby.
\newblock Removing randomness in parallel computation without a processor
  penalty.
\newblock {\em J. Comput. Syst. Sci.}, 47(2):250--286, 1993.

\bibitem{maharani2013digital}
M~Maharani, BK~Dewi, FA~Yulianto, B~Purnama, et~al.
\newblock Digital image compression using graph coloring quantization based on
  wavelet-svd.
\newblock In {\em Journal of Physics: Conference Series}, volume 423, page
  012019. IOP Publishing, 2013.

\bibitem{monical2014static}
Cara Monical and Forrest Stonedahl.
\newblock Static vs. dynamic populations in genetic algorithms for coloring a
  dynamic graph.
\newblock In {\em Proceedings of the 2014 Annual Conference on Genetic and
  Evolutionary Computation}, pages 469--476. ACM, 2014.

\bibitem{mortensen2006fully}
Christian~Worm Mortensen.
\newblock Fully dynamic orthogonal range reporting on ram.
\newblock {\em SIAM Journal on Computing}, 35(6):1494--1525, 2006.

\bibitem{Wulff-N17}
Danupon Nanongkai, Thatchaphol Saranurak, and Christian Wulff-Nilsen.
\newblock Dynamic minimum spanning forest with subpolynomial worst-case update
  time.
\newblock In {\em 2017 IEEE 58th Annual Symposium on Foundations of Computer
  Science (FOCS)}, pages 950--961. IEEE, 2017.

\bibitem{nash1964decomposition}
C~St~JA Nash-Williams.
\newblock Decomposition of finite graphs into forests.
\newblock {\em Journal of the London Mathematical Society}, 1(1):12--12, 1964.

\bibitem{NS13}
Ofer Neiman and Shay Solomon.
\newblock Simple deterministic algorithms for fully dynamic maximal matching.
\newblock In {\em Proceedings of the 45th Annual {ACM} {SIGACT} Symposium on
  Theory of Computing, {STOC} 2013, Palo Alto, CA, USA, June 1-4, 2013}, pages
  745--754, 2013.

\bibitem{OSSW18}
K.~Onak, B.~Schieber, S.~Solomon, and N.~Wein.
\newblock Fully dynamic mis in uniformly sparse graphs.
\newblock In {\em Proc. 45th ICALP}, 2018.

\bibitem{ouerfelli2011greedy}
Linda Ouerfelli and Hend Bouziri.
\newblock Greedy algorithms for dynamic graph coloring.
\newblock In {\em Communications, Computing and Control Applications (CCCA),
  2011 International Conference on}, pages 1--5. IEEE, 2011.

\bibitem{ParterPS16}
Merav Parter, David Peleg, and Shay Solomon.
\newblock Local-on-average distributed tasks.
\newblock In {\em Proceedings of the 27th Annual {ACM-SIAM} Symposium on
  Discrete Algorithms, {SODA} 2016, Arlington, VA, USA, January 10-12, 2016},
  pages 220--239, 2016.

\bibitem{PS16}
David Peleg and Shay Solomon.
\newblock Dynamic $(1 + \epsilon)$-approximate matchings: A density-sensitive
  approach.
\newblock In {\em Proceedings of the 27th Annual {ACM-SIAM} Symposium on
  Discrete Algorithms, {SODA} 2016, Arlington, VA, USA, January 10-12, 2016},
  2016.

\bibitem{preuveneers2004acodygra}
Davy Preuveneers and Yolande Berbers.
\newblock Acodygra: an agent algorithm for coloring dynamic graphs.
\newblock {\em Symbolic and Numeric Algorithms for Scientific Computing
  (September 2004)}, 6:381--390, 2004.

\bibitem{RTVX11}
Ronitt Rubinfeld, Gil Tamir, Shai Vardi, and Ning Xie.
\newblock Fast local computation algorithms.
\newblock In {\em Proc. of 1st ITCS}, pages 223--238, 2011.

\bibitem{sallinen2016graph}
Scott Sallinen, Keita Iwabuchi, Suraj Poudel, Maya Gokhale, Matei Ripeanu, and
  Roger Pearce.
\newblock Graph colouring as a challenge problem for dynamic graph processing
  on distributed systems.
\newblock In {\em High Performance Computing, Networking, Storage and Analysis,
  SC16: International Conference for}, pages 347--358. IEEE, 2016.

\bibitem{SSTT18}
Baruch Schieber, Hadas Shachnai, Gal Tamir, and Tami Tamir.
\newblock A theory and algorithms for combinatorial reoptimization.
\newblock {\em Algorithmica}, 80(2):576--607, 2018.

\bibitem{seidman1983network}
Stephen~B Seidman.
\newblock Network structure and minimum degree.
\newblock {\em Social networks}, 5(3):269--287, 1983.

\bibitem{Sol16}
Shay Solomon.
\newblock Fully dynamic maximal matching in constant update time.
\newblock In {\em Proceedings of the 57th {IEEE} Annual Symposium on
  Foundations of Computer Science, {FOCS} 2016, New Brunswick, NJ, USA, October
  9-11, 2016}, pages 325--334, 2016.

\bibitem{solomon2018local}
Shay Solomon.
\newblock Local algorithms for bounded degree sparsifiers in sparse graphs.
\newblock In {\em 9th Innovations in Theoretical Computer Science Conference
  (ITCS 2018)}. Schloss Dagstuhl-Leibniz-Zentrum fuer Informatik, 2018.

\bibitem{Thorup05}
Mikkel Thorup.
\newblock Worst-case update times for fully-dynamic all-pairs shortest paths.
\newblock In {\em Proceedings of the 37th Annual {ACM} {SIGACT} Symposium on
  Theory of Computing, {STOC} 2005, Baltimore, MD, USA, May 21-24, 2005}, pages
  112--119, 2005.

\bibitem{yuan2017effective}
Long Yuan, Lu~Qin, Xuemin Lin, Lijun Chang, and Wenjie Zhang.
\newblock Effective and efficient dynamic graph coloring.
\newblock {\em Proceedings of the VLDB Endowment}, 11(3):338--351, 2017.

\bibitem{Zuc07}
David Zuckerman.
\newblock Linear degree extractors and the inapproximability of max clique and
  chromatic number.
\newblock {\em Theory of Computing}, 3(1):103--128, 2007.

\end{thebibliography}
\appendix
\section{Balls and bins bound on recent degree}\label{app:bins}
We prove the portion omitted from the proof of Lemma~\ref{lem:deg}. Lemma~\ref{lem:app} is easy to see based on the description of the algorithm and the balls and bins game; we include the rigorous proof for the purpose of completeness.

\begin{lemma}\label{lem:app} In the special case of the balls are bins game defined in the proof of lemma~\ref{lem:deg}, the size of the largest bin is at least the maximum recent degree in the graph.
\end{lemma}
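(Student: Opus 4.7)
The plan is to prove by induction on the sequence of operations the pointwise bound $b_v \geq r_v$ for every vertex $v$ at every time step; the lemma then follows by taking the maximum over $v$. The coupling I have in mind is the natural one: each edge insertion in the algorithm triggers Player~1 to place one ball in each of the two incident bins, while Player~2's moves occur exactly at interval ends, with tie-breaking chosen to match the algorithm's reset choices.

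For an edge insertion $(u,v)$, both $b_u, b_v$ and $r_u, r_v$ each increase by exactly one, so $b_v - r_v$ is unchanged for every vertex and the invariant is preserved. For an edge deletion $(u,v)$, the bin counts are untouched while $r_u$ and $r_v$ may each decrease, so $b_v - r_v$ is non-decreasing and the invariant is again preserved.

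The delicate case is the end of interval $I$. Player~2 empties some vertex $v^* \in \arg\max_v b_v$, while the algorithm resets $r_v \to 0$ for every $v$ in the input set $S = \{v_{I'} : I' \text{ is a subinterval of } I\}$; in particular $v_I \in S$. I would show that Player~2's tie-breaking can always be coupled with the algorithm so that $v^* \in S$. Under this coupling, for $v = v^*$ both $b_{v^*}$ and $r_{v^*}$ drop to $0$; for $v \in S \setminus \{v^*\}$, $r_v$ drops to $0$ while $b_v \geq 0$ is untouched; and for vertices outside $S \cup \{v^*\}$ nothing changes. The pointwise invariant is thereby preserved.

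The hard step will be arguing that $\arg\max_v b_v \cap S$ is always nonempty so that this coupled tie-breaking is feasible. This is immediate whenever $b = r$ pointwise, because then $v_I \in \arg\max_v r_v = \arg\max_v b_v \subseteq S$. The subtlety is that edge deletions can enlarge the gap $b_v - r_v$ at individual vertices and potentially push $v_I$ out of $\arg\max_v b_v$; to resolve this I would either argue that any vertex achieving $\max_v b_v$ must be a previous $v_{I'}$ for some $I' \subseteq I$ and hence already lies in $S$, or fall back to a sorted-pointwise invariant in which the $k$-th largest $b$-value dominates the $k$-th largest $r$-value for every $k$. The sorted-pointwise version is preserved regardless of how ties are broken at interval ends, since removing the top of the $b$-multiset and replacing some values in the $r$-multiset by $0$ only weakens sorted-$r$ while sorted-$b$ loses only its maximum element; and it already implies $\max_v b_v \geq \max_v r_v$, which is the conclusion of the lemma.
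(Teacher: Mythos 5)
Your fallback to the sorted-pointwise (majorization) invariant is in spirit the same as the paper's permutation-based domination (the two formulations are equivalent), and your check that interval ends preserve it is fine. The gap is the edge-insertion case: you verify insertions against the \emph{pointwise} bound ($b_v - r_v$ unchanged), but once you have retreated to the sorted invariant you must re-verify insertions against \emph{that} invariant, and this is not automatic — in fact it can fail. Take $b=(3,1,0)$ and $r=(1,0,3)$: both sort to $(3,1,0)$, so the sorted invariant holds while the pointwise one does not. Incrementing coordinates $1,2$ in both (an insertion of edge $(1,2)$ under the identity coupling, one ball into bin $u$ and one into bin $v$) yields sorted $b=(4,2,0)$ and sorted $r=(3,2,1)$, which violates the sorted bound in the last coordinate. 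So the sorted invariant is not closed under the insertion update with the identity coupling, and your induction does not close as written. Your primary (non-fallback) route has the same underlying obstruction, which you correctly flag as the hard step: a vertex whose recent degree has been depressed by deletions can nonetheless carry the largest bin without ever having been chosen as some $v_{I'}$, so $\arg\max_v b_v$ need not meet $S$.

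The paper's proof supplies the missing idea. It maintains an explicit permutation $\sigma$ with $b_{\sigma(i)} \ge r_i$ for all $i$, and — crucially — it couples Player~1 to the algorithm \emph{through} $\sigma$: on an insertion $(u,v)$, Player~1 places balls into $b_{\sigma(u)}$ and $b_{\sigma(v)}$, not $b_u$ and $b_v$. The balls-and-bins guarantee is adversarial in Player~1, so this is a legitimate choice of strategy. With it, insertions preserve $b_{\sigma(i)} \ge r_i$ verbatim, deletions only help, and at an interval end $\sigma$ is repaired by a single transposition swapping the index Player~2 empties with the index the algorithm resets, with a two-line inequality chain confirming domination survives. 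Tracking $\sigma$ and routing Player~1's balls through it is exactly what your sorted-invariant argument is missing.
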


\begin{proof}
Let $B=b_1,\dots,b_n$ be the set of bins in the game and let $B'=b'_1,\dots,b'_n$ be the set of hypothetical bins such that at all times the number of balls in bin $b'_i$ is exactly the recent degree of vertex $i$. For $B'$ we say that Player $1'$ adds balls to $B'$ and Player $2'$ removes them. We will show that at all times the size of the largest bin in $B$ is at least the size of the largest bin in $B'$. We cannot simply say that for all $i$, $|b_i|\geq |b'_i|$ because Player 2 might empty different bins in $B$ than Player $2'$ empties in $B'$. Instead, we will show that there is a permutation $\sigma$ of $[n]$ so that each bin in $B'$ is at least as large as its corresponding bin in $B$. We say that $B$ \emph{dominates} $B'$ (with respect to $\sigma$) if for all $i$, $b_{\sigma(i)}$ contains at least as many balls than $b'_i$. We will show that $B$ indeed dominates $B'$ at all times, which completes the proof.

Initially $\sigma$ is the identity permutation and all bins are empty so $B$ trivially dominates $B'$. Suppose inductively that $B$ dominates $B'$. Only the following two types of operations could cause $B$ to stop dominating $B'$: a) balls are added to bins in $B'$, and b) balls are deleted from bins in $B$.

The first type of operation is not an issue because by definition when a ball is added to a bin $b'_i$ in $B'$ a ball is also added to bin $b_{\sigma(i)}$ in $B$. The second type of operation occurs in the deterministic game when Player 2 empties the largest bin of $B$. When this happens, Player $2'$ empties the largest bin of $B'$. Let $i$ be such that $b_{\sigma(i)}$ is the bin in $B$ that is emptied and let $j$ be such that $b'_j$ is the bin in $B'$ that is emptied. By the inductive hypothesis, $|b_{\sigma(j)}|\geq |b'_j|$ and by choice of $j$, $|b'_j|\geq|b'_i|$. Thus, $|b_{\sigma(j)}|\geq |b'_i|$.  Then, since $|b_{\sigma(i)}|\geq |b'_j|$ (both are set to 0), we can modify $\sigma$ by switching $\sigma(i)$ and $\sigma(j)$, so that $B$ still dominates $B'$.

The second type of operation also occurs in the randomized game when Player 2 empties a random bin as defined in the game. When this happens, Player $2'$ also empties a random bin as defined in the algorithm. By these definitions, Players 2 and $2'$ choose a bin randomly over the same distribution of bins (i.e. if Player 2 chooses bin $b_{\sigma(i)}$ with probability $p$, then Player $2'$ chooses bin $b'_i$ with probability $p$). We can assume that Players 2 and $2'$ share a random coin and select matching bins.
\end{proof}





\end{document}